\documentclass[11pt,a4paper]{article}

\usepackage[margin=1in]{geometry}
\usepackage[T1]{fontenc}

\usepackage{mathtools}
\usepackage{amsthm}
\usepackage{amssymb}

\usepackage[tt=false]{libertine}
\usepackage[libertine]{newtxmath}
\usepackage{bm}
\usepackage{bbm}
\usepackage{microtype}

\usepackage[ruled]{algorithm2e}

\usepackage{enumitem}

\usepackage[authoryear,square]{natbib}
\usepackage[svgnames]{xcolor}
\usepackage{hyperref}
\hypersetup{
  colorlinks=true,
  urlcolor=blue,
  linkcolor=DarkBlue,
  citecolor=DarkGreen
}

\usepackage{doi}

\usepackage{tikz}
\usetikzlibrary{positioning,calc,arrows.meta}

\usepackage[capitalise,nameinlink,noabbrev]{cleveref}

\usepackage{xspace}

\renewcommand{\vec}{\bm}

\newlist{conditions}{enumerate}{1}
\setlist[conditions]{label={(\arabic*)}, ref={\arabic*}}
\crefname{conditionsi}{Condition}{Conditions}
\Crefname{conditionsi}{Condition}{Conditions}

\crefname{algocf}{Algorithm}{Algorithms}
\Crefname{algocf}{Algorithm}{Algorithms}

\usepackage{caption}
\usepackage{pgfplots}
\pgfplotsset{compat=1.18}

\usetikzlibrary{decorations.pathreplacing}

\def \R{\mathbb R}

\newcommand{\sset}[1]{\left\{ #1\right\}}
\newcommand{\ssets}[1]{\{ #1\}}

\newcommand{\fwhs}[1]{\; | \; #1 }
\newcommand{\card}[1]{\left| #1 \right|}
\newcommand{\cards}[1]{| #1 |}
\newcommand{\union}{\cup}
     
\newcommand{\map}{\to}
 
\newcommand{\inters}{\cap}    
    
\newcommand{\then}{\Longrightarrow} 

\newcommand{\bigland}{\bigwedge}

\DeclareMathOperator*{\expectation}{\mathbb E}
\newcommand{\expect}[2][]{\expectation_{#1}\nolimits\left[#2\right]}

\DeclareMathOperator*{\probability}{\mathrm{Pr}}
\newcommand{\prob}[1]{\probability\left[#1\right]}
\newcommand{\probs}[1]{\probability[#1]}

\DeclareMathOperator*{\argmax}{argmax}

\DeclareMathOperator*{\supportdistro}{\mathrm{supp}}
\newcommand{\support}[1]{\supportdistro\left(#1\right)}

\newcommand*{\poly}{\operatorname{poly}}

\newcommand{\eps}{\varepsilon}
\renewcommand{\epsilon}{\eps}

\newcommand{\compstrat}{\textsf{\textup{ComputeStrategy}}\xspace}

\newcommand{\deltafunc}{\Delta}

\newcommand{\powidx}{\mu} %% Internal variable for the CCFPA explicit upper bound 

\theoremstyle{definition}
\newtheorem{definition}{Definition}

\theoremstyle{plain}
\newtheorem{theorem}{Theorem}[section]
\newtheorem{lemma}[theorem]{Lemma}
\newtheorem{corollary}[theorem]{Corollary}
\newtheorem{proposition}[theorem]{Proposition}

\Crefname{claim}{Claim}{Claims}
\newtheorem{inftheorem}{Informal Theorem}
\newtheorem{property}{Property}

\theoremstyle{remark}
\newtheorem{remark}{Remark}

\newcommand*{\myproofname}{Proof}
\newenvironment{nestedproof}[1][\myproofname]{\begin{proof}[#1]}{\end{proof}}

\title{Efficient Equilibrium Computation in\\ Symmetric First-Price Auctions\thanks{Aris Filos-Ratsikas was supported by the UK Engineering and Physical Sciences Research Council (EPSRC) grant EP/Y003624/1. Charalampos Kokkalis was supported by an EPSRC DTA Scholarship (Reference EP/W524384/1).}}

\author{
\begin{tabular}{c c}
& \\ \textbf{Aris Filos-Ratsikas} & \textbf{Yiannis Giannakopoulos}\\
\small{University of Edinburgh, United Kingdom} & \small{University of Glasgow, United Kingdom} \\
\href{mailto:Aris.Filos-Ratsikas@ed.ac.uk}{\small{\texttt{aris.filos-ratsikas@ed.ac.uk}}} & \href{mailto:yiannis.giannakopoulos@glasgow.ac.uk}{\small{\texttt{yiannis.giannakopoulos@glasgow.ac.uk}}}\\
& \\
\textbf{Alexandros Hollender} & \textbf{Charalampos Kokkalis}\\
\small{University of Oxford, United Kingdom} & \small{University of Edinburgh, United Kingdom} \\
\href{mailto:alexandros.hollender@cs.ox.ac.uk}{\small{\texttt{alexandros.hollender@cs.ox.ac.uk}}} & \href{mailto:charalampos.kokkalis@ed.ac.uk}{\small{\texttt{charalampos.kokkalis@ed.ac.uk}}}
\end{tabular}}

\date{July 27, 2026}

\begin{document}
\maketitle

\begin{abstract}
We study the complexity of computing Bayes-Nash equilibria in single-item first-price auctions. We present the first \emph{efficient} algorithms for the problem, when the bidders' values for the item are independently drawn from the same continuous distribution. More precisely, we design polynomial-time algorithms for the white-box model, where the distribution is provided directly as part of the input, and query-efficient algorithms for the black-box model, where the distribution is accessed via oracle calls. Our results settle the computational complexity of the problem for bidders with i.i.d.\ values.
\end{abstract}

\newpage

\section{Introduction}
\label{sec:intro}

First-price auctions have been used since antiquity for the sale of various commodities such as land, property, farming rights, livestock, or tax and public procurement contracts \citep{cassady1967auctions}. Today, these auctions are widely used in markets for advertising space, in the form of ``ad auctions'' (see, e.g., \citep{balseiro2021robust,despotakis2021first,paes2020competitive,conitzer2022multiplicative,alcobendas2022adjustment}), constituting a multi-billion-dollar industry \citep{bowman2024much}. Part of the appeal of these auctions is their simplicity and intuitive nature: the winner is the bidder with the highest bid (breaking ties uniformly at random), who pays an amount equal to her bid.

The development of the theory of first-price auctions dates back to the early 1960s and the pioneering work of \citet{vickrey1961counterspeculation}. Vickrey studied the \emph{equilibria} of the auction---or, more precisely, of the strategic game induced by the incentives of the bidders to underbid their values for the item---when values are independently drawn from the uniform distribution. Over the next decades, several variants of the auction were studied systematically in a plethora of works in economics, e.g., see \citep{griesmer1967toward,riley1981optimal,MW82,chwe1989discrete,plum1992characterization,marshall1994numerical,lebrun1996existence,lebrun1999first,maskin2000equilibrium,lizzeri2000uniqueness,Athey2001,reny2004existence}. Most of these works were mainly concerned with establishing the existence, and often the uniqueness, of equilibria of the auction. For the simpler case of \emph{symmetric} first-price auctions (most notably for i.i.d.\ values), the literature has managed to produce elegant analytic solutions, e.g., see \citep{vickrey1961counterspeculation,riley1981optimal,MW82}. Symmetric auctions capture cases where the bidders' information is generated by the same technologies or market parameters, and in fact constitute the ``textbook'' model studied in the auction literature \citep{myerson1981optimal,mcafee1987auctions,krishna2009auction}. 

While the economic theory established the existence of equilibria and managed, via the aforementioned analytic formulas, to \emph{describe} them for symmetric variants, it did not address the question of how these equilibria can be formally \emph{computed}. The computation of equilibria in games has been a focal point of research in computer science over the past two decades, e.g., see \citep{daskalakis2009complexity,chen2009settling,mehta2018constant,rubinstein2018inapproximability}. For the first-price auction, this question was formally put forward by \citet{fghlp2021_sicomp}, who studied the auction with a \emph{continuous value space} and a \emph{discrete bidding space}. This model, coined the \emph{Continuous First-Price Auction (CFPA)}\footnote{This name was given by \citet{fghk24}, to distinguish this model from the First-Price auction with discrete values and discrete bids, which they also study. Earlier works, namely \citep{fghlp2021_sicomp,chen2023complexity} referred to this auction as simply the \emph{First-Price Auction (FPA)}. While we do not consider variants with discrete value spaces in our work, we adopt the terminology of \citet{fghk24} for consistency with the more recent literature.} is particularly appealing, because it maintains the attractive equilibrium existence properties (e.g., see \citep{Athey2001}), but it is also very appropriate for computational purposes. 

A series of works, including and following \citep{fghlp2021_sicomp}, provided computational hardness results for computing equilibria of the auction. However, these results only concern variants of the auction with rather general assumptions on the value distributions, namely \emph{subjectivity} \citep{fghlp2021_sicomp,fghk24} or \emph{arbitrary correlation} \citep{fghk2025}, or variants with \emph{non-standard tie-breaking rules} \citep{chen2023complexity}. On the other hand, the positive results obtained in these works, even for simple symmetric settings like auctions with i.i.d.\ values, are somewhat limited: currently, the best known result for these settings is a polynomial-time approximation scheme (PTAS), i.e., an algorithm which computes an $\varepsilon$-approximate equilibrium in time polynomial in the input parameters, but possibly exponential in $1/\varepsilon$, see \citep{fghlp2021_sicomp,fghk2025}. This leaves us with the following important question for the fundamental setting of symmetric auctions: \emph{``Can we design efficient algorithms for computing their equilibria?''.}

In the quest to answer this question, we immediately encounter a first major challenge: For settings with discrete bids (i.e., the CFPA), the literature has not in general been able to provide analytic formulas that describe the equilibria of the auction; indeed, such formulas are only available for continuous bidding spaces. Instead, the classic results in the literature establish merely the \emph{existence} of equilibria of the CFPA via applications of fixed-point theorems, like Brouwer's or Kakutani's fixed-point theorem (e.g., see \citep{Athey2001,fghlp2021_sicomp}). As a result, these existence proofs do not provide any real insights for the design of efficient algorithms for equilibrium computation. Therefore, in order to design such algorithms, we essentially need to come up with new, \emph{constructive} proofs for establishing equilibrium existence.  \medskip 

\noindent Our main result in this paper is the \emph{first efficient algorithm} for computing equilibria in the symmetric CFPA.

\begin{inftheorem}\label{infthm:main-cfpa}
A (symmetric) equilibrium of the symmetric CFPA can be computed efficiently.
\end{inftheorem}

\noindent We consider the result of \cref{infthm:main-cfpa} to be quite important in the context of this literature. On the one hand, it improves significantly over the state-of-the-art algorithms of \citet{fghlp2021_sicomp,fghk2025} for symmetric settings. Additionally, it categorically establishes that any attempts at showing computational hardness results for the equilibrium computation problem would have to be on non-symmetric variants of the auction, i.e., the general \emph{independent private values (IPV)} setting, e.g., see \citep{MW82}. 

\paragraph{Auctions with Continuous Bids.} While in the context of equilibrium computation, the CFPA is the most established variant of the auction, it is still instructive to consider the variant where both the value space and the bidding space are \emph{continuous}. This setting was coined the \emph{Continuous First-Price Auction with Continuous Bids (CCFPA)} by \citet{fghk2025}, who studied it merely as a technical tool for proving their results about the CFPA. 

Continuous bidding spaces are most often used as a convenient mathematical abstraction of real-world bidding, which naturally comes in discrete increments \citep{cox1988theory,rothkopf1994role,david2007optimal}. Indeed, this continuity has allowed researchers to come up with the analytic formulas that we mentioned earlier to describe the equilibria. For the case of symmetric settings, these formulas are due to \citet{riley1981optimal} and \citet{MW82}. From a computational complexity perspective, it is very natural to ask whether we can somehow leverage these analytic formulas to design efficient algorithms for computing equilibria. In principle, this seems like a less challenging task compared to the CFPA, where such formulas were not at our disposal.

Still, this endeavor is far from trivial, primarily because these analytic solutions are not readily appropriate for computational purposes. Hence, making use of them to obtain polynomial-time algorithms requires careful consideration of the representation of the inputs and outputs to the problem and a meticulous ``translation'' of the mathematical insights to computational methods. This is indeed conceptually simpler, but requires significant technical work. To this end, we have the following informal theorem, which closely mirrors the result of \cref{infthm:main-cfpa}.

\begin{inftheorem}\label{infthm:main-ccfpa}
A (symmetric) equilibrium of the symmetric CCFPA can be computed efficiently.
\end{inftheorem}

\noindent Prior to our work, results about the complexity of equilibrium computation in the CCFPA were not known in the literature. In addition, we believe that our techniques, which we highlight in \cref{sec:ccfpa-results} below, could be useful for establishing similar equilibrium computation results for other auction formats, besides the first-price auction. 

\subsection{Our Setting, Results, and Techniques}
\label{sec:our-results}
We study first-price auctions in which a set of $n$ bidders submit their bids for one item for sale. Each bidder knows her own value for the item, and has incomplete information about the values of her competitors, via a (common) continuous distribution, from which the values are independently drawn (i.i.d.). Our main focus is on the established variant of the Continuous First-Price Auction (CFPA), but we also consider the Continuous First-Price Auction with Continuous Bids (CCFPA). Our goal is to design \emph{efficient algorithms} that compute equilibria, and in fact, even \emph{symmetric equilibria} of the auction. These are (Bayes-Nash) equilibria in which every bidder employs the same strategy, i.e., the same mapping from values to bids. In fact, considering \emph{approximate} equilibria might be necessary, as exact equilibria might involve irrational numbers in their description \citep{fghlp2021_sicomp}, and hence cannot be found in standard models of computation.

To be more precise, when the distribution is given to the algorithm directly as part of the input, an algorithm is deemed ``efficient'' if it runs in polynomial time in the size of the input representation. This representation can be \emph{explicit}, i.e., via some succinctly described function (e.g., piecewise polynomial), or \emph{implicit}, where the input is some meaningful computational device (e.g., an appropriate arithmetic circuit) that the algorithm can use to compute the distribution's cdf. We also (for the first time in the literature of the problem) consider the \emph{black-box} model, where the algorithm has \emph{query} or \emph{oracle access} to the distribution. In this case, the efficiency of the algorithm is measured by the number of queries that it performs to the cdf oracle, and the number of arithmetic operations that it performs.

Furthermore, we will consider an algorithm to be efficient if it computes either an \emph{exact} equilibrium whenever possible (i.e., whenever the equilibrium can be described using rational numbers), or otherwise an \emph{approximate} equilibrium with the best possible accuracy (ideally, exponentially small error). An algorithm that computes an $\varepsilon$-approximate equilibrium in time polynomial in the input size and $1/\varepsilon$ will be referred to as a \emph{Fully Polynomial-Time Approximation Scheme (FPTAS)}, whereas a PTAS, mentioned earlier, runs in polynomial time for every \emph{fixed} $\varepsilon$ (but may be exponential in $1/\varepsilon$). For a more in-depth discussion of these (standard) approximation notions, the interested reader is referred to the classic textbook of~\citet{Vazirani2001a}. In the black-box model, we will deem an algorithm efficient if it computes an $\varepsilon$-approximate equilibrium using the (provably, qualitatively) smallest possible number of queries as a function of $\varepsilon$, as well as a polynomial number of arithmetic operations.  

\subsubsection{Algorithms for the CFPA.}\label{sec:cfpa-results}

As we mentioned above, the literature in economics has not in general managed to provide analytic solutions for the equilibria of the CFPA.
In order to come up with a general-purpose efficient algorithm for computing equilibria of the CFPA, we would need to provide a \emph{new, constructive proof of existence} of (approximate) equilibria, via an efficient computational method. 

This is precisely what we do in \cref{sec:cfpa}. By unfolding the details of \cref{infthm:main-cfpa} for this setting, we obtain an algorithm which computes an $\varepsilon$-approximate equilibrium, and 
\begin{itemize}
    \item[-] in the black-box model, it performs a number of queries which is polynomial in $\log(1/\varepsilon)$ and $\log(L)$, where $L$ is the Lipschitz parameter of the distribution, as well as a polynomial number of arithmetic operations, and 
    \item[-] in the model where the distribution is given in the input as an arithmetic circuit, it runs in time polynomial in $\log(1/\varepsilon)$ and $\log(L)$ and the other parameters of the input.
\end{itemize}
For the case of the explicit representation of the distribution via a piecewise polynomial function, the dependence on the Lipschitz parameter above vanishes, as it is ``absorbed'' in the description of the polynomial pieces. 

\paragraph{Overview of our techniques for the CFPA.} As we explain in detail in \cref{sec:prelims-bids}, we will be interested in computing \emph{monotone} equilibria, i.e., equilibria which assign higher bids to higher values, similarly to \citep{fghlp2021_sicomp,chen2023complexity,fghk24,fghk2025}. Such equilibria are known to exist in general first-price auctions \citep{Athey2001}, and computing them makes our positive results even stronger. It is known that monotone equilibria can be represented in a convenient and succinct way, by means of \emph{jump points} $0=s_0 \leq s_1 \leq \ldots \leq s_m=1$; informally, jump point $s_j$ denotes the value for which a bidder switches from using bid $b_{j}$ to $b_{j+1}$. Our goal will thus be to efficiently compute an equilibrium set of jump points. 

At the heart of our algorithm (\cref{alg:cdfpa-outer}) lies the following crucial observation: if we could find an equilibrium utility $U$ of any (due to symmetry) bidder when she has the highest possible value (i.e., $v=1$) for the item, then we would be able to efficiently compute a symmetric equilibrium. This is achieved by a subroutine which we refer to as \compstrat\ (see \cref{alg:cdfpa-inner}). At a high level, \compstrat computes the set of jump points ``from the top'', i.e., starting from $s_m$, and then computing any jump point $s_{i-1}$ after having computed the ``previous'' jump points $s_{m}, \ldots, s_{i}$. To do that, the algorithm maintains a \emph{candidate utility level} $U$ and three \emph{branches}, namely (a) the first (or ``if'') branch, (b) the second (or ``else if'') branch, and (c) the third (or ``else'') branch, see \cref{alg:cdfpa-inner} for reference. The ``normal'' computation branch is the third one: indeed, given the jump points $s_m,\ldots, s_i$, the algorithm performs binary search on the value space to find a position of jump point $s_{i-1}$ which guarantees that the expected utility of the bidder is very close to $U$. The proximity is determined by a parameter $\delta$ which depends on the equilibrium approximation parameter $\varepsilon$, and which determines the termination condition of the binary search.

Sometimes however, this ``normal'' computation of $s_{i-1}$ might fail. One potential reason is that at the equilibrium, $s_{i-1}$ might need to be ``skipped''; the appropriate condition for that is identified in the first branch, where $s_{i-1}$ is simply set to $s_i$, i.e., the previously computed jump point. Another, more significant failure mode is encountered in the second branch: in this case, the algorithm identifies that given the choice of $s_m,\ldots,s_{i}$, there is no way to choose the remaining jump points so that the expected utility of the bidder is very close to the candidate utility $U$. Essentially, this means that $U$ is not an equilibrium utility.  

The important point here is that in that case, i.e., when we run \cref{alg:cdfpa-inner} to obtain an equilibrium from a candidate utility of $U$ at $v=1$ which does not correspond to an equilibrium utility, the method will still return a set of jump points $(s_0,\ldots, s_m)$, just not equilibrium jump points. Still, these points convey important information: we show that (a) for $U=0$, we have that $s_0=0$, and (b) for $U=1$, we have that $s_0 > 0$. This suggests the following natural approach, which we adopt in \cref{alg:cdfpa-outer}: we perform binary search on $U$ in order to find a value $U^*$ such that $s_0$ is ``barely above'' $0$; indeed, we show that, given $U^*$, we can efficiently compute an approximate equilibrium. The most challenging part of the analysis is to show that our method is continuous, i.e., that the jump points that it outputs depend continuously on the value of $U$. We provide the algorithm and the full details of the analysis in \cref{sec:algorithms-description} and \cref{sec:algorithm-correctness}.  

\subsubsection{Algorithms for the CCFPA.}\label{sec:ccfpa-results} 
As opposed to the CFPA (see~\cref{sec:cfpa-results}), once we allow for continuous bidding spaces, seminal work from the economics literature provides elegant analytic formulas describing (exact) equilibria of CCFPA settings (see~\cref{sec:ccfpa}). This readily gives us a strong starting point in our quest to compute (approximate) equilibria in CCFPA. Nevertheless, turning this understanding to concrete approximation algorithms and characterizing the computational complexity of the problem turns out to be a non-trivial task, which has not been pursued before. More precisely, we are able to obtain the following results:
\begin{itemize}[leftmargin=*]
    \item[-] For the black-box model, in \cref{sec:black-box} we present an efficient algorithm which computes an $\varepsilon$-approximate equilibrium using a number of queries which is polynomial in $1/\varepsilon$, as well as a polynomial number of arithmetic operations (see~\cref{th:black-box-upper}). Importantly, we further complement this result with an asymptotically matching query-complexity lower bound, establishing that our algorithm is the best possible (\cref{th:black-box-lower}).\smallskip
    
    \item[-] Via a careful inspection of our analysis for the black-box model above, it readily follows that the same algorithm can be transformed into an FPTAS for the case when the distribution is provided in the input implicitly via an arithmetic circuit (see \cref{sec:white-box}). \smallskip

    \item[-] Finally, in \cref{sec:CCFPA-explicit} we consider the setting where the (cdf of the) distribution is explicitly provided in the input of our algorithms, via a piecewise polynomial function --- a standard representation model in the literature of the problem (see, e.g., \citep{fghlp2021_sicomp,fghk24,fghk2025}). In this case, we provide a different algorithm, which computes an \emph{exact} equilibrium in polynomial time, hence strengthening the FPTAS above, for this very natural special case. As an interesting by-product of our analysis for this case, we can provably establish that there exists an exact equilibrium with rational bids on rational input values. 
\end{itemize}

\paragraph{Overview of our techniques for the CCFPA.}
As mentioned above, for CCFPA we already have access to simple analytic formulas that describe the ``canonical'', monotone exact equilibrium of the auction (see~\eqref{eq:CCFPA-IID-BNE-beta}). Therefore, it is tempting to think that any naive numerical-analysis approach of simply discretizing and approximating this continuous equilibrium-bidding function is all that is needed to provide the desired algorithm for our case. However, it turns out that there are a few intricate obstacles that one needs to handle under this approach, including: the requirement that, for \emph{any} desired approximation guarantee $\varepsilon>0$ for our equilibrium, our underlying computed object needs to satisfy various requirements, including no-overbidding and continuity, but most notably, monotonicity. Additionally, we need to make sure that our computed object is not just ``near enough'' to the ``ideal'' analytic function goal, but rather that it is provably an $\varepsilon$-approximate equilibrium as well. Our proof of~\cref{th:black-box-upper} takes great care into maintaining these properties; for example, observe that our upper/lower Riemann-sum approximation in \cref{eq:CCFPA-black-box-lower-riemann,eq:CCFPA-black-box-upper-riemann,eq:CCFPA-black-box-x-specific-discretization,eq:CCFPA-black-box-beta-riemann-bounds} is not naively taken in an equidistant-discretization way, but instead the last point $\hat{a}_{k_x+1}(x)=x$ needs to be ``variable'' and, furthermore, the upper sum is taken as our output object (see~\eqref{eq:CCFPA-black-box-FPTAS-tilde-beta-def}) and not the lower one (otherwise, monotonicity and no-overbidding will not be ensured).

Our matching lower-bound construction in~\cref{th:black-box-lower} is a ``classic'' information-theoretic, query-complexity construction: we explicitly construct two value distributions (see~\cref{fig:black-box-lower-bound}) that: (1) are indistinguishable with fewer than $\varOmega(1/\varepsilon)$ cdf queries, but at the same time (2) no $\varepsilon$-approximate equilibrium with respect to the first one can be a valid $\varepsilon$-approximate equilibrium of the other.

Finally, for the case where we a priori set a piecewise-polynomial input representation language for our value distributions, things become more concrete analytically: in principle, the problem boils down to carefully deriving and plugging-in these closed-form distribution formulas into the exact equilibrium expressions from the economics literature~\eqref{eq:CCFPA-IID-BNE-beta}. However, this high-level idea does not readily translate into an (efficient) \emph{algorithmic} way for arriving at an equilibrium. This is exactly what we focus on in~\cref{sec:CCFPA-explicit} where we provide a systematic way to recursively compute, in polynomial time, a set of parameters (whose magnitude's bit-size is shown to be polynomially bounded) that can be then used to express the desired equilibrium bidding-function, succinctly, as a ratio of two polynomials (see~\eqref{eq:ccfpa-poly-time-explicit-rational-expression}).   

\subsection{Related Work and Discussion}
The systematic study of equilibrium computation in first-price auctions was initiated by \citet{fghlp2021_sicomp}, where the CFPA was formally introduced and studied. The main result of this work is the computational hardness (specifically PPAD-hardness and FIXP-hardness, see \citep{etessami2010complexity}) of computing equilibria of the auction for distributions that are not only different for every agent, but also \emph{subjective}, i.e., different from the \emph{perspective} of each agent. \citet{chen2023complexity} subsequently showed a similar PPAD-hardness result without the subjectivity assumption, but for an involved (trilateral) tie-breaking rule, rather than the standard uniform tie-breaking that is typically used in the literature of the auction, e.g., see \citep{vickrey1961counterspeculation,riley1981optimal,MW82,Athey2001,krishna2009auction}. \citet{fghk24,fghk2025} showed corresponding NP-hardness results for a variant of the auction with discrete distributions, under the same subjectivity assumption, or for distributions that are arbitrarily correlated between bidders. 

In terms of positive results, \citet{fghlp2021_sicomp} provided a polynomial-time algorithm for the case where both the number of bidders and the number of bids is fixed, which employs off-the-shelf results of \citet{grigor1988solving} for approximately solving systems of polynomial inequalities. Using a ``bidding shrinkage lemma'' due to \citet{chen2023complexity}, \citet{fghk24,fghk2025} observed that the same algorithm can be transformed into a PTAS for a fixed number of bidders, and any bidding space. Furthermore, the dependency on the number of bidders can be removed for symmetric auctions, yielding a PTAS that holds under no further assumptions. As we explained earlier, a PTAS is only efficient when the approximation parameter $\varepsilon$ is constant; in contrast, our algorithm in \cref{sec:cfpa} achieves the best possible (inversely exponential) approximation for the same setting.

As we mentioned earlier, constructive proofs of equilibrium existence for the CFPA were not generally known in the literature before our work. To the best of our knowledge, the only known result of this flavour is due to \citet{chwe1989discrete}, who studies the equilibria of the first-price auction under two crucial assumptions, namely that the (discrete) bids are \emph{equidistant}, and that the values are \emph{uniformly distributed}. Chwe's equilibrium existence result can be interpreted as a \emph{non-computational} method to describe an equilibrium of the auction in this special case. Our work goes beyond the result of \citet{chwe1989discrete}; it does not impose any restrictions on the distributions or the structure of the bidding space, and at the same time provides an efficient algorithm for computing approximate equilibria of the auction. 

To the best of our knowledge, the complexity of equilibrium computation in the CCFPA had not been studied prior to our work. This variant of the auction was studied only as a technical tool by \citet{fghk2025}, as part of their ``bid densification'' approach: in a quest to achieve positive results for the CFPA, the authors consider how its equilibria would be approximated by equilibria of the CCFPA, for which analytic solutions are known (by the seminal work of~\citet{MW82}). Despite this connection, the authors did not show how to compute equilibria of the CCFPA in polynomial time. Furthermore, their resulting algorithm for the CFPA, although displaying a graceful dependency on $\varepsilon$, crucially also depends on other, non-controllable parameters such as explicit bounds on the pdf of the distribution and the granularity of the bidding space. For our results in \cref{sec:cfpa} we do not impose any such assumptions (other than a mild Lipschitz-continuity assumption).

In a complementary direction, \citet{ahunbay2024uniqueness} study structural properties of equilibria in standard first-price auctions, showing uniqueness of Bayesian coarse correlated equilibria (and extending the analysis to all-pay auctions).
There is also a growing line of work that employs tools from online learning and optimization to characterize (and in some cases compute) stable outcome notions in first-price auctions, particularly in repeated and non-stationary environments \citep{Feng2021NoRegretAuctions,bichler2023dual,wen2025jointvalueestimationbidding,hu2025learningbidnonstationaryrepeated}.

\section{Preliminaries}
\label{sec:prelims}

\subsection{Symmetric Auctions}
\label{sec:auctions}
In a \emph{symmetric (continuous, Bayesian) first-price auction}, there is a set $N=[n]$ of bidders and one item for sale (where $[n]\coloneq\{1,2,\ldots,n\}$). There is a common set of values $V = [0,1]$ (the \emph{value space}) and a set of bids $B \subseteq [0,1]$ (the \emph{bidding space}); each bidder $i$ has a value $v_i \in V$ for the item and submits a bid $b_i \in B$.
Similarly to previous work \citep{fghk24,chen2023complexity,fghk2025}, we assume that $0 \in B$, which can be interpreted as the option to abstain from the auction. 

Given a \emph{bid profile} $\vec{b}=(b_1,\ldots,b_n)$, the item is sold to one of the highest bidders $i^{*} \in W(\vec{b})\coloneq \argmax_{i \in N}b_i$ for a payment equal to her bid $b_{i^*}$, breaking ties \emph{uniformly at random}. 
We define the \emph{ex-post} utility of bidder $i$ with (true) value $v_i$, with respect to submitted bids $\vec{b}$, by:

\begin{equation}
\label{eq:ex_post_utilities}
\tilde{u}_i(\vec{b};v_i) \coloneq 
\begin{cases}
\frac{1}{\cards{W(\vec{b})}}(v_i-b_i), & \text{if}\;\; i\in W(\vec{b}), \\
0, & \text{otherwise}, 
\end{cases}
\end{equation}

\paragraph{Value Distributions.} The value of each bidder is private information to her only; bidders have \emph{incomplete information} about the values of others. 
More precisely, we assume that, from the perspective of some bidder $i$, the values of the other bidders for the item are i.i.d.\ random variables, drawn from an (absolutely) continuous distribution (with cdf) $F$ over $V=[0,1]$.  
We let $f$ denote the distribution's pdf and $\support{F}$ its closed support. Note that, without loss of generality, $\support{F}$ can be assumed to be a union of (countably many, disjoint)
subintervals of $[0,1]$. 
We use $\underline{v}\coloneqq \inf \support{F}$ to denote the leftmost point of $F$'s support; the distribution $F$ that $\underline{v}$ refers to will be clear from the context. 

\paragraph{Bidding Strategies.} A \emph{(pure, bidding) strategy} of a bidder $i$ is a function $\beta: V \map B$ mapping values to bids. 
A \emph{strategy profile} is a collection of strategies, one for each bidder, i.e., a vector $\vec{\beta}=(\beta_1,\ldots,\beta_n) \in B^V \times \ldots \times B^V$. 
In this work, we will only be interested in \emph{symmetric} strategy profiles, in which $\beta_i = \beta_j$ for all $i,j \in N$; in this case, we will slightly abuse notation and use $\beta \in B^V$ to denote the symmetric strategy profile $\vec{\beta} = (\beta,\ldots,\beta)$. 

\paragraph{Utilities and Equilibria.} Given a symmetric strategy profile $\beta$, the (interim) \emph{utility} of any bidder $i$ with (true) value $v \in V$, when she bids $b \in B$, is given by
\begin{align}
    u(b,\beta;v) 
    &\coloneq\expect[\vec{v}_{-i}\sim \times_{j=1}^{n-1} F]{\tilde{u}_i(b,\beta(\vec{v}_{-i});v)}, \label{def:utility-interim}
\end{align}
where $\beta(\vec{v}_{-i})$ denotes the vector $\times_{j\in N\setminus{\ssets{i}}}\beta(v_j)$. 
Notice that the subscript $i$ is deliberately omitted (that is, we simply write $u$ instead of $u_i$), due to the symmetry of definition~\eqref{def:utility-interim}. Whenever the strategy $\beta$ is clear from the context, we will drop it from the notation of the utility and simply write $u(b;v)$ instead.

\medskip
In this paper, we consider the following standard notion of (approximate) equilibrium for symmetric auctions, e.g., see \citep{fghk2025,krishna2009auction,Milgrom_2004}.

\begin{definition}[$\varepsilon$-approximate symmetric Bayes-Nash equilibrium of
the FPA]\label{def:BNE} Let $\varepsilon
\geq 0$. A symmetric strategy profile $\beta$
is an \emph{(interim) $\varepsilon$-approximate symmetric (pure) Bayes-Nash equilibrium
($\varepsilon$-BNE)} if, for any value $v \in
\support{F}$, it holds that:
\begin{equation}
\label{eq:MBNE-def-condition-full}
u(\beta(v),\beta;v) \geq u(b,\beta;v) - \varepsilon \qquad \text{for all}\;\; b\in B.
\end{equation}
We will refer to a $0$-BNE as an \emph{exact} symmetric BNE.
\end{definition}

As it is standard in the literature, we will be interested in equilibria that satisfy two desirable properties, namely, \emph{no overbidding} and \emph{monotonicity}.
The former property stipulates that bidders do not submit bids larger than their values, as doing that would always result in non-positive utility. The latter property considers strategies that are non-decreasing in the bidder's value. From now on, we will only consider symmetric equilibria which are both non-overbidding and monotone.

\begin{remark}
For positive results like the ones presented in our work, restricting attention to symmetric, non-overbidding, monotone equilibria is without loss of generality, as it only makes the results stronger. 
\end{remark}

\subsection{Distribution Representation} 
\label{sec:represent}

Since $F$ is a continuous distribution, we need to discuss how to represent it in a way that is suitable for computational purposes. We will consider three different representations, all of which are standard in the literature of related problems with continuous functions in their inputs, e.g., see \citep{deligkas2021two,hollender2025envy,fghlp2021_sicomp}. \medskip

\noindent \textbf{Black-box (Oracle) Model:} In this model, we assume that the algorithm has \emph{oracle} or \emph{query access} to the distribution, i.e., the distribution will be accessed via \emph{queries} that input a value $x\in V=[0,1]$ and output the cdf $F(x)$ of the distribution.  The efficiency of an algorithm will then be measured by the (worst-case) number of such queries that it performs. Furthermore, for our positive results, we will make sure that the number of any additional arithmetic operations needed is polynomial (in the relevant parameters). Such a model is of particular interest for applications where the designer cannot have perfect knowledge of the distributional value prior, but can perform some kind of ``market analysis'' and ``estimate'' the bidder population quantiles. \medskip

\noindent \textbf{White-box Implicit Model:}  In this model, we assume that the algorithm has access to some ``meaningful'' computational device, e.g., a Turing machine or an appropriate arithmetic circuit (see, e.g., \citep[Appendix B]{deligkas2021two}). The algorithm will then access the cdf of the distribution for a given value $x \in V=[0,1]$ via ``feeding'' $x$ to this device and obtaining $F(x)$. Since the input to the algorithm includes a description of this device rather than an abstract cdf oracle, this model is typically referred to as the \emph{white-box} model. 

Concretely, we will assume that the input to the algorithm is a \emph{well-behaved arithmetic circuit} $x\mapsto F(x)$, as defined in \citep[Section 3.1.3]{fearnley2021complexity}. These are circuits with gates performing the arithmetic operations $\{+,-,\times,\max,\min,>\}$ and rational constants, with the additional restriction that, for any computational path that leads to an output, there is at most a logarithmic number of multiplication gates. This restriction is required to ensure that the circuit does not ``internally'' generate numbers of magnitude (inversely) doubly-exponential (in the size of the input and the description of the circuit), preventing its efficient evaluation. Furthermore, we will assume that the input to the problem also includes a parameter $L$, such that the distribution is $L$-Lipschitz continuous.\footnote{That is, $\card{F(x)-F(y)}\leq L\cdot \card{x-y}$ for all $x,y\in[0,1]$. See also~\cref{def:lipschitz}.}
\medskip

\noindent \textbf{Explicit Model:} In this model, we assume that the distribution can be represented in a succinct form, and thus can be given directly as input to the algorithm. More precisely, following \citep[Sec.~6]{fghlp2021_sicomp}, we will assume that the cdf of the distribution is piecewise polynomial, and hence it will be provided in the input in terms of the coefficients of the polynomial in each piece.  Formally, the distribution is given in the input by:
    \begin{itemize}\label{page:represent}
    \item a partition of $V$ into $k$ subintervals $\sset{[v_{j-1},v_j]}_{j=1}^k$, given by their (rational) end-points $$0=v_0<v_1<v_2<\ldots<v_k=1;$$
    \item for each $j\in[k]$, a vector of $(d+1)$ rationals $(a_{j,0},a_{j,1},\dots,a_{j,d})$; this induces the semantics that the value of the cdf $F$ within the $j$-th interval is given by the $d$-degree polynomial
    \begin{equation}
        \label{eq:piecewise-poly-represent-cdf}
        F_j(x)\coloneqq \sum_{\ell=0}^d a_{j,\ell} x^\ell,\qquad\forall x\in [v_{j-1},v_{j}].
    \end{equation}
    \end{itemize}
    Note that, in order for this representation to be valid, it needs to respect the boundary conditions
$$
F_1(0)=0, \qquad F_k(1)=1,\quad\text{and}\quad F_j(v_j)=F_{j+1}(v_j)\quad\text{for all}\;\; j\in[k-1],
$$
as well as (weak) monotonicity;
it is not hard to see that all these are properties checkable in polynomial time. 
\medskip

\noindent Our algorithms for our positive results in \cref{sec:cfpa} and \cref{sec:ccfpa} will be applicable to both the black-box and the white-box models, by ensuring that, besides a polynomial number of oracle calls, they perform at most polynomially-many arithmetic operations, and the magnitude of associated numerical values is polynomially bounded.  

\subsection{Bidding Space}\label{sec:prelims-bids}
So far, we have not imposed any conditions on the bidding space $B$. Following the related literature on the computational complexity of the problem (see, e.g., \citep{fghlp2021_sicomp,fghk24,fghk2025,chen2023complexity}), for our main results in \cref{sec:cfpa} we will assume that the bidding space $B$ is \emph{discrete}. More precisely, $B=\{b_1,b_2,\ldots,b_m\}$ is a \emph{finite} subset of $[0,1]$, which is given explicitly as part of the input, by listing its elements\footnote{For all bidding spaces that we consider, we will have $b_1=0$, which can naturally be interpreted as the choice of the bidder to abstain from the auction.} $0=b_1 < b_2< \dots < b_m < 1$.  Again following the literature, we will refer to this variant as the \emph{Continuous First-Price Auction (CFPA)}.\footnote{The term ``continuous'' here refers to the value space which is the whole interval $[0,1]$, as opposed to discrete values spaces which have also been studied in the literature, e.g., in \citep{fghk24,fghk2025}.} 

As a result, every (monotone) bidding strategy $\beta: V\map B$ in the CFPA is a piecewise-constant function and will be represented by specifying \emph{jump points} $0=s_0\leq s_1\leq \ldots \leq s_m=1$ such that $\beta(v)=b_j$ when $v \in (s_{j-1},s_j]$ (and $\beta(s_0)=\beta(0)=b_1=0)$.
Notice that the ordering of the jump points is not strict; this means that some of them might coincide, resulting in some bids of $B$ not being ``used'' by $\beta$.
This will be the canonical representation of strategies, and hence of equilibria, throughout the paper; see \cref{fig:discretebidding} for a pictorial view. We also note that previous works on the problem \citep{fghlp2021_sicomp,chen2023complexity,fghk2025} use the same representation for the output of the auction.

\begin{figure}[t]\begin{center}
    \begin{tikzpicture}[scale=2.0]
        \draw[->] (0,0) -- (5,0);
        \draw[->] (0,0) -- (0,3);
        \node[below] at (0,0) {$s_0 = 0$};
        \node[below] at (0.5,0) {$s_1$};
        \node[below] at (0.9,0) {$s_2$};
        \node[below] at (2.5,-0.2) {$s_{j}$};
        \node[below] at (2.5,0) {$s_{j-1}$};
        \node[below] at (3.6,0) {$s_{m-1}$};
        \node[below] at (4.2,0) {$s_{m}=1$};
        \node[left] at (0,0) {$b_1 = 0$};
        \node[left] at (0,1.5) {$b_{j}$};
        \node[left] at (0,0.4) {$b_{2}$};
        \node[left] at (0,0.6) {$b_{3}$};
        \node[left] at (0,1) {$b_{j-1}$};
        \node[left] at (0,1.8) {$b_{j+1}$};
        \node[left] at (0,2.25) {$b_{m-1}$};
        \node[left] at (0,2.5) {$b_{m}$};
        \coordinate (b1l) at (0,0.0) {};
        \coordinate (b1r) at (0.5,0.0) {};
        \coordinate (b2l) at (2,1) {};
        \coordinate (b2r) at (2.5,1) {};
        \coordinate (b3l) at (2,1.8) {};
        \coordinate (b3r) at (2.5,1.8) {};
        \coordinate (b4l) at (3.6,2.5) {};
        \coordinate (b4r) at (4.2,2.5) {};
        \draw[black!50!gray,ultra thick] (0.5,0.4) -- (0.9,0.4);
        \draw[black!50!gray,ultra thick] (b1l) -- (b1r);
        \draw[black!50!gray,ultra thick] (2,1) -- (b2r);
        \draw[black!50!gray,ultra thick] (2.5,1.8) -- (b3r);
        \draw[black!50!gray,ultra thick] (b4l) -- (b4r);
        \draw[black!50!gray,ultra thick] (2.5,1.8) -- (2.7,1.8);
        \draw[black!50!gray,ultra thick] (0.9,0.6) -- (1.2,0.6);
        \draw[black!50!gray,ultra thick] (3.3,2.25) -- (3.6,2.25);
        \draw[dotted] (0,0.4) -- (0.5,0.4);
        \draw[dotted] (0.5,0) -- (0.5,0.4);
        \draw[dotted] (0.9,0) -- (0.9,0.6);
        \draw[dotted] (2.5,1.8) -- (2.5,0.0);
        \draw[dotted,ultra thick] (1.2,0.6) -- (1.4,0.6);
        \draw[dotted] (0,1) -- (1.8,1);
        \draw[dotted, ultra thick] (b2l) -- (1.8,1);
        \draw[dotted, ultra thick] (2.7,1.8) -- (2.9,1.8);
        \draw[dotted, ultra thick] (3.1,2.25) -- (3.3,2.25);
        \draw[dotted] (b4r) -- (4.2,0.0);
        \draw[dotted] (0,2.5) -- (b4l);
        \draw[dotted] (0,1.8) -- (2.5,1.8);
        \draw[dotted] (3.6,0) -- (3.6,2.5);
        \draw[dotted] (0,0.6) -- (0.9,0.6);
        \draw[dotted] (0,2.25) -- (3.1,2.25);
        \draw[black!50!gray,thick,fill=black!50!gray] (3.6,2.25) circle (1pt);
        \draw[black!50!gray,thick,fill=black!50!gray] (0,0) circle (1pt);
        \draw[black!50!gray,thick,fill=black!50!gray] (b1r) circle (1pt);
        \draw[black!50!gray,thick,fill=black!50!gray] (0.9,0.4) circle (1pt);
        \draw[black!50!gray,thick,fill=black!50!gray] (b2r) circle (1pt);
        \draw[black!50!gray,thick,fill=black!50!gray] (b4r) circle (1pt);
    \end{tikzpicture}
\end{center}
\caption{A monotone bidding strategy $\beta$, succinctly represented
by its jump points $0=s_0\leq s_1 \leq \cdots \leq s_m=1$. \label{fig:discretebidding}}
\end{figure} 

In \cref{sec:ccfpa}, we will also present results for the case when the bidding space is \emph{continuous}, i.e., the entirety of the unit interval, that is, $B=[0,1]$. Following \citep{fghk2025}, we will refer to this variant as the \emph{Continuous First-Price Auction with Continuous Bids (CCFPA)}. 

\section{Continuous First-Price Auction (CFPA)}
\label{sec:cfpa}

In this section, we present our main result for the CFPA, namely a polynomial-time algorithm for computing (approximate) equilibria of the auction. Formally, the main theorem of the section is the following:

\begin{theorem}\label{thm:CDFPA-main-theorem}
There exists an algorithm (see \cref{alg:cdfpa-outer}), which, for any $\varepsilon >0$, and any instance of the CFPA with $L$-Lipschitz value distribution $F$ and minimum bid distance\footnote{That is, if $B=\{b_1,b_2,\ldots,b_m\}$ is the bidding space, then $\alpha := \min_{i\in [m]} (b_{i+1}-b_i)$. Here, we let $b_{m+1} := 1$, i.e., $\alpha := \min\{b_2-b_1, b_3-b_2, \dots, 1-b_m\}$.} $\alpha$, computes an $\varepsilon$-BNE
\begin{itemize}
    \item using $\poly(n, \allowbreak m, \allowbreak \log L, \allowbreak \log(1/\eps), \allowbreak \log(1/\alpha))$ queries to $F$ and operations (additions, multiplications, and divisions) in the \emph{black-box model},
    \item in time polynomial in $\log(1/\varepsilon)$ and the size of the input, in the \emph{white-box model}.
\end{itemize}
\end{theorem}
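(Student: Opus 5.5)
Our plan follows the outline sketched in \cref{sec:cfpa-results}. Fix a symmetric monotone strategy given by jump points $0=s_0\le\dots\le s_m=1$. The first step is an explicit expression for the interim utility of bidding $b_j$. With $p_j := F(s_j)-F(s_{j-1})$ the probability that an opponent bids $b_j$, and $Q_{j-1}:=F(s_{j-1})$, uniform tie-breaking gives
\begin{equation*}
u(b_j;v) = (v-b_j)\,\frac{F(s_j)^n - F(s_{j-1})^n}{n\,p_j}
\end{equation*}
for $p_j>0$, and $(v-b_j)F(s_{j-1})^{n-1}$ otherwise. So $u(b_j;v)$ is affine in $v$, with a nonnegative slope $W_j$ that is nondecreasing in $j$. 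It follows that the equilibrium conditions reduce to indifference (or dominance) conditions at the jump points. Concretely, $\beta$ is an $\varepsilon$-BNE provided that at each nondegenerate $s_{j-1}$ the bidder is within $\varepsilon$ of indifferent between $b_{j-1}$ and the next used bid, and that at $v=1$ no unused higher bid is profitable. This holds because each $u(b_j;\cdot)$ is affine and the upper envelope of affine functions with increasing slopes is attained by consecutive bids. The error in the utilities is controlled using the $L$-Lipschitz property, which bounds how much $W_j$ moves when a jump point is perturbed by $\delta$ (roughly $nL\delta/\alpha$).

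The second step is the inner routine \compstrat. Given a target utility $U$ at $v=1$, it constructs $s_m,s_{m-1},\dots$ top-down. Once $s_m,\dots,s_i$ are fixed, the utility of bid $b_i$ at $s_i$ (hence on the whole current piece) is determined by $s_{i-1}$ alone, and it is monotone in $s_{i-1}$. We therefore binary search on $s_{i-1}\in[0,s_i]$ for the point at which the lower bid $b_{i-1}$ becomes indifferent, which keeps the chain of utilities consistent with $U$. There are two degenerate branches. If even $s_{i-1}=s_i$ makes $b_i$ unattractive, we skip the bid. If no choice of $s_{i-1}$ can reach the required level, we set the remaining jump points to $s_{i-1}$, which yields a positive $s_0$. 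Each binary search needs $O(\log(nL/(\alpha\varepsilon)))$ queries, which gives the stated query bound.

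The third step is the outer binary search on $U\in[0,1]$. We show that $U=0$ returns $s_0=0$, while $U=1$ (infeasibly high) returns $s_0>0$. We then show that the output of \compstrat, and in particular $s_0$ as a function of $U$, is continuous up to the $\delta$ discretization error, with an explicit modulus that is polynomial in $n,L,1/\alpha$. A bisection then finds $U^*$ at which $s_0$ is barely positive, i.e.\ at most $\eta$. Setting $s_0:=0$ at that point changes utilities by at most $nL\eta$, and the resulting profile satisfies all the local indifference conditions up to $\varepsilon$, so it is an $\varepsilon$-BNE by the first step. For the white-box model, we evaluate the well-behaved circuit at polynomially many rationals of polynomial bit length, which gives polynomial time.

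The main obstacle is the continuity and Lipschitz-type analysis of $U\mapsto(s_0,\dots,s_m)$. Branch switches (skipping a bid, or infeasibility) could create jumps, and inverting the utility map can amplify errors across $m$ levels, potentially exponentially. To handle this, I would first prove monotonicity: every $s_j$ is nondecreasing in $U$. I would then argue that at branch boundaries the two branches produce the same jump points, so the map is continuous. Finally, I would bound the error propagation using the fact that each inversion has derivative bounded below by a quantity polynomial in $\alpha$, and work with precision $\delta=\varepsilon\cdot 2^{-\poly}$, so that even an exponential blow-up over $m$ levels costs only polynomially many bits.
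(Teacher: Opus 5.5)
Your outline matches the paper's architecture: a top-down \compstrat with three branches, an outer bisection on $U$ with $s_0=0$ at $U=0$ and $s_0>0$ at $U=1$, and a continuity bound whose $m$-fold blow-up costs only polynomially many bits. However, the step you flag as the main obstacle rests on a false claim. You assert that each inversion $s_{i-1}=g(s_i,U_i)$ (solving $(s_i-b_i)\Delta(s_{i-1},s_i)=U_i$) has derivative bounded below polynomially in $\alpha$, so that $U\mapsto s_0$ has a modulus polynomial in $n,L,1/\alpha$. The derivative of $z\mapsto(s_i-b_i)\Delta(z,s_i)$ is $(s_i-b_i)\,\partial_1\phi(F(z),F(s_i))\,f(z)$, where $\phi(x,y)=\frac1n\sum_{j=0}^{n-1}x^jy^{n-1-j}$. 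But $L$-Lipschitzness bounds the density $f$ only from \emph{above}. Where $F$ is flat, has tiny density $\eta$, or $F(s_i)$ is near $0$, the inverse is discontinuous or $1/\eta$-steep, and no input parameter controls this.

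Here is a concrete failure. Take $n=2$, $B=\{0,b_2\}$, and $F$ with density $\eta$ on $[c_1,c_2]$ and $F(c_1)=q$, with e.g.\ $b_2=0.3$, $q=0.5$, $[c_1,c_2]=[0.4,0.6]$. As $U$ crosses a window of width $O(\eta)$ around $(1-b_2)(1+q)/2$, $s_1$ sweeps $[c_1,c_2]$. Meanwhile $F(s_0)=\max\{0,1-b_2(1+F(s_1))/s_1\}$ climbs from $0$ to about $1-b_2(1+q)/c_2=0.25$. With resolution $\delta\gg\eta$, the bisection can therefore stop at $s^r_1\approx c_2$. After resetting $s_0:=0$, the bidder with value $s^r_1$ (which lies in the support) gains about $(c_2-b_2(1+q))/2$ by bidding $b_2$. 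No $\delta$ depending only on $n,m,L,\alpha,\eps$ prevents this.

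The missing ingredient is the reduction of \cref{lem:wlog-strongly-increasing}. Replace $F$ by $F'(x)=\lambda x+(1-\lambda)F(x)$ with $\lambda=\eps/(3n)$. Then $F'$ is still $L$-Lipschitz and is $\lambda$-strongly increasing. Moreover, every utility changes by at most $\eps/3$, because $\phi$ is $n$-Lipschitz and $\|F'-F\|_\infty\le\lambda$. Strong monotonicity makes $(F')^{-1}$ $(1/\lambda)$-Lipschitz and gives $F'(x)\ge\lambda\alpha$ for $x\ge b_{i+1}$. This is exactly what yields the per-level constant $L\eps^{-4n}\alpha^{-2n}$ of \cref{lem:lipschitz-invert-algo}, and hence the $\eps$-dependent modulus in \cref{lem:cdfpa-algo-lipschitz}.

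Several smaller points. Your plan to first prove that every $s_j$ is nondecreasing in $U$ would fail. Already for $n=2$ the exact recursion gives $F(s_{m-2})=1-(b_m-b_{m-1})(1+F(s_{m-1}))/(s_{m-1}-b_{m-1})$. This decreases in $s_{m-1}$, hence in $U$, wherever $f(s_{m-1})(s_{m-1}-b_{m-1})>1+F(s_{m-1})$. Monotonicity is also unnecessary. The paper notes that every branch sets $s_{i-1}=g(s_i,\tilde U_i)$, with $\tilde U_i$ within $\delta$ of the clamped target $\max\{(s_i-b_i)\Delta(b_i,s_i),\min\{U_i,(s_i-b_i)\Delta(s_i,s_i)\}\}$. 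It likewise sets $U_{i-1}=(s_{i-1}-b_i)\Delta(s_{i-1},s_i)+\max\{0,U_i-(s_i-b_i)\Delta(s_i,s_i)\}$. Continuity across branch switches then comes for free once $g$ is Lipschitz.

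The search for $s_{i-1}$ must range over $[b_i,s_i]$, not $[0,s_i]$, to rule out overbidding. It is the branch where even $s_{i-1}=b_i$ overshoots (setting $U_{i-1}=0$) that cascades to $s_0=0$; your description of the degenerate branches conflates this with the skip branch.

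Finally, in your Step 1 the per-jump tolerance must be $\eps/(2m)$, because deviation gains telescope over up to $m$ effective bids. Skipped interior bids must also be checked at their coincident jump point, as in \cref{prop:characterize-approx-equilibrium}.
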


\paragraph{High-level Overview.} Before we provide the proof of the theorem, we offer some useful discussion. As we mentioned in the introduction, in the context of finding equilibria, the literature in economics has not in general managed to provide analytic solutions for the CFPA. In particular, this means that we cannot attempt to obtain approximate equilibria of the auction, by approximating such (continuous) analytic solutions via (discrete) algorithms.\footnote{This in itself turns out to be a technically challenging task, which we investigate in \cref{sec:ccfpa} for the CCFPA, for which such analytic solutions are known.}

Our algorithm is based on the following crucial observation: if we know the utility achieved by the bidders at equilibrium at the highest point in the value space, i.e., at $v=1$, then we can efficiently compute the equilibrium, i.e., the set of jump points that describe the equilibrium bidding strategy.\footnote{For example, for the uniform distribution, the utility at equilibrium at point $v=1$ is always $1/n$, where $n$ is the number of bidders; this is consistent with the method of \citet{chwe1989discrete} for describing an equilibrium of the CFPA when the value distribution is uniform, and the special case of equidistant bids.}
Additionally, when we run our method for obtaining an equilibrium from a utility value $U$ at $v=1$ that does not correspond to the utility value at an equilibrium, it still outputs a set of jump points. Of course, these jump points do not correspond to an equilibrium in that case, but we can still extract useful information from them. We show that our method for computing a set of jump points $\vec{s}=(s_0, \dots,s_m)$ from a utility value has the following two crucial properties: (a) on input $U=0$, the method outputs $s$ with $s_0 = 0$, and (b) on input $U=1$, it outputs $s$ with $s_0 > 0$. 

This suggests the following approach: perform binary search on the value $U$ in order to find a value $U^*$ where the output $s$ is such that $s_0$ is ``barely above'' $0$. Indeed, we are able to show that if we can find such a value $U^*$ then we obtain an approximate equilibrium. The most challenging part of the analysis is showing that our method is continuous, in the sense that the jump points that it outputs depend continuously on the input value $U$. This is crucial in order to argue that binary search can locate a solution.

\subsection{Useful Notation and Preprocessing}

We present the following useful lemma introducing some key notation regarding the computation of the utilities.
A closely related lemma appears in \citep{chwe1989discrete}; we include our version for completeness and to harmonize the notation with the rest of the paper. Its proof can be found in~\cref{appendix:lem:utility-deltafunc}.

\begin{lemma}[\cite{chwe1989discrete}]
    \label{lem:utility-deltafunc}
    Consider a symmetric strategy $\beta$, given by its canonical jump point representation $0=s_0\leq s_1 \leq \ldots \leq s_m=1$. Assuming that all other bidders use strategy $\beta$, the utility of a bidder when bidding $b_j$ at value $v \in [0,1]$ can be expressed as:
    $$u(b_j,\beta; v) = (v - b_j) \deltafunc(s_{j-1},s_j)$$
    where 
    \begin{equation*}
    \deltafunc(x,y) := \frac{1}{n}\sum_{i=0}^{n-1} F(x)^{n-1-i}F(y)^i. 
    \end{equation*}
    In particular, the probability of any given bidder winning the auction when submitting bid $b_j$ is given by $\deltafunc(s_{j-1},s_j)$.
\end{lemma}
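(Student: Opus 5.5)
The plan is to compute the winning probability directly. Fix the bidder and her bid $b_j$. Every other bidder independently draws a value $v' \sim F$ and bids $\beta(v')$. Under the jump-point representation, $\beta(v')=b_j$ exactly when $v'\in(s_{j-1},s_j]$, which has probability $F(s_j)-F(s_{j-1})$. Likewise $\beta(v')<b_j$ exactly when $v'\le s_{j-1}$ (including $v'=0$), which has probability $F(s_{j-1})$. Continuity of $F$ means the single points $0$ and the jump points carry no mass, so the half-open conventions do not matter. Write $p:=F(s_{j-1})$ and $q:=F(s_j)-F(s_{j-1})$, so that $p+q=F(s_j)$.

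The bidder wins with positive probability only if no opponent bids above $b_j$. Condition on exactly $k$ of the $n-1$ opponents bidding $b_j$, with the rest bidding below. This event has probability $\binom{n-1}{k}q^k p^{n-1-k}$, and in that case she wins with probability $1/(k+1)$ by uniform tie-breaking. When she wins she pays $b_j$, so by \eqref{eq:ex_post_utilities} her ex-post utility is $(v-b_j)/(k+1)$. The interim utility therefore factors as $(v-b_j)\cdot P$, where
\[
P=\sum_{k=0}^{n-1}\binom{n-1}{k}\frac{1}{k+1}q^k p^{n-1-k},
\]
and $P$ is exactly the probability of winning. This reduces both claims to showing $P=\deltafunc(s_{j-1},s_j)$.

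The only real step is the combinatorial identity. Use $\binom{n-1}{k}\frac{1}{k+1}=\frac{1}{n}\binom{n}{k+1}$ and reindex with $\ell=k+1$. This gives
\[
P=\frac{1}{nq}\sum_{\ell=1}^{n}\binom{n}{\ell}q^\ell p^{n-\ell}=\frac{(p+q)^n-p^n}{nq}=\frac{F(s_j)^n-F(s_{j-1})^n}{n\,(F(s_j)-F(s_{j-1}))}.
\]
Factoring $y^n-x^n=(y-x)\sum_{i=0}^{n-1}x^{n-1-i}y^i$ with $x=F(s_{j-1})$ and $y=F(s_j)$ yields exactly $\deltafunc(s_{j-1},s_j)$.

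The degenerate case $q=0$ needs separate treatment. Here the binomial sum keeps only $k=0$, so $P=p^{n-1}$. The formula for $\deltafunc$ with $x=y$ also gives $\frac1n\cdot n\,F(x)^{n-1}=p^{n-1}$, so the two agree. The division by $q$ is therefore avoided by checking the identity directly, or equivalently by verifying the polynomial identity $P=\deltafunc$ in $(p,q)$ without dividing.

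I expect no serious obstacle. The point needing the most care is the boundary bookkeeping: confirming that ties happen only with opponents in $(s_{j-1},s_j]$, and that the measure-zero events at jump points and at $0$ contribute nothing. Both follow from $F$ being continuous.
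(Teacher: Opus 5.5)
Your proposal is correct and follows essentially the same route as the paper: the paper sums over tie sets $I \ni 1$ and collapses them by symmetry to the same binomial sum, then uses $\frac{1}{i}\binom{n-1}{i-1}=\frac{1}{n}\binom{n}{i}$, the binomial theorem, and the factorization of $y^n-x^n$. Like you, it treats the case $F(s_j)=F(s_{j-1})$ separately.
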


The next lemma shows that, when studying the efficient computation of $\eps$-BNE, it suffices to consider the case where the cdf $F$ is $\eps$-strongly increasing. Thus, from now on we will assume that the cdf satisfies this assumption. Formally, we have the following definition.

\begin{definition}[\emph{$\gamma$-strongly increasing function}]\label{def:strongly-increasing}
Let $f: [a,b] \rightarrow \mathbb{R}$ be a continuous function. Then, given $\gamma >0$, $f$ is $\gamma$-strongly increasing if $f(x) \geq f(y) + \gamma(x-y)$ for any $x,y \in [a,b]$ with $x \geq y$.
\end{definition}

\begin{lemma}[Strong-Monotonicity Preprocessing]\label{lem:wlog-strongly-increasing}
Assume that we have an efficient\footnote{Recall that, in the black-box model, an efficient algorithm is one that performs at most polynomially many (in all the relevant parameters, as mentioned in \cref{thm:CDFPA-main-theorem}) queries and arithmetic operations. In the white-box model, an efficient algorithm is one that runs in polynomial time in the size of the input, where $\eps$ is given in binary representation.} algorithm that for any $\eps > 0$ computes an $\eps$-BNE whenever the cdf $F$ is $\eps$-strongly increasing. Then, we can obtain an efficient algorithm that given any $\eps > 0$ computes an $\eps$-BNE, without any strong-monotonicity assumption.
\end{lemma}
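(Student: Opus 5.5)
Given an arbitrary cdf $F$ on $[0,1]$, we perturb it by mixing with the uniform distribution: for a parameter $\lambda\in(0,1)$ to be fixed later, set
$$\hat F(x) := (1-\lambda)F(x) + \lambda x .$$
Then $\hat F$ is a valid continuous cdf on $[0,1]$. It is $\lambda$-strongly increasing, since $\hat F(x)-\hat F(y)\ge \lambda(x-y)$ for $x\ge y$ by monotonicity of $F$. It is also $(L+1)$-Lipschitz, and $\|\hat F - F\|_\infty\le\lambda$. Its support is all of $[0,1]$, which only strengthens the equilibrium requirement, since we then need the $\eps$-BNE condition for every $v\in[0,1]\supseteq\support{F}$. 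One query to $\hat F$ costs one query to $F$ plus $O(1)$ arithmetic operations. In the white-box model, $\hat F$ is again a well-behaved arithmetic circuit of size linear in that of $F$, with one extra multiplication on each path. So the assumed algorithm can be run on $\hat F$ with target accuracy $\eps' := \min\{\eps/2,\lambda\}$. We choose $\lambda=\eps'$, e.g.\ $\lambda=\eps'=\eps/(4n)$ (refined below), so that $\hat F$ is indeed $\eps'$-strongly increasing. This yields a monotone, non-overbidding jump-point profile $\beta$ that is an $\eps'$-BNE for $\hat F$.

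The key step is to show that $\beta$ is an $\eps$-BNE for $F$. By \cref{lem:utility-deltafunc}, for every $v$ and $j$ the utility is $u(b_j,\beta;v)=(v-b_j)\deltafunc(s_{j-1},s_j)$, and the jump points of $\beta$ do not depend on the distribution. Only $\deltafunc$ changes: write $\deltafunc_F$ and $\deltafunc_{\hat F}$. Since $\deltafunc$ is an average of monomials $F(x)^{n-1-i}F(y)^i$ of total degree $n-1$ in quantities lying in $[0,1]$, a telescoping bound gives
$$|\deltafunc_F(x,y)-\deltafunc_{\hat F}(x,y)|\le (n-1)\lambda .$$
With $|v-b_j|\le 1$, every utility changes by at most $(n-1)\lambda$. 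Hence for all $v$ and all $b_j$,
$$u_F(\beta(v);v)\ge u_{\hat F}(\beta(v);v)-(n-1)\lambda\ge u_{\hat F}(b_j;v)-\eps'-(n-1)\lambda\ge u_F(b_j;v)-\eps'-2(n-1)\lambda .$$
Choosing $\lambda=\eps'=\eps/(2n)$ makes the total at most $\eps$. This choice is consistent: $\hat F$ is then $\eps'$-strongly increasing and we ask for an $\eps'$-BNE, exactly the hypothesis of the assumed algorithm applied with accuracy $\eps'$.

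Finally, efficiency. The parameters passed to the assumed algorithm are $\eps'=\eps/(2n)$, so $\log(1/\eps')=\log(1/\eps)+O(\log n)$, and the Lipschitz constant is $L+1$. The bidding space is unchanged. Every polynomial bound therefore remains polynomial in the original parameters, and in the white-box model the input size grows only by $O(\log n+\log(1/\eps))$ bits for the constants $\lambda,1-\lambda$. The main point needing care is the perturbation bound on $\deltafunc$. This is a routine telescoping of products of numbers in $[0,1]$, so I expect no real obstacle. The only subtlety is remembering that equilibrium must now hold on all of $[0,1]$ rather than just $\support{F}$, which is harmless.
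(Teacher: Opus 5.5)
Your proposal is correct and follows essentially the same route as the paper. Both mix $F$ with the uniform distribution at weight $\Theta(\eps/n)$, run the assumed algorithm on the perturbed cdf, and transfer the guarantee via \cref{lem:utility-deltafunc}, using the fact that the jump points are distribution-independent together with a perturbation bound on $\deltafunc$. The only differences are cosmetic: the paper takes $\delta=\eps/3n$ and invokes the $n$-Lipschitzness of $\phi$ from \cref{lem:lipschitz-deltafunc}, while you take $\lambda=\eps/(2n)$ and use a direct telescoping bound of $(n-1)\lambda$.
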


\begin{proof}
Let $\eps > 0$ be given.
We can set $\delta = \eps/3n$ and construct a modified cdf $F'$ by simply letting $F'(x) := \delta x + (1-\delta)F(x)$. Note that $F'$ is a valid cdf, it is $L$-Lipschitz, and it is $\delta$-strongly increasing. Furthermore, we can easily (and efficiently) simulate access to $F'$ in the black-box model, or construct a well-behaved arithmetic circuit for $F'$ given a well-behaved arithmetic circuit for $F$ in the white-box model.

By assumption, we can efficiently compute a $\delta$-BNE $\beta$ of the auction with the $\delta$-strongly increasing cdf $F'$. We will show that $\beta$ is an $\eps$-BNE of the auction with the original cdf $F$. Let $u$ and $u'$ denote the utility functions under $F$ and $F'$ respectively. Since $\beta$ is a $\delta$-BNE of the auction with cdf $F'$, we have that for all $v \in [0,1]$ and $i \in [m]$
$$u'(\beta(v);v) \geq u'(b_i;v) - \delta.$$
Thus, since $\delta \leq \eps/3$, in order to show that $\beta$ is an $\eps$-BNE of the auction with cdf $F$, it suffices to show that $|u'(b_i;v) - u(b_i;v)| \leq \eps/3$ for all $v \in [0,1]$ and $i \in [m]$.

Let $(s_0, \dots, s_m)$ be the jump point representation of $\beta$. By \cref{lem:utility-deltafunc} we can write 
\[u'(b_i;v) = (v-b_i)\phi(F'(s_{i-1}),F'(s_i))
\qquad\text{and}\qquad 
u(b_i;v) = (v-b_i)\phi(F(s_{i-1}),F(s_i)),\] 
where $\phi(x,y) = (1/n) \sum_{j=0}^{n-1} x^j y^{n-1-j}$. Since $|v-b_i| \leq 1$, it thus suffices to show that 
\begin{equation}\label{eq:lemma3-3}
|\phi(F'(s_{i-1}),F'(s_i)) - \phi(F(s_{i-1}),F(s_i))| \leq \eps/3.
\end{equation}
To see this, first notice that $|F'(s_i) - F(s_i)| = |\delta s_i - \delta F(s_i)| = \delta |s_i - F(s_i)| \leq \delta = \eps/3n$
and similarly $|F'(s_{i-1}) - F(s_{i-1})| \leq \eps/3n$. Inequality~\eqref{eq:lemma3-3} then follows from the fact that $\phi$ is $n$-Lipschitz, by \cref{lem:lipschitz-deltafunc}. This completes the proof.
\end{proof}

\subsection{Description and Running Time of the Algorithm}\label{sec:algorithms-description}

In this section, we present the algorithm that establishes \cref{thm:CDFPA-main-theorem}, see \cref{alg:cdfpa-outer}.  
Recall that, by \cref{lem:wlog-strongly-increasing}, we may assume without loss of generality that the cdf $F$ is $\eps$-strongly increasing. Furthermore, recall that the bidding space is $B = \{b_1, \dots, b_m\}$ with $0=b_1 < b_2< \dots < b_m < 1$, and with $b_{m+1} := 1$ used for notational convenience.
\begin{algorithm}[t]
\caption{Efficient computation of an $\varepsilon$-BNE in a symmetric CFPA}\label{alg:cdfpa-outer}

\KwIn{$\varepsilon > 0$, $n$, bids $B = \{b_1, \dots, b_m\}$, $L$-Lipschitz and $\eps$-strongly increasing cdf $F$}
\KwOut{Strategy profile $\vec{s} = (s_0, \dots, s_m)$ constituting a symmetric $\varepsilon$-BNE}

$\alpha \gets \min_{i \in [m]} (b_{i+1}-b_i)$\;
$\delta \gets \left(\frac{\eps^{5n} \alpha^{3n}}{100n^3L^4}\right)^m$\;
$U^\ell \gets 0$, $U^r \gets 1$\;
$\vec{s}^r \gets \compstrat(U^r, \delta)$\;

\While{$|U^r - U^\ell| > \delta$}{
    $U \gets (U^\ell + U^r)/2$\;
    $\vec{s} \gets \compstrat(U,\delta)$\;
    \uIf{$s_0 = 0$}{
        $U^\ell \gets U$\;
    }
    \Else{
        $U^r \gets U$\;
        $\vec{s}^r \gets s$\;
    }
}
\Return{$\vec{s}^* := (0, s_1^r, \dots, s_m^r)$}\;

\end{algorithm}
\cref{alg:cdfpa-outer} uses the \compstrat subroutine, which is presented in \cref{alg:cdfpa-inner}. This subroutine has access to all the inputs of the main algorithm and it receives two additional inputs: a utility value $U$ and a precision parameter $\delta$. 

\begin{algorithm}[t]
\caption{\compstrat subroutine}\label{alg:cdfpa-inner}

\KwIn{$U \in [0,1]$, $\delta > 0$}
\KwOut{A vector $\vec{s} = (s_0, \dots, s_m)$ with $0 \leq s_0 \leq \dots \leq s_m = 1$}

$s_m \gets 1$\;
$U_m \gets U$\;

\For{$i=m$ \KwTo $1$}{
    \uIf{$(s_i-b_i)\deltafunc(s_i,s_i) \leq U_i$}{
        $s_{i-1} \gets s_i$\;
        $U_{i-1} \gets U_i$\;
    }
    \uElseIf{$(s_i-b_i)\deltafunc(b_i,s_i) \geq U_i$}{
        $s_{i-1} \gets b_i$\;
        $U_{i-1} \gets 0$\;
    }
    \Else{
        Use $\log(nL/\delta)$ steps of binary search to find $s_{i-1} \in (b_i,s_i)$ satisfying: $$|(s_i-b_i) \deltafunc(s_{i-1},s_i) - U_i| \leq \delta$$
        $U_{i-1} \gets (s_{i-1}-b_i) \deltafunc(s_{i-1},s_i)$\;
    }
}

\Return{$\vec{s} = (s_0, \dots, s_m)$}\;

\end{algorithm}

Before we proceed, we briefly discuss an integral part of \cref{alg:cdfpa-inner}, the step which uses binary search. First of all, note that we necessarily have $s_i > b_i$, since the algorithm never sets the value $s_i$ below $b_{i+1}$. Observe that if the algorithm reaches the ``else'' branch, then the equation $(s_i-b_i) \deltafunc(s_{i-1},s_i) = U_i$ is guaranteed to have a solution; recall that $\deltafunc(x,y) = (1/n) \sum_{j=0}^{n-1} F(x)^j F(y)^{n-1-j}$, see \cref{lem:utility-deltafunc}.  Indeed, since we did not enter the ``if'' branch, we have $(s_i-b_i) \deltafunc(x,s_i) > U_i$ for $x = s_i$. Since we did not enter the ``else if'' branch, we also have $(s_i-b_i) \deltafunc(x,s_i) < U_i$ for $x = b_i$. As a result, there exists some $x \in (b_i, s_i)$ such that $(s_i-b_i) \deltafunc(x,s_i) = U_i$. Furthermore, we can use binary search to find an approximate solution. Since $|s_i-b_i| \leq 1$, the function $x \mapsto (s_i-b_i) \deltafunc(x,s_i)$ is $nL$-Lipschitz; this can be seen by an application of a simple lemma (\cref{lem:lipschitz-deltafunc}) which we state and prove in \cref{app:simple-lemma}. Thus, after at most $\log(nL/\delta)$ steps of binary search we are guaranteed to have found $x \in (b_i,s_i)$ satisfying $|(s_i-b_i) \deltafunc(x,s_i) - U_i| \leq \delta$. The algorithm then sets $s_{i-1} := x$ and proceeds. Note that $\log(nL/\delta)$ is polynomial in the relevant parameters, namely $n,m,\log L, \log(1/\eps),$ and $\log(1/\alpha)$.

The while loop of the main algorithm (\cref{alg:cdfpa-outer}) is executed $\poly(n,m,\log L, \log(1/\eps), \log(1/\alpha))$ times. As a result, the total number of queries to $F$ as well as the number of simple operations is bounded by a polynomial in $n,m,\log L, \log(1/\eps),$ and $\log(1/\alpha)$. In the white-box model, all these quantities are polynomial in the size of the input, so the algorithm runs in polynomial time.

\subsection{Correctness of the Algorithm}\label{sec:algorithm-correctness}

In this section, we show the correctness of the algorithm, i.e., that it computes a symmetric $\varepsilon$-BNE. We begin by showing a straightforward property of the algorithm.

\begin{lemma}\label{lem:cdfpa-algo-endpoints}
Consider any values of $U^\ell$ and $U^r$ in any execution of the while loop of \cref{alg:cdfpa-outer}. Let $\vec{s}^\ell := \compstrat(U^\ell,\delta)$ and $ \vec{s}^r := \compstrat(U^r,\delta)$, as computed by \cref{alg:cdfpa-inner}, for $\delta$ as defined in \cref{alg:cdfpa-outer}. Then,
$$s^\ell_0 = 0
\qquad\text{and}\qquad
s^r_0 > 0.$$
\end{lemma}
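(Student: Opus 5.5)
The plan is to prove this as a loop invariant of \cref{alg:cdfpa-outer}. Inside the while loop, $U^\ell$ is only ever overwritten by a value $U$ with $\compstrat(U,\delta)$ returning $s_0=0$. Likewise, $U^r$ is only ever overwritten by a value $U$ with $s_0>0$, and at that moment $\vec{s}^r$ is set to that same output. \compstrat is deterministic, so $\compstrat(U^\ell,\delta)$ and $\compstrat(U^r,\delta)$ are exactly the outputs recorded when these values were assigned. It therefore suffices to check the two initial values: $\compstrat(0,\delta)$ returns $s_0=0$, and $\compstrat(1,\delta)$ returns $s_0>0$. Two facts are used throughout. By \cref{lem:utility-deltafunc}, $\deltafunc(x,x)=F(x)^{n-1}$, so in particular $\deltafunc(1,1)=1$. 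And $\deltafunc\geq 0$ always.

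\textbf{Case $U=0$.} I will show by downward induction on $i$ that at iteration $i$ we have $s_i=b_{i+1}$ (recall $b_{m+1}=1$) and $U_i=0$. The base case holds since $s_m=1=b_{m+1}$ and $U_m=0$.

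At iteration $i$, the "if" condition reads $(b_{i+1}-b_i)F(b_{i+1})^{n-1}\leq 0$. Here $b_{i+1}-b_i\geq\alpha>0$. Also $F(b_{i+1})\geq F(0)+\eps b_{i+1}>0$, because $F$ is $\eps$-strongly increasing (the preprocessing of \cref{lem:wlog-strongly-increasing}) and $b_{i+1}>0$. Hence the "if" condition fails.

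The "else if" condition reads $(s_i-b_i)\deltafunc(b_i,s_i)\geq 0 = U_i$, which holds trivially. So the algorithm sets $s_{i-1}=b_i$ and $U_{i-1}=0$, which completes the induction step. At $i=1$ this gives $s_0=b_1=0$.

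\textbf{Case $U=1$.} I will show by downward induction that $s_i=1$ and $U_i=1$ for every $i$. At iteration $i$, the "if" condition is $(1-b_i)\deltafunc(1,1)=1-b_i\leq 1$, which is true. So the algorithm sets $s_{i-1}=1$ and $U_{i-1}=1$. Thus $s_0=1>0$.

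The only step needing real care is the first case: excluding the "if" branch there requires $F(b_{i+1})>0$. This is exactly where strong monotonicity of $F$ and the strict ordering $b_i<b_{i+1}$ of the bids are used. Everything else is direct substitution into the branch conditions of \cref{alg:cdfpa-inner}.
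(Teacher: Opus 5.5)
Your proof is correct and follows the paper's argument. It runs a downward induction through the branches of \compstrat for $U=0$ (the else-if branch every time, so $s_0=b_1=0$) and for $U=1$ (the if branch every time, so $s_0=1$), and then uses the same loop-invariant argument for the while loop. You are slightly more careful than the paper in two places: you invoke $\eps$-strong monotonicity explicitly to get $F(b_{i+1})>0$, and you use the non-strict $1-b_i\le 1$, whereas the paper's strict claim $s_i-b_i<1$ fails at $i=1$, where $b_1=0$.
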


\begin{proof}
First, let $\vec{s} = \compstrat(0,\delta)$. We argue that for all $i=m,\ldots,1$, it holds that $s_{i-1} = b_{i}$; this in particular implies that $s_0 = b_1 = 0$. We proceed by induction on $i$.  \medskip

\noindent \textbf{Base Case:} For the first step ($i=m$), since $s_m > b_m$, and since $\deltafunc(s_m,s_m) > 0$ by \cref{lem:utility-deltafunc}, we have that $(s_m-b_m) \deltafunc(s_m,s_m) > U_m = 0$. As a result, $s_{m-1}$ is not computed in the ``if'' branch of the loop. Instead, since $\deltafunc(b_m,s_m) > 0$, $s_{m-1}$ is instead computed in the ``else if'' branch of the for loop, and, consequently, $s_{m-1}=b_m$ and $U_{m-1}=0$.    \medskip 

\noindent \textbf{Induction Step:} Now consider the $i$th step and assume that 
\begin{equation*}s_{i}=b_{i+1} \qquad \text{ and } \qquad U_{i}=0.
\tag{\text{Induction Hypothesis}}
\end{equation*}
\noindent Again, since $s_{i} = b_{i+1} > b_i$, $\deltafunc(s_i,s_i) > 0$, and $\deltafunc(b_i,s_i) > 0$, $s_{i-1}$ is computed in the ``else if'' part of the loop. As a result, we have that $s_{i-1}=b_i$ and $U_{i-1}=0$. \medskip

\noindent Next, let $\vec{s} = \compstrat(1,\delta)$. We argue that for all $i=m,\ldots,1$, it holds that $s_{i-1} = s_{i}$; since $s_m=1$, this in particular implies that $s_0 = 1> 0$. Again, we proceed by induction on $i$. \medskip 

\noindent \textbf{Base Case:} For the first step ($i=m$), since $0< s_m - b_m < 1$, and $0 \leq \deltafunc(s_m,s_m) \leq 1$, we have that $(s_m-b_m) \deltafunc(s_m,s_m) < U_m = 1$. As a result, $s_{m-1}$ is computed in the ``if'' branch of the loop. Consequently, $s_{m-1}=s_m$ and $U_{m-1}=1$.    \medskip 

\noindent \textbf{Induction Step:} Now consider the $i$th step and assume that 
\begin{equation*}s_{i}=s_{i+1} \qquad \text{ and } \qquad U_{i}=1.
\tag{\text{Induction Hypothesis}}
\end{equation*}
\noindent Again, since $0< s_i - b_i < 1$, $U_i=1$, and $\deltafunc(s_i,s_i) \leq 1$, $s_{i-1}$ is computed in the ``if'' part of the loop. As a result, we have that $s_{i-1}=s_i=1 > 0$, and $U_{i-1}=1$. \medskip

\noindent Notice that in \cref{alg:cdfpa-outer}, at the beginning of the while loop, we have that $U^\ell = 0$ and $U^r = 1$. From the discussion above, it follows that at that point, $s_0^\ell =0$ and $s_0^r=1>0$. During any iteration of the while loop, $U^\ell$ either remains unchanged, or we have that $U^\ell = U$, only if $\vec{s} = \compstrat(U,\delta)$ satisfies $s_0 = 0$. From this, it follows that at that point, $\vec{s}^\ell = \compstrat(U^\ell,\delta)$ satisfies $s_0^\ell = 0$. Similarly, $U^r$ either remains unchanged, or we have that $U^r=U$, only if $\vec{s} = \compstrat(U,\delta)$ satisfies $s_0 >0$. It follows that at that point, $\vec{s}^r = \compstrat(U^r,\delta)$ satisfies $s_0^r > 0$. 
\end{proof}

Next, we show that the \compstrat subroutine satisfies a certain form of continuity. This will be crucial to establish that the output of our algorithm is an approximate BNE.

\begin{lemma}\label{lem:cdfpa-algo-lipschitz}
Let $\delta > 0$. Let $U, U' \in [0,1]$ with $|U-U'| \leq \delta$ and $\vec{s}\coloneqq \compstrat(U,\delta)$ and $\vec{s}'\coloneqq \compstrat(U',\delta)$ denote the corresponding outputs of~\cref{alg:cdfpa-inner}. Then it holds that
$$|s_0 - s_0'| \leq (6n^2 L^3 \eps^{-4n} \alpha^{-2n})^{m} \delta.$$
\end{lemma}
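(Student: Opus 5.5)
Our plan is to prove, by downward induction on $i = m, m-1, \dots, 0$, a joint bound on the pair $(s_i, U_i)$ versus $(s_i', U_i')$. Set $K := 6n^2 L^3 \eps^{-4n}\alpha^{-2n}$. We aim for the invariant
\[
|s_i - s_i'| \le K^{m-i}\delta \quad\text{and}\quad |U_i - U_i'| \le K^{m-i}\delta ,
\]
possibly with a slightly different constant for the utility part. The base case $i=m$ is immediate, since $s_m = s_m' = 1$ and $|U_m - U_m'| = |U-U'| \le \delta$. Reading off $i=0$ gives the statement.

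For the step from $i$ to $i-1$, we view one iteration of the for loop as a map $\Phi_i : (s_i, U_i) \mapsto (s_{i-1}, U_{i-1})$. This map is piecewise defined by the three branches, and we first bound its Lipschitz constant on each branch.

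\textbf{If-branch.} The map is the identity, so the bound is trivial.

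\textbf{Else-if branch.} The map is constant, $(b_i, 0)$, so again the bound is trivial.

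\textbf{Else branch.} Here $s_{i-1}$ approximately solves $g(x) := (s_i - b_i)\deltafunc(x, s_i) = U_i$, up to error $\delta$. We control how far the solution can move using a lower bound on the slope of $g$ in $x$. Since $F$ is $\eps$-strongly increasing, $F(x)$ moves by at least $\eps$ times any change in $x$. The partial derivative of $\deltafunc(x,y)$ in $F(x)$ is at least $\frac{1}{n}F(y)^{n-2}$ or so. Also $F(s_i) \ge F(b_i) \ge \eps b_i$, and $s_i - b_i \ge \alpha$ because $s_i \ge b_{i+1}$ in this branch. Together these give a lower slope of order $\alpha\,\eps^{n}\alpha^{n}/n$.

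One caveat: for $i=1$ we have $b_1 = 0$, so $F(b_1)$ can be $0$. There we instead use $F(s_1) \ge \eps\alpha$, which follows from $s_1 \ge b_2 \ge \alpha$.

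The Lipschitz constants of $g$ in $s_i$ and $U_i$ are at most $nL + 1$, by \cref{lem:lipschitz-deltafunc}. The implicit-function-style argument then bounds the movement of the exact root by roughly $(nL+1)/\text{slope}$ times $(|\Delta s_i| + |\Delta U_i|)$. The binary-search tolerance adds an additive error of $\delta/\text{slope}$. The new value $U_{i-1}$ is an $nL$-Lipschitz function of $(s_{i-1}, s_i)$, so it inherits a similar bound. All of this fits within the factor $K$, since $\eps^{-4n}\alpha^{-2n}$ is generous.

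\textbf{Main obstacle: switching branches.} The key difficulty is when $(s_i, U_i)$ and $(s_i', U_i')$ fall into different branches. We plan to show that $\Phi_i$ is continuous across the branch boundaries. At the if/else boundary, $(s_i - b_i)\deltafunc(s_i, s_i) = U_i$, and the else branch's root tends to $x = s_i$ with $U_{i-1} \to (s_i - b_i)\deltafunc(s_i,s_i) = U_i$, which matches the if branch's $(s_i, U_i)$. At the else-if/else boundary, the root tends to $b_i$ with $U_{i-1} \to 0$, which matches $(b_i, 0)$.

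Given this continuity, we connect the two inputs by the segment $t \mapsto (1-t)(s_i, U_i) + t(s_i', U_i')$. The branch regions are defined by monotone inequalities, so the segment crosses only finitely many boundaries. On each piece $\Phi_i$ is Lipschitz with the per-branch constants, and the pieces glue continuously, so the total displacement is bounded by the maximum constant times the input distance.

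The only genuine discontinuity comes from the $\delta$-approximate binary search: its output may jump by up to $O(\delta/\text{slope})$ near a boundary. We absorb this into an additive error of $\delta/\text{slope}$ per step. Since $K^{m-i}\delta \ge \delta$, the additive term is dominated by the multiplicative factor. Unrolling the induction over $m$ steps gives the claimed bound $K^m\delta$ on $|s_0 - s_0'|$.
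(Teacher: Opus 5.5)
Your proof is correct in substance but takes a different route from the paper; one sub-claim needs repair. The overall skeleton matches the paper's. You use downward induction with the invariant $|s_i-s_i'|,|U_i-U_i'|\le M_i:=K^{m-i}\delta$, charge the binary-search error as an additive $O(\delta)$ term that is absorbed because $M_i\ge\delta$, and use a Lipschitz bound on the inverse of $z\mapsto(s_i-b_i)\deltafunc(z,s_i)$.

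The routes differ in two places:
\begin{itemize}
\item \textbf{The inverse map.} You use an implicit-function slope bound: $\partial_a\phi\ge\frac1n c^{n-2}$, together with $F(s_i)\ge\eps\alpha$ and $s_i-b_i\ge\alpha$. This gives a per-step factor of order $n\,\eps^{1-n}\alpha^{1-n}$. The paper instead writes the root explicitly as $F^{-1}\bigl(F(x)\,\psi^{-1}(y/((x-b_i)F(x)^{n-1}))\bigr)$ in \cref{lem:lipschitz-invert-algo} and composes crude Lipschitz bounds. Your estimate is sharper, so reaching the paper's $K$ is bookkeeping.
\item \textbf{Branch switching.} The paper never needs a continuity-gluing argument. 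It observes that in every branch, $s_{i-1}$ is a $\delta$-approximate root of $(s_i-b_i)\deltafunc(\cdot,s_i)=\widehat U_i$, where $\widehat U_i$ is $U_i$ clamped to $[(s_i-b_i)\deltafunc(b_i,s_i),\,(s_i-b_i)\deltafunc(s_i,s_i)]$. It also observes that in all three branches $U_{i-1}=(s_{i-1}-b_i)\deltafunc(s_{i-1},s_i)+\max\{0,\,U_i-(s_i-b_i)\deltafunc(s_i,s_i)\}$. Both are compositions of globally Lipschitz maps (max, min, products), so the whole step is Lipschitz in one stroke. Your piecewise-plus-continuity approach reaches the same conclusion more by hand, but it needs a gluing argument.
\end{itemize}

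The gluing step, as you justify it, does not hold. You claim the segment ``crosses only finitely many boundaries because the regions are defined by monotone inequalities.'' This is false in general. The boundary curves $s\mapsto(s-b_i)\deltafunc(s,s)$ and $s\mapsto(s-b_i)\deltafunc(b_i,s)$ are only increasing Lipschitz functions, since $F$ is merely Lipschitz and strongly increasing. A straight line can cross such a graph infinitely often, with the crossings accumulating.

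The conclusion still survives, by either of two fixes:
\begin{itemize}
\item \textbf{Local-to-global argument.} Your exact (error-free) step map is continuous and $C$-Lipschitz on the closure of each of the three regions. There are only three regions, and any region whose closure misses $\gamma(t)$ lies at positive distance from it. Hence for every point $\gamma(t)$ of the segment, all nearby $\gamma(t')$ lie in a region whose closure contains $\gamma(t)$. This gives $|\Phi(\gamma(t'))-\Phi(\gamma(t))|\le C\,|t'-t|\,\|q-p\|_\infty$ locally. A standard sup (continuous-induction) argument then yields the same bound along the whole segment.
\item \textbf{The paper's device.} Simply note that your exact map coincides with the clamped formulas above, which removes the need for any path argument.
\end{itemize}
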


\begin{proof}
We use $s_i$ and $U_i$ to denote the variables in the execution of $\compstrat(U,\delta)$, and $s_i'$ and $U_i'$ to denote the corresponding variables in the execution of $\compstrat(U',\delta)$. We will use induction, to show that for all $i =m,m-1, \ldots, 1,0$ we have 
$$|s_i-s_i'| \leq M_i \qquad \text{ and } \qquad |U_i-U_i'| \leq M_i \qquad \text{ where } \ \ 
M_i := (6n^2 L^3 \eps^{-4n} \alpha^{-2n})^{m-i} \delta.$$ 

\noindent \textbf{Base Case:} This clearly holds for $i=m$. \medskip 

\noindent \textbf{Induction Step:} Now consider the $i$th step and assume the bound holds for all previous steps. In particular, we assume that 
\begin{equation*}
    |s_i-s_i'| \leq M_i \qquad \text{ and }  \qquad |U_i-U_i'| \leq M_i, \tag{\text{Induction Hypothesis}}
\end{equation*}
and our goal is to show that $|s_{i-1}-s_{i-1}'| \leq M_{i-1}$ and $|U_{i-1}-U_{i-1}'| \leq M_{i-1}$. \medskip

\noindent Observe that no matter which branch of the ``if'' statement we enter in~\cref{alg:cdfpa-inner}, we necessarily pick $s_{i-1} \in [b_i, s_i]$ satisfying
\begin{equation}\label{eq:lipschitz-invert}
|(s_i-b_i)\deltafunc(s_{i-1},s_i) - \widehat{U}_i| \leq \delta
\end{equation}
where
$$\widehat{U}_i := \max\{(s_i-b_i)\deltafunc(b_i,s_i), \min\{U_i, (s_i-b_i)\deltafunc(s_i,s_i)\}\}$$
and similarly for $s_{i-1}'$. Indeed, if we enter the first branch, then we must have 

$$(s_i-b_i)\deltafunc(s_i,s_i) \leq U_i \ \  \text{ and thus } \ \ \widehat{U}_i = (s_i-b_i)\deltafunc(s_i,s_i).$$
This is because, due to the fact that $s_i \geq b_{i+1} > b_i$, we always have $$(s_i-b_i)\deltafunc(b_i,s_i) < (s_i-b_i)\deltafunc(s_i,s_i).$$
In that branch, $s_{i-1}$ is set to $s_i$ and thus we have $(s_i-b_i)\deltafunc(s_{i-1},s_i) = \widehat{U}_i$, which implies \eqref{eq:lipschitz-invert}. Similarly, if we enter the second branch, it follows that $\widehat{U}_i = (s_i-b_i)\deltafunc(b_i,s_i)$. Since in this branch we set $s_{i-1} = b_i$, we again have $(s_i-b_i)\deltafunc(s_{i-1},s_i) = \widehat{U}_i$, which implies \eqref{eq:lipschitz-invert}. Finally, if we enter the third branch, then it must be that $\widehat{U}_i = U_i$ and it is immediate that \eqref{eq:lipschitz-invert} holds. \medskip 

\noindent By \cref{lem:lipschitz-deltafunc} we know that $\deltafunc$ is $nL$-Lipschitz. Using \cref{lem:lipschitz-operations} it follows that the function $(s_i, U_i) \mapsto \widehat{U}_i$ is $(1+nL)$-Lipschitz, where we used that $|c - b_i| \leq 1$ and $|\deltafunc(c,c')| \leq 1$, for any $c, c' \in [0,1]$. It follows that $|\widehat{U}_i - \widehat{U}_i'| \leq (1+nL) M_i$. \medskip

\noindent Let $g$ be the function that maps $(x,y)$ to the solution $z \in [b_i,x]$ of the equation $(x-b_i)\deltafunc(z,x) = y$. In \cref{lem:lipschitz-invert-algo}, we show that $g$ is $(L \eps^{-4n} (b_{i+1}-b_i)^{-2n})$-Lipschitz. Furthermore, we know that $s_{i-1} = g(s_i,\tilde{U}_i)$ for some $\tilde{U}_i$ satisfying $|\tilde{U}_i - \widehat{U}_i| \leq \delta$. The same also holds for $s_{i-1}'$. Thus, we have that $s_{i-1} = g(s_i,\tilde{U}_i)$ and $s_{i-1}' = g(s_i',\tilde{U}_i')$, where $|s_i-s_i'| \leq M_i$ and
$$|\tilde{U}_i - \tilde{U}_i'| \leq |\widehat{U}_i - \widehat{U}_i'| + 2\delta \leq (1+nL) M_i + 2\delta \leq 2nL M_i.$$
As a result, we obtain that
$$|s_{i-1} - s_{i-1}'| \leq (L \eps^{-4n} (b_{i+1}-b_i)^{-2n}) 2nL M_i \leq 2n L^2 \eps^{-4n} (b_{i+1}-b_i)^{-2n} M_i.$$

\noindent Finally, observe that no matter which branch of the ``if'' statement we enter, we necessarily set
$$U_{i-1} := (s_{i-1} - b_i) \deltafunc(s_{i-1},s_i) + \max\{0, U_i - (s_i-b_i) \deltafunc(s_i,s_i)\}$$
and similarly for $U_{i-1}'$. By \cref{lem:lipschitz-deltafunc} we know that $\deltafunc$ is $nL$-Lipschitz. By \cref{lem:lipschitz-operations} it follows that the function $(s_{i-1},s_i) \mapsto (s_{i-1} - b_i) \deltafunc(s_{i-1},s_i)$ is $(1+nL)$-Lipschitz, where again we used that $|c - b_i| \leq 1$ and $|\deltafunc(c,c')| \leq 1$, for any $c,c' \in [0,1]$. Similarly, by \cref{lem:lipschitz-operations} it also follows that the function $(s_i, U_i) \mapsto \max\{0, U_i - (s_i-b_i) \deltafunc(s_i,s_i)\}$ is $(1+1+nL)$-Lipschitz. As a result, the function $(s_{i-1},s_i,U_i) \mapsto U_{i-1}$ is $3nL$-Lipschitz, since $2nL+3 \leq 3nL$. Thus, we obtain that
$$|U_{i-1} - U_{i-1}'| \leq 3nL \cdot 2n L^2 \eps^{-4n} (b_{i+1}-b_i)^{-2n} M_i = 6 n^2 L^3 \eps^{-4n} (b_{i+1}-b_i)^{-2n} M_i.$$
Using the fact that $b_{i+1}-b_i \geq \alpha$, it follows that $|U_{i-1} - U_{i-1}'| \leq M_{i-1}$ and $|s_{i-1} - s_{i-1}'| \leq M_{i-1}$, as desired.
\end{proof}

Before we prove the correctness of the algorithm, we state a final useful proposition, which identifies a set of sufficient conditions for the existence of an approximate equilibrium.

\begin{proposition}[Approximate Equilibrium Conditions]
    \label{prop:characterize-approx-equilibrium}
    Let $\gamma \geq 0$.
    Consider a symmetric strategy profile where all bidders use a strategy $\beta$, given by its canonical representation $0 = s_0 \leq s_1 \leq \dots \leq s_m = 1$. Furthermore, assume that there exist $U_0, U_1, \dots, U_m \in [0,1]$ such that for any $i \in [m]$ we have:
    \begin{conditions}[topsep=5pt,itemsep=0.5ex,partopsep=1ex,parsep=1ex]
        \item if $s_{i-1} < s_i$ then $\lvert u(b_i;s_i) - U_i \rvert \leq \gamma$ and $\lvert u(b_i;s_{i-1}) - U_{i-1} \rvert \leq \gamma$ \label{eq-cond:1}
        \item if $s_{i-1} = s_i$ then $U_{i-1} = U_i$ and $u(b_i; s_i) \leq U_i + \gamma$ \label{eq-cond:2}
        \item $s_{i-1} \geq b_i$ \label{eq-cond:3}
    \end{conditions}
    Then, this strategy profile constitutes a $2\gamma m$-BNE.
\end{proposition}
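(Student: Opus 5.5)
The plan is to exploit the fact that, by \cref{lem:utility-deltafunc}, each utility $u(b_l;v) = (v-b_l)\Delta_l$ with $\Delta_l := \deltafunc(s_{l-1},s_l)$ is an affine function of $v$. First I will note that the slopes are monotone: since $F$ is non-decreasing and $s_0\le \dots\le s_m$, the quantity $\deltafunc(x,y)$ is non-decreasing in both arguments, so $\Delta_1 \le \Delta_2 \le \dots \le \Delta_m$. Hence for $l<l'$ the difference $u(b_{l'};v)-u(b_l;v)$ is non-decreasing in $v$. This single-crossing property is what lets me transport a comparison made at a jump point to any value on the correct side of it. \Cref{eq-cond:3} is used only to guarantee that the profile is non-overbidding: a bidder with $v\in(s_{i-1},s_i]$ bids $b_i\le s_{i-1}<v$.

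Fix $v\in\support{F}$, say $v\in(s_{j-1},s_j]$ with the interval nonempty (the case $v=0$ is treated the same way, with $j=1$), so the bidder bids $b_j$. Take a deviation $b_k$ with $k>j$. Let $j=l_0<l_1<\dots<l_r<k$ be $j$ together with the indices of nonempty intervals strictly between $j$ and $k$, and set $l_{r+1}:=k$. For each consecutive pair $l=l_t$, $l'=l_{t+1}$, all intervals strictly between them are empty. Then $s_l=s_{l'-1}$, and chaining \cref{eq-cond:2} gives $U_l=U_{l'-1}$.

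Next I bound the two endpoints of the pair. Since interval $l$ is nonempty, \cref{eq-cond:1} gives $u(b_l;s_l)\ge U_l-\gamma$. For $b_{l'}$ I claim $u(b_{l'};s_{l'-1})\le U_{l'-1}+\gamma$ always. If interval $l'$ is nonempty this is \cref{eq-cond:1}; if it is empty it is \cref{eq-cond:2}, using $s_{l'-1}=s_{l'}$ and $U_{l'-1}=U_{l'}$. Therefore $u(b_{l'};s_l)-u(b_l;s_l)\le 2\gamma$. Since $v\le s_j\le s_l$, single crossing transports this to $u(b_{l'};v)-u(b_l;v)\le 2\gamma$. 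Telescoping over at most $m$ pairs yields $u(b_k;v)\le u(b_j;v)+2\gamma m$.

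The case $k<j$ is symmetric. I use $v> s_{j-1}\ge s_l$ for the relevant jump points and the reverse direction of single crossing. The two bounds are now $u(b_l;s_{l-1})\ge U_{l-1}-\gamma$ for the nonempty upper interval $l$ of each pair, and $u(b_{l''};s_{l''})\le U_{l''}+\gamma$ for the lower index $l''$ (by \cref{eq-cond:1} or \cref{eq-cond:2}), glued again by equality of the $U$'s across the empty intervals in between.

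The main obstacle I expect is the bookkeeping around empty intervals. There $u(b_l;s_l)$ has no lower bound, so naive consecutive telescoping $l\to l+1$ fails. Hence I telescope only between nonempty intervals (plus the endpoint $k$), and I need the observation that \cref{eq-cond:2} still supplies the one-sided upper bound I need for the deviating bid. I will also check that every step compares utilities at a jump point lying on the correct side of $v$, so that the slope monotonicity applies.
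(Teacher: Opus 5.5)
Your proof is correct and follows essentially the same route as the paper's. You telescope only through effective (nonempty-interval) bids, glue across empty intervals via Condition~2, and use the monotonicity of $\Delta(s_{l-1},s_l)$ in $l$ (single crossing of the affine utilities) to transport each jump-point comparison to the actual value $v$. The differences are presentational: you write out the upward-deviation case that the paper leaves as ``symmetric,'' and you use a single one-sided bound, $u(b_{l'};s_{l'-1}) \le U_{l'-1}+\gamma$ from Condition~1 or~2, for the deviating bid whether or not its interval is empty.
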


\begin{proof}
   To simplify notation, it is useful to distinguish between bids that are ``actually'' used by the strategy $\beta$, and those that are not.
   In this proof, the former, i.e., bids $b_i$ with $s_{i-1} < s_i$, will be called \emph{effective bids}.
   To begin with, consider two consecutive effective bids, $b_j$ and $b_{j'}$. This means that we have $s_{j-1} < s_j = s_{j'-1} < s_{j'}$ and $s_i = s_j$ for any $i$ satisfying $j < i < j'$.
   By \cref{eq-cond:2}, it must be that $U_j = U_{i} = U_{j'-1}$ for any $i$ with $j < i < j'$.
   Then, by \cref{eq-cond:1}, we get that $\left| u(b_j;s_j)-U_j \right| \leq \gamma \text{ and } \left| u(b_{j'};s_j)-U_j \right| \leq \gamma$.
   From this, we can derive that, for any consecutive effective bids $b_j$ and $b_{j'}$ it must be that:
   \begin{equation}\label{eq:bound-on-consecutive}
    \left| u(b_j;s_j)-u(b_{j'};s_j) \right| \leq 2\gamma
   \end{equation}
   Furthermore, for any non-effective bid $b_i$ lying between consecutive effective bids $b_j$ and $b_{j'}$, \cref{eq-cond:2} yields
   \begin{equation}\label{eq:bound-on-non-effective}
    u(b_i;s_i) - u(b_{j'};s_i) \leq 2\gamma \quad \text{ and } \quad u(b_i;s_i) - u(b_{j};s_i) \leq 2\gamma
   \end{equation}
   
   To show that $\beta$ is an approximate equilibrium, consider some value $v\in (s_{\ell-1},s_\ell]$, for some $\ell \in [m]$ with $s_{\ell-1} < s_\ell$.
   By definition of the strategies, we know that $\beta(v) = b_\ell$.
   Consider any deviation to some $b_i$; we will show that this cannot increase the bidder's utility by more than $2\gamma m$.
   First, consider the case where $i<\ell$.
   Define $j_+$ to be the smallest index that is greater than or equal to $i$ for which $b_{j_+}$ is an effective bid, i.e.
   $$
   j_+ \coloneq \min \{ j: i \leq j \wedge s_{j-1}<s_j\}
   $$
   Notice that, if $b_i$ is an effective bid, then $b_i=b_{j_+}$.
   Let $b_{r_1},b_{r_2},\ldots,b_{r_p}$ be the effective bids from $b_{j_+}$ to $b_\ell$ (inclusive).
   Here, we note that it could be that $p=1$, in which case $b_{j_+}=b_\ell$, meaning that $b_\ell$ is precisely the first effective bid that is greater than or equal to $b_i$.
   Then, we have:
   \begin{equation}\label{eq:diff-in-utilities}
    u(b_i;v)-u(b_\ell;v) = u(b_i;v) - u(b_{j_+};v) + \sum_{k=1}^{p-1} \left[u(b_{r_k};v) - u(b_{r_{k+1}};v)\right]
   \end{equation}
   Consider the difference $u(b_i;v) - u(b_{j_+};v)$.
   As we mentioned earlier, we can see that if $b_i$ is an effective bid, then $b_i=b_{j_+}$ and this difference is equal to $0$. If $b_i$ is not an effective bid, then $b_{j_+}$ is the closest larger bid that is effective and by \eqref{eq:bound-on-non-effective} we obtain that $u(b_i;s_i) - u(b_{j_+};s_i) \leq 2\gamma$.
   Using the definition of the utilities, we can rewrite the inequality as follows:
   \begin{align*}
   & \qquad\quad (s_i-b_i) \Delta(s_{i-1},s_i) - (s_i-b_{j_+}) \Delta(s_{j_+ -1},s_{j_+}) \leq 2\gamma \\
   & \Longrightarrow \quad s_i (\Delta(s_{i-1},s_i) - \Delta(s_{j_+ -1},s_{j_+})) - b_i \Delta(s_{i-1},s_i) + b_{j_+} \Delta(s_{j_+ -1},s_{j_+}) \leq 2\gamma \\
   & \Longrightarrow \quad v (\Delta(s_{i-1},s_i) - \Delta(s_{j_+ -1},s_{j_+})) - b_i \Delta(s_{i-1},s_i) + b_{j_+} \Delta(s_{j_+ -1},s_{j_+}) \leq 2\gamma \\
   & \Longrightarrow \quad u(b_i;v) - u(b_{j_+};v) \leq 2\gamma 
   \end{align*}
   where in the second step we used the fact that $s_i \leq v$ and $\Delta(s_{i-1},s_i) - \Delta(s_{j_+ -1},s_{j_+}) \leq 0$, since the winning probability when bidding $b_{j_+}$ is at least as large as when bidding $b_i$ (because $b_{j_+} \geq b_i$).

   Similarly, for any two consecutive effective bids $b_{r_k}$ and $b_{r_{k+1}}$, by \eqref{eq:bound-on-consecutive} we have $u(b_{r_k};s_{r_k}) - u(b_{r_{k+1}};s_{r_k}) \leq 2\gamma$. Using the same argument as above, we obtain that $u(b_{r_k};v) - u(b_{r_{k+1}};v) \leq 2\gamma$. For this, we use the fact that $s_{r_k} \leq v$ and the probability of winning when bidding $b_{r_{k+1}}$ is at least as large as when bidding $b_{r_k}$.

   Thus, looking at the sum of \eqref{eq:diff-in-utilities}, we see that each summand is bounded by $2\gamma$. 
   Hence, we obtain
   \begin{equation*}
    u(b_i;v)-u(b_\ell;v) \leq 2 \gamma p \leq 2 \gamma m
   \end{equation*}
   The case for $i>\ell$ follows by a symmetric argument.
   Therefore, we have shown that, for any value $v$, any deviation from the strategy $\beta$ cannot increase a bidder's utility by more than $2\gamma m$, meaning that $\beta$ is a $2\gamma m$-BNE.
\end{proof}

We finally prove the correctness of the algorithm. 

\begin{theorem}
\cref{alg:cdfpa-outer} outputs a symmetric $\eps$-BNE.
\end{theorem}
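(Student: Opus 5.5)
Let $U^\ell,U^r$ be the final values when the while loop of \cref{alg:cdfpa-outer} terminates, so $|U^r-U^\ell|\le\delta$. Let $\vec{s}^\ell=\compstrat(U^\ell,\delta)$, and let $\vec{s}^r$ be the stored vector, which equals $\compstrat(U^r,\delta)$. By \cref{lem:cdfpa-algo-endpoints} we have $s^\ell_0=0<s^r_0$. By \cref{lem:cdfpa-algo-lipschitz}, $s^r_0\le (6n^2L^3\eps^{-4n}\alpha^{-2n})^m\delta$. With the chosen $\delta=(\eps^{5n}\alpha^{3n}/(100n^3L^4))^m$, this bound is at most $(6/100)^m(\eps^{n}\alpha^{n}/(nL))^m$, which is tiny: much smaller than $\eps/m$ and than $\alpha$ times any relevant constant. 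The plan is to verify that the output $\vec{s}^*=(0,s^r_1,\dots,s^r_m)$ satisfies the conditions of \cref{prop:characterize-approx-equilibrium} with $\gamma=\eps/(2m)$. We use the $U_i$ from the run of $\compstrat(U^r,\delta)$ for $i\ge1$, and choose $U_0$ suitably.

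For $i\ge 2$ the conditions follow from the three branches of \cref{alg:cdfpa-inner}; recall $u(b_i;v)=(v-b_i)\deltafunc(s_{i-1},s_i)$ by \cref{lem:utility-deltafunc}. Condition~\eqref{eq-cond:3} holds because every branch picks $s_{i-1}\in[b_i,s_i]$. In the ``if'' branch $s_{i-1}=s_i$, $U_{i-1}=U_i$ and $u(b_i;s_i)\le U_i$, which is \cref{eq-cond:2}. In the ``else'' branch, $|u(b_i;s_i)-U_i|\le\delta$ by the binary search guarantee, and $U_{i-1}=u(b_i;s_{i-1})$ exactly, which is \cref{eq-cond:1}. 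The ``else if'' branch is the delicate case: there $u(b_i;s_i)\ge U_i$ and $s_{i-1}=b_i$, $U_{i-1}=0=u(b_i;b_i)$, but $u(b_i;s_i)$ may exceed $U_i$ substantially. I expect this branch, if taken for some $i\ge2$, to force $s_0$ far from $0$ to fail. The argument goes as follows. Once $U_{i-1}=0$, subsequent steps with $s_{j}\ge b_{j+1}>b_j$ have $(s_j-b_j)\deltafunc(b_j,s_j)>0$. Hence every later step is again the ``else if'' branch, giving $s_0=b_1=0$. This contradicts $s^r_0>0$. So for $\vec{s}^r$ the ``else if'' branch is never taken at steps $i\ge 2$, and only possibly at $i=1$, where it would give $s_0=0$, again a contradiction. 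Thus it never occurs.

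It remains to handle $i=1$, which is where we replace $s^r_0$ by $0$; this is the main obstacle. Bid $b_1=0$ gives utility $v\,\deltafunc(s_0,s_1)$. Set $U_0:=0=u(b_1;0)$ under the modified strategy. We need $|u(b_1;s_1)-U_1|\le\gamma$ with the new $s_0=0$. From the run we know $|s_1\deltafunc(s^r_0,s_1)-U_1|\le\delta$ (or equality, in the ``if'' branch, if $s^r_0=s_1$; but then $s_1$ is tiny and everything is tiny). Changing $s_0$ from $s^r_0$ to $0$ changes $\deltafunc$ by at most $nL\,s^r_0$ by \cref{lem:lipschitz-deltafunc}. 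Hence the error is at most $\delta+nLs^r_0\le\gamma$ by the choice of $\delta$. If $s_1$ itself equals $0$ after the modification, \cref{eq-cond:2} is needed instead, which again follows from smallness. Note that \cref{eq-cond:3} holds trivially since $b_1=0$.

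Finally we check that all intermediate $U_i\in[0,1]$, which is clear from the construction. Applying \cref{prop:characterize-approx-equilibrium} then yields a $2\gamma m=\eps$-BNE, and since the cdf was made $\eps$-strongly increasing via \cref{lem:wlog-strongly-increasing}, this completes the argument. The strong increase (built into the Lipschitz constant of the inverse map $g$ from \cref{lem:lipschitz-invert-algo}) is used only implicitly, through \cref{lem:cdfpa-algo-lipschitz}.
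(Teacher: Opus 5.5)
Your proposal follows the paper's proof essentially step for step: the endpoint lemma and the continuity lemma make $s^r_0$ tiny; the ``else if'' branch is excluded because it propagates down to $s_0=0$; the conditions of \cref{prop:characterize-approx-equilibrium} are checked branch by branch for $i\ge 2$; and $i=1$ is handled via the $nL$-Lipschitzness of $\deltafunc$, with error $\delta+nLs^r_0\le\gamma$.

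The one step to tighten is $i=1$. You treat two side cases, $s^r_0=s^r_1$ via the ``if'' branch and $s_1=0$, by saying ``everything is tiny,'' and that justification does not hold.
\begin{itemize}
\item In the ``if'' branch you only know $s_1\deltafunc(s_1,s_1)\le U_1$, not equality.
\item $U_1$ need not be small; for $U=1$, for instance, every step takes the ``if'' branch and $U_1=1$.
\end{itemize}
So condition~\eqref{eq-cond:1} with $U_0=0$, or condition~\eqref{eq-cond:2}, would not follow from $s_1$ being small.

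These cases cannot occur, though. \cref{alg:cdfpa-inner} always picks $s_{i-1}\in[b_i,s_i]$, so $s^r_1\ge b_2\ge\alpha$, while your own bound gives $s^r_0<\alpha$. Hence $s^r_0<s^r_1$, and since you have already excluded the ``else if'' branch, the ``else'' branch must have been taken at $i=1$. The paper closes this step in exactly that way, via $s^r_0\le\alpha/2<b_2\le s^r_1$. With that one-line replacement your argument is complete.
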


\begin{proof}
At the end of the algorithm, we have that $|U^r - U^\ell| \leq \delta$ (for $\delta$ fixed as in the preamble of~\cref{alg:cdfpa-outer}). As a result, by \cref{lem:cdfpa-algo-lipschitz} we must have
$$|s^\ell_0 - s^r_0| \leq (6n^2 L^3 \eps^{-4n} \alpha^{-2n})^{m} \delta$$
where $s^\ell = \compstrat(U^\ell,\delta)$ and $s^r = \compstrat(U^r,\delta)$. Furthermore, by \cref{lem:cdfpa-algo-endpoints}, we have $s^\ell_0 = 0$ and $s^r_0 > 0$. Thus, we have
$$0 < s^r_0 \leq (6n^2 L^3 \eps^{-4n} \alpha^{-2n})^{m} \delta.$$
In particular, by the choice of $\delta$, we have $s^r_0 \leq \alpha/2$ and $s^r_0 \leq \eps/4mnL$. Furthermore, we also have $\delta \leq \eps/4m$.\medskip

\noindent We first argue that $(s^r_i-b_i)\deltafunc(b_i,s^r_i) < U_i$ for all $i \in [m]$. Indeed, assume that there exists some $i \in [m]$ with $(s^r_i-b_i)\deltafunc(b_i,s^r_i) \geq U_i$. Since $s^r_i > b_i$ and $\deltafunc(\cdot, s^r_i)$ is strictly increasing, it follows in particular that $(s^r_i-b_i)\deltafunc(s^r_i,s^r_i) > U_i$. As a result, in the $i$th step of the for loop in subroutine \compstrat, the ``else if'' branch is taken and thus we have $s^r_{i-1} = b_i$ and $U_{i-1} = 0$. From this it then follows that $s^r_0 = 0$, a contradiction.\medskip

\noindent Let $\beta$ denote the bidding strategy represented by $\vec{s} := (0, s^r_1, \dots, s^r_m)$. Note that this is the output of the algorithm. We will show that it is an $\eps$-BNE. For any $i \in \{2,\dots,m\}$ with $s^r_{i-1} = s^r_{i}$, we must have $U_{i-1} = U_{i}$ and $(s^r_i-b_i)\deltafunc(s^r_i,s^r_i) \leq U_i$. For any $i \in \{2,\dots,m\}$ with $s^r_{i-1} < s^r_i$, by the previous paragraph it must be that $|(s^r_i-b_i)\deltafunc(s^r_{i-1},s^r_i) - U_i| \leq \delta \leq \eps/2m$ and $(s^r_{i-1}-b_i)\deltafunc(s^r_{i-1},s^r_i) = U_{i-1}$.\medskip

\noindent We have $s^r_0 \leq \alpha/2 < b_2 \leq s_1^r$, i.e., $s^r_0 < s_1^r$, and thus $|(s^r_1-b_1)\deltafunc(s^r_0,s^r_1) - U_1| \leq \delta$. By \cref{lem:lipschitz-deltafunc} we know that $\deltafunc$ is $nL$-Lipschitz. As a result, we have $|(s^r_1-b_1)\deltafunc(0,s^r_1) - U_1| \leq \delta + nLs^r_0 \leq \eps/2m$. We have thus shown that $\vec{s} = (0, s^r_1, \dots, s^r_m)$ satisfies the conditions of \cref{prop:characterize-approx-equilibrium} with $\gamma = \eps/2m$. As a result, $\vec{s}$ is an $\eps$-BNE.
\end{proof}

\section{Continuous First-Price Auction with Continuous Bids (CCFPA)}\label{sec:ccfpa}

We now turn to the study of the CCFPA, where both the value space and the bidding space are continuous. From seminal works in the economics literature~\citep{riley1981optimal,MW82,Vickrey1962}, we know\footnote{For a detailed derivation, the reader is referred to the textbook of~\citet[Sec.~2.3]{krishna2009auction}} that the following (symmetric) bidding strategy constitutes an (exact) BNE (\cref{def:BNE}) of the CCFPA:
\begin{equation}
    \label{eq:CCFPA-IID-BNE-beta}
    \beta^*(v) \coloneqq v-\int_{\underline{v}}^v \frac{F^{n-1}(t)}{F^{n-1}(v)} \,\mathrm{d} t \qquad\qquad\text{for all}\;\; v \in [\underline{v},1].
\end{equation}
In the following, we will refer to $\beta^*$ as the \emph{canonical equilibrium
strategy} of the CCFPA. Observe that the bidding function $\beta^*$ is
(absolutely) continuous and nondecreasing in $[\underline{v},1]$,
\emph{strictly} increasing in the support $\support{F}$ of the value distribution $F$, and it also satisfies no-overbidding (that is, $\beta^*(v)\leq v$ for all
$v\in[\underline{v},1]$).\footnote{For a more detailed discussion of the
analytic properties of $\beta^*$ see, e.g., the textbooks of
\citet[Sec.~3.1.1]{Menezes2005} or \citet[Sec.~2.3]{krishna2009auction},
and the paper of~\citet[Sec.~6.2]{fghk2025}.} Purely for technical
convenience,\footnote{\label{footnote:support}This is inconsequential from a Bayesian equilibrium analysis
perspective (see~\cref{def:BNE}), since a bidding strategy need only be defined in the value
distribution's support.} we can extend $\beta^*$ to be defined on the entirety
of $V=[0,1]$, by setting $\beta^*(v)=v$ for $v\in[0,\underline{v}]$.  Note that this
extension preserves the properties of continuity, monotonicity, and
no-overbidding of $\beta^*$. 

\subsection{Black-Box Input}\label{sec:black-box}

We begin our study of the computational complexity of the CCFPA, by considering the more general, black-box input for the item value distributions (see~\cref{sec:represent}).
We fully determine the query complexity of finding (approximate) equilibria in first-price auctions with continuous bidding spaces, by providing matching upper and lower bounds, in \cref{th:black-box-upper} and \cref{th:black-box-lower}, respectively. We first present our upper bound in the following theorem:

\begin{theorem}[Black-box Upper Bound] \label{th:black-box-upper}
An $\varepsilon$-BNE of a CCFPA can be computed via $O(1/\varepsilon)$ oracle calls to the cdf of its value distribution. More precisely, given $\varepsilon>0$, there exists a continuous, no-overbidding and monotone, (symmetric) bidding strategy $\beta:V\map B$ such that:
\begin{itemize}
    \item $\beta$ is an $\varepsilon$-approximate equilibrium of the CCFPA  (\cref{def:BNE});
    \item $\beta$ is $\varepsilon$-near the canonical equilibrium strategy $\beta^*$~\eqref{eq:CCFPA-IID-BNE-beta}, i.e., $\cards{\beta^*(x)-\beta(x)}\leq \varepsilon$ for all $x\in V$; and
    \item given any $x\in V$, $\beta(x)$ can be computed by performing (at most) $\frac{1}{\varepsilon}+1$ queries to the cdf $F$ of the CCFPA's distribution of item values and $O(\frac{n}{\varepsilon})$ arithmetic operations (additions, multiplications, and divisions).
\end{itemize}
\end{theorem}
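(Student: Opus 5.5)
We will approximate the integral in \eqref{eq:CCFPA-IID-BNE-beta} by an upper Riemann sum whose grid is uniform except that its last point is $x$ itself, and then take $\beta$ to be this approximation. Rewrite $\beta^*(x)=x-\int_0^x G(t)/G(x)\,\mathrm{d}t$ with $G:=F^{n-1}$; since $G=0$ on $[0,\underline{v}]$, the lower limit may be replaced by $0$. Fix the grid $a_k=k\varepsilon$ for $k=0,\dots,K$, with $K=\lfloor 1/\varepsilon\rfloor$. Given $x$, let $k_x$ be the largest index with $a_{k_x}<x$, and set $\hat a_k(x)=a_k$ for $k\le k_x$ and $\hat a_{k_x+1}(x)=x$. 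Because $G$ is nondecreasing, we have
\[
\sum_{k=0}^{k_x}(\hat a_{k+1}-\hat a_k)G(\hat a_k)\le \int_0^x G \le \sum_{k=0}^{k_x}(\hat a_{k+1}-\hat a_k)G(\hat a_{k+1}).
\]
We define $\beta(x):=x-\frac{1}{G(x)}\sum_{k=0}^{k_x}(\hat a_{k+1}-\hat a_k)G(\hat a_{k+1})$ when $G(x)>0$, and $\beta(x)=x$ otherwise. Computing this needs the values $F(a_1),\dots,F(a_{k_x}),F(x)$, which is at most $1/\varepsilon+1$ queries. Each term costs $O(n)$ arithmetic operations for the powers, giving $O(n/\varepsilon)$ in total.

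\textbf{Step 1: closeness and no-overbidding.} The gap between the upper and lower sums telescopes to at most $\varepsilon\,(G(x)-G(0))\le\varepsilon\, G(x)$. After dividing by $G(x)$, this gives $0\le\beta^*(x)-\beta(x)\le\varepsilon$, i.e.\ $\beta\le\beta^*$. Moreover, the upper sum is at most $x\,G(x)$ because every $G(\hat a_{k+1})\le G(x)$. Hence $\beta(x)\ge 0$, and $\beta(x)\le\beta^*(x)\le x$.

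\textbf{Step 2: monotonicity and continuity.} On an interval $x\in(a_j,a_{j+1}]$ the index $k_x=j$ is fixed. Write $S_j$ for the contribution of the terms with $k<j$, which is fixed on this interval, and the last term is $(x-a_j)G(x)$. This gives
\[
\beta(x)=x-\frac{S_j}{G(x)}-(x-a_j)=a_j-\frac{S_j}{G(x)},
\]
which is nondecreasing and continuous in $x$ on the interval. At $x=a_{j+1}$ this expression equals $a_{j+1}-S_{j+1}/G(a_{j+1})$, because $S_{j+1}=S_j+\varepsilon G(a_{j+1})$. So the pieces glue continuously. Near $\underline{v}$ we get continuity from $\beta\to\beta^*$ together with the closeness bound, or alternatively by directly checking $S_j/G(x)\to$ the appropriate limit. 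This is the reason for taking the upper sum with a variable last point: with the lower sum, the last term $(x-a_j)G(a_j)/G(x)$ breaks monotonicity.

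\textbf{Step 3: approximate equilibrium (main obstacle).} Fix a value $v$ and a deviation bid $b$. Since $\beta$ is continuous and nondecreasing, the winning probability of bid $b$ against $\beta$ is $G(\beta^{-1}(b))$, with a suitable generalized inverse; ties occur with probability zero when $\beta$ is strictly increasing on the support, and flat parts can be handled by an upper-semicontinuity argument. The key estimate is to compare with $\beta^*$. Because $\beta\le\beta^*\le\beta+\varepsilon$, any bid $b$ that wins against $\beta^*$-opponents with value below $w$ also wins against $\beta$-opponents with value below $w$. Consequently, the deviation utility under $\beta$ is at most the utility of bidding $b+\varepsilon$ under $\beta^*$, plus $\varepsilon$. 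That in turn is at most $u^*(\beta^*(v);v)+\varepsilon$, because $\beta^*$ is an exact BNE. Conversely, bidding $\beta(v)\le\beta^*(v)$ under $\beta$ wins at least against all opponents with value below $v$, so it pays at most $\beta^*(v)$ plus nothing lost. Therefore $u(\beta(v);v)\ge (v-\beta(v))G(v)\ge(v-\beta^*(v))G(v)=u^*(\beta^*(v);v)$. Chaining these inequalities gives an $O(\varepsilon)$ gap. To obtain exactly $\varepsilon$ we rescale, running the construction with $\varepsilon/2$, which changes only constants in the query count. I expect the delicate points to be the tie and flat-region handling and making the comparison precise via $\beta^{-1}(b)\le\beta^{*-1}(b+\varepsilon)$.
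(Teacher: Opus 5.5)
Your $\beta$ is, up to grid conventions, the paper's \emph{lower} sum $L(x)$ of $g_x$ (right endpoints of $G=F^{n-1}$). So $\beta^*-\varepsilon\le\beta\le\beta^*$, the mirror image of the paper's choice $\beta=U(x)$, which satisfies $\beta^*\le\beta\le\beta^*+\varepsilon$. The route is viable. The paper bounds a deviation $b$ through $\beta(z)\le b$ and $\beta^*\le\beta$, and loses $\varepsilon$ on the own bid. You lose $\varepsilon$ on the deviation side via $b\mapsto b+\varepsilon$ and then invoke exact optimality of $\beta^*$. However, you overlooked what your choice does at the bottom of the support. If $a_j\le\underline{v}<a_{j+1}$, then $S_j=\varepsilon\sum_{k=1}^{j}G(a_k)=0$, so your formula gives $\beta\equiv a_j$ on $(\underline{v},a_{j+1}]$. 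There is therefore always a flat cell of positive probability; for uniform $F$, $\beta\equiv 0$ on $(0,\varepsilon]$. The paper's choice avoids this, since it has $\beta(x)=x$ on the bottom cell. Your stated reason for rejecting the left-endpoint sum is also wrong: on each cell it equals $(x-a_j)g_x(a_j)+\varepsilon\sum_{l<j}g_x(a_l)$, a combination of nonnegative nondecreasing functions of $x$, so it is monotone too.

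The flat cell breaks two of your steps.

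First, with your convention $\beta(x)=x$ on $\{G=0\}$, you get $\beta(\underline{v})=\underline{v}>a_j=\lim_{x\downarrow\underline{v}}\beta(x)$ whenever $\underline{v}$ is not a grid point. So $\beta$ is neither continuous nor monotone on $V$, contrary to what the theorem requires. The closeness bound only caps this jump at $\varepsilon$; it does not remove it. You can repair this within the query budget by setting $\beta(x)=\min\{x,a_{j_0}\}$ on $\{F=0\}$, where $j_0=\max\{k:F(a_k)=0\}$.

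Second, ties now occur with positive probability, so your bound $u(\beta(v);v)\ge (v-\beta(v))G(v)$ is false on the flat cell. For uniform $F$, $n=2$, $v=\varepsilon$, the left side is $\varepsilon^2/2$ and the right side is $\varepsilon^2$. Your remark about flat parts only addresses the deviation side, where ties are harmless anyway. On the flat cell you need a separate argument: all opponents bid at least $a_j$ almost surely, so any deviation earns at most $v-a_j\le\varepsilon$, while the own utility is nonnegative. Above the cell, $S_j>0$ makes $\beta$ strictly increasing on the support, and your chain goes through.

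Finally, that chain already yields exactly $\varepsilon$. For $b\le v$,
$u(b;v)\le (v-b)\Pr[\beta(Y)\le b]\le (v-b)\Pr[\beta^*(Y)\le b+\varepsilon]\le u^*(b+\varepsilon;v)+\varepsilon$.
So drop the rescaling to $\varepsilon/2$; it is unnecessary and would violate the stated bound of $\frac{1}{\varepsilon}+1$ queries.
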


\begin{proof}
Fix an $\varepsilon>0$. 
    Let $K\coloneqq
    \left\lceil\frac{1}{\varepsilon}\right\rceil$ and $\hat{\varepsilon}\coloneqq
    \frac{1}{K}$. Note that $0<\hat{\varepsilon}\leq\varepsilon$. We partition the value space $V=[0,1]$
    in $\hat{\varepsilon}$-length subintervals, by defining the discretization points
    $a_j\coloneqq j\cdot \hat{\varepsilon}$, for $j=0,1,\dots, K$. 
    
    For any $x\in V$, we define the integer
    \begin{equation}
        \label{eq:CCFPA-FPTAS-black-box-k-def}
        {k_x} \coloneqq \left\lfloor\frac{x}{\hat{\varepsilon}}\right\rfloor=\left\lfloor x K\right\rfloor,
    \end{equation}
    and the function $g_x:[0,x]\map[0,1]$ by
    \begin{equation}
        \label{eq:CCFPA-black-box-g-x-def}
        g_x(t)\coloneqq
        \begin{cases}
            1, &\text{if}\;\; 0\leq x \leq \underline{v},\\
            1-\frac{F^{n-1}(t)}{F^{n-1}(x)}, &\text{if}\;\; \underline{v}< x \leq 1.
        \end{cases}
    \end{equation}
    Note that $g_x$ is well-defined, since $F(x)>0$ for all $x>\underline{v}$.
    Furthermore, it is straightforward to see that every function $g_x$ is
    continuous and nonincreasing, since $F^{n-1}(t)$ is nondecreasing for
    $t\in V$. For the case of $x>\underline{v}$, in particular, it holds that
    $g_x(0)=g_x(\underline{v})=1$ and $g_x(x)=0$; that is, $g_x$ is constantly
    equal to $1$ within interval $[0,\underline{v}]$, and then continuously and
    monotonically decreases to $0$ within interval $[\underline{v},x]$. As a
    matter of fact, since the cdf of the value distribution $F$ is strictly
    increasing in its support, for $x>\underline{v}$ we can also derive that
    $g_x$ is strictly decreasing in $\support{F}$. We summarize these
    observations in the following:
    \begin{property}
        \label{prop:monotonicity-g-x-wrt-t}
        For all $x>\underline{v}$, function $g_x:[0,x]\map[0,1]$ defined
        in~\eqref{eq:CCFPA-black-box-g-x-def} is:  continuous; onto; \emph{strictly}
        decreasing in $\support{F}$; constant in $[0,x]\setminus{\support{F}}$. For $x\leq
        \underline{v}$, function $g_x$ is constantly equal to $1$.
    \end{property}

    Using functions $g_x$, the canonical equilibrium
    strategy~\eqref{eq:CCFPA-IID-BNE-beta} of our CCFPA can be now simply
    expressed as 
    $$
    \beta^*(x) = \int_{0}^x g_x(t)\, \mathrm{d} t
    \qquad\text{for all}\;\; x\in V.
    $$
    Therefore, by defining the lower and upper Riemann sums of function $g_x$,
    with respect to the discretization $0=a_0<a_1<a_2<\dots<a_{{k_x}}\leq x$, i.e.
    \begin{align}
        L(x)\coloneqq & \sum_{j=1}^{{k_x}+1} [\hat{a}_j(x)-\hat{a}_{j-1}(x)]\cdot g_x\left(\hat{a}_{j}(x)\right) \label{eq:CCFPA-black-box-lower-riemann}\\
        U(x)\coloneqq & \sum_{j=1}^{{k_x}+1} [\hat{a}_j(x)-\hat{a}_{j-1}(x)]\cdot g_x\left(\hat{a}_{j-1}(x)\right)  \label{eq:CCFPA-black-box-upper-riemann}\\
        \intertext{where}
        \hat{a}_j(x)\coloneqq &
        \begin{cases}
            a_j=j\hat{\varepsilon}, &\text{if}\;\; j=0,1,\dots,{k_x},\\
            x, &\text{if}\;\; j={k_x}+1,
        \end{cases}\label{eq:CCFPA-black-box-x-specific-discretization}
    \end{align}
    we can immediately deduce that 
    \begin{equation}
        \label{eq:CCFPA-black-box-beta-riemann-bounds}
        L(x)\leq \beta^*(x) \leq U(x) \qquad\text{for all}\;\; x\in V.
    \end{equation}
    
    In the remainder of the proof, we will show that function $U$
    from~\eqref{eq:CCFPA-black-box-upper-riemann} is a valid bidding strategy,
    that satisfies the desired properties of the statement of our theorem. That
    is, we will be using   
    \begin{equation}
        \label{eq:CCFPA-black-box-FPTAS-tilde-beta-def}
        \beta(x)\coloneqq U(x) 
        \underset{\eqref{eq:CCFPA-black-box-x-specific-discretization}}{\overset{\eqref{eq:CCFPA-black-box-upper-riemann}}{=}}
        (x-a_{k_x})g_x(a_{k_x})+\hat{\varepsilon}\sum_{j=0}^{{k_x}-1} g_x(a_j),
    \end{equation}
    for all $x\in V=[0,1]$.

    \paragraph{No-overbidding.} From~\cref{prop:monotonicity-g-x-wrt-t}, we know
    that $0\leq g_x(t)\leq 1$ for all $x\in V$, $t\in[0,x]$. Thus, since
    $x-a_{k_x}=x-k_x\hat{\varepsilon}\overset{\eqref{eq:CCFPA-FPTAS-black-box-k-def}}{\geq}
    0$ all $x\in[0,1]$, from~\eqref{eq:CCFPA-black-box-FPTAS-tilde-beta-def} we
    can upper-bound $\beta$ by:
    $
    \beta(x)\leq (x-k_x\hat{\varepsilon})\cdot 1 + \hat{\varepsilon}\cdot k_x \cdot 1 =x
    $, for any $x\in V$.

    \paragraph{Approximation to $\beta^*$.} Because of~\eqref{eq:CCFPA-black-box-beta-riemann-bounds} we have that
    \begin{align*}
        \label{eq:CCFPA-FFPTAS-black-box-beta-tilde-approximation}
        0 &\leq \beta(x) - \beta^*(x) \\
        &\leq U(x)-L(x)\\
        &= 
        \sum_{j=1}^{{k_x}+1} [\hat{a}_j(x)-\hat{a}_{j-1}(x)]\left[ g_x\left(\hat{a}_{j-1}(x)\right) - g_x\left(\hat{a}_{j}(x)\right) \right], &&\text{due to \eqref{eq:CCFPA-black-box-lower-riemann} and \eqref{eq:CCFPA-black-box-upper-riemann},}\\
        &\leq \sum_{j=1}^{{k_x}+1} \hat{\varepsilon} \left[ g_x\left(\hat{a}_{j-1}(x)\right) - g_x\left(\hat{a}_{j}(x)\right) \right], &&\text{due to \eqref{eq:CCFPA-black-box-x-specific-discretization},}\\
        &= \hat{\varepsilon} \left[ g_x\left(\hat{a}_{0}(x)\right) - g_x\left(\hat{a}_{{k_x}+1}(x)\right) \right]\\
        &= \hat{\varepsilon} \left[ g_x\left(0\right) - g_x\left(x\right) \right], &&\text{due to \eqref{eq:CCFPA-black-box-x-specific-discretization},}\\
        &\leq \hat{\varepsilon} \left[ 1 - 0 \right]\\
        &= \varepsilon.
    \end{align*}    
    
    \paragraph{Continuity and monotonicity.} We will show that function
    $\beta:V\map B$, as defined above
    by~\eqref{eq:CCFPA-black-box-FPTAS-tilde-beta-def}, is a continuous,
    nondecreasing function (in the entirety of $V$); and \emph{strictly}
    increasing in $\support{F}$. To do that, we
    first state and prove the following property, regarding the continuity and
    monotonicity of $g_x(t)$ with respect to $x$:  
    \begin{property}
        \label{claim:g-x-increasing-wrt-x}
        For any $t\in V$, quantity $g_x(t)$ is continuous and nondecreasing with
        respect to $x\in[t,1]$. Furthermore, if $t>\underline{v}$, quantity $g_x(t)$
        is \emph{strictly} increasing for $x\in[t,1]\inters \support{F}$.
    \end{property}
    \begin{nestedproof}[Proof of~\cref{claim:g-x-increasing-wrt-x}.]
        First, observe that, if $t\leq \underline{v}$ then $F(t)=0$; so, no matter
        whether $x\leq \underline{v}$ or $x> \underline{v}$, both branches
        in~\eqref{eq:CCFPA-black-box-g-x-def} give us that $g_x(t)=1$ for all $x\geq
        t$, trivially satisfying~\cref{claim:g-x-increasing-wrt-x}.
        
        For the remaining case of $\underline{v}<t\leq x$, we know that we are in
        the second branch of~\eqref{eq:CCFPA-black-box-g-x-def}; that is,
        $g_x(t)=1-\frac{F^{n-1}(t)}{F^{n-1}(x)}$. Also, it must be that $F(t)>0$,
        therefore from the (continuity and) monotonicity of the cdf $F$, we can
        immediately deduce that $\frac{F^{n-1}(t)}{F^{n-1}(x)}$ is continuous with
        respect to $x$, and in particular: constant outside the support of $F$; and
        strictly decreasing in the support of $F$.
    \end{nestedproof}
    
    Suppose that the support of the value distribution $F$ starts in the $j^*$-th
    subinterval of our $\hat{\varepsilon}$-partition of V. That is, let
    $j^*\in [K]$ be such that $\underline{v}\in [a_{j^*-1},a_{j^*})$. Then, for
    any $x\in[0,a_{j^*})$ it must be that $a_j\leq \underline{v}$ for all
    $j=0,1,\dots,k_x$, and thus (utilizing~\cref{prop:monotonicity-g-x-wrt-t})
    expression~\eqref{eq:CCFPA-black-box-FPTAS-tilde-beta-def} gives us in this
    case that $\beta(x) = (x-k_x\hat{\varepsilon})\cdot 1
    +\hat{\varepsilon} \sum_{j=0}^{k_x-1} 1 = x$. Therefore, within
    $[0,a_{j^*})$, $\beta$ is the identity function, and is indeed
    continuous and strictly increasing. In particular, note that it must be that
    $\beta(\underline{v})= \underline{v}$.
    
    Now we move to the remaining case of $x\in [a_{j^*},1]$. In this case, first
    observe that it is $a_{k_x}>\underline{v}$. Furthermore,
    function $k_x$ (defined in~\eqref{eq:CCFPA-FPTAS-black-box-k-def}) remains
    constant within any interval $[a_j,a_{j+1})$, with
    $a_{k_x}=a_j$. Therefore, utilizing
    \cref{claim:g-x-increasing-wrt-x}, we can deduce that (i) quantity
    $x-a_{k_x}=x-a_j$ is continuous, nonnegative, and strictly
    increasing with respect to $x\in[a_j,a_{j+1})$; (ii) quantity
    $g_x(a_{k_x})=g_x(a_j)$ is continuous, nonnegative, and
    nondecreasing with respect to $x\in[a_j,a_{j+1})$; in particular, its
    monotonicity is \emph{strict} for $x\in[a_j,a_{j+1})\inters \support{F}$; and (iii)
    quantity $\hat{\varepsilon}\sum_{j=0}^{{k_x}-1}
    g_x(a_j)$ is continuous and nondecreasing with
    respect to $x\in[a_j,a_{j+1})$. From the above, by inspecting
    expression~\eqref{eq:CCFPA-black-box-FPTAS-tilde-beta-def}, we deduce that
    function $\beta$ is continuous and nondecreasing in all intervals
    $[a_j,a_{j+1})$, for $j=j^*,j^*+1,\dots,K-1$; and its monotonicity is strict
    within $\support{F}$.  
    
    Finally, to conclude the proof of the desired continuity and monotonicity
    properties for the bidding function $\beta$, it remains to be shown
    that $\beta$ is \emph{left}-continuous on points
    $a_1,a_2,\dots,a_K$; that is, on the jump points of function $k_x$
    (see~\eqref{eq:CCFPA-FPTAS-black-box-k-def}). Consider such a $j\in[K]$.
    Then, $k_{a_j}=j$ and so
    from~\eqref{eq:CCFPA-black-box-FPTAS-tilde-beta-def} we get that 
    $$
    \beta(a_j)= (a_j-a_j)g_{a_j}(a_j) + \hat{\varepsilon}\sum_{\ell=0}^{j-1} g_{a_j}(a_\ell)=\hat{\varepsilon}\sum_{\ell=0}^{j-1} g_{a_j}(a_\ell).
    $$
    On the other hand, for any $x\in(a_{j-1},a_j)$ it is $k_{x}=j-1$, and so, by
    making use of the continuity given by~\cref{claim:g-x-increasing-wrt-x},
    this time~\eqref{eq:CCFPA-black-box-FPTAS-tilde-beta-def} gives that 
    \begin{align*}
        \lim_{x\to a_j^{-}}\beta(x) 
        &=  \lim_{x\to a_j^{-}}\left(x-a_{j-1}\right)\cdot \lim_{x\to a_j^{-}}g_x\left(a_{j-1}\right)+\hat{\varepsilon}\sum_{\ell=0}^{j-2} \lim_{x\to a_j^{-}} g_x\left(a_{\ell}\right)\\
        &=\left(a_j-a_{j-1}\right) \cdot g_{a_j}(a_{j-1}) + \hat{\varepsilon}\sum_{\ell=0}^{j-2} g_{a_j}(a_\ell)\\
        &=\hat{\varepsilon} \cdot  g_{a_j}(a_{j-1}) + \hat{\varepsilon}\sum_{\ell=0}^{j-2} g_{a_j}(a_\ell)
        =\hat{\varepsilon}\sum_{\ell=0}^{j-1} g_{a_j}(a_\ell)\\
        &= \beta(a_j).
    \end{align*} 
    
    \paragraph{Approximate equilibrium.} 
    For the remainder of this proof, let $u(b;v)$ denote the expected (interim)
    utility of a player, when she has true value $v$ and submits bid $b$, while
    all other players bid according to the same bidding function
    $\beta$. To establish that $\beta$ is an $\varepsilon$-BNE,
    we need to show that, for any $v\in \support{F}$ and $b\in B=[0,1]$, it holds that 
    \begin{equation}
        \label{eq:CCFPA-FPTAS-approximate-BNE-to-show}
        u(b;v)-u(\beta(v);v) \leq \varepsilon.
    \end{equation}
    Since the value distribution $F$ is continuous, and $\beta$ is
    strictly increasing in the support of $F$, we can express these bidder
    utilities as:
    \begin{equation}
        \label{eq:bounds-utility-general-CCFPA-FPTAS}
        u(b;v) = \probs{\beta(Y)\leq b} (v-b) \qquad\text{for all}\;\; v\in \support{F},\; b\in B,
    \end{equation}
    where $Y$ is the random variable of the maximum of $n-1$ i.i.d.\ draws from
    the value distribution $F$.
    
    Now recall that, earlier in our proof when studying the monotonicity of
    $\beta$, we showed that
    $\beta(\underline{v})=\underline{v}$. Thus,
    $\beta(\support{F})\subseteq[\underline{v},1]$. In particular, this means that
    $\beta(Y)\geq \underline{v}$. So, for any $b<\underline{v}$ it is
    $\probs{\beta(Y)\leq b}=0$, and so from
    \eqref{eq:bounds-utility-general-CCFPA-FPTAS} we get that $u(b;v)=0$. This
    means that~\eqref{eq:CCFPA-FPTAS-approximate-BNE-to-show} is trivially
    satisfied for $b<\underline{v}$. 
    
    Therefore, it remains to be shown
    that~\eqref{eq:CCFPA-FPTAS-approximate-BNE-to-show} holds for all $b\in[
    \underline{v},1]$ as well. Indeed, fix such a $b\geq \underline{v}$ and let
    $z$ denote the value up to which a bidder cannot beat bid $b$; that is, we
    define
    \begin{equation}
        \label{eq:beta-tilde-inverse-def}
        z \coloneq \sup\ssets{v\in \support{F}\fwhs{\beta(v)\leq b}}.
    \end{equation}
    Note that the supremum above is well-defined, since 
    $\beta(\underline{v})=\underline{v}\leq b$. Also, due to the
    continuity of the function $\beta$, it must be that $\beta(z)
    \leq b$. Therefore, for any $v\in \support{F}$ we have that 
    \begin{align}
        u(b;v) &= \probs{Y\leq z} (v-b), &&\text{from~\eqref{eq:bounds-utility-general-CCFPA-FPTAS} and~\eqref{eq:beta-tilde-inverse-def}},\notag\\
        &= F^{n-1}(z) (v-b)\notag\\
        &\leq F^{n-1}(z) (v-\beta(z)), &&\text{since $\beta(z) \leq b$,}\notag\\
        &\leq F^{n-1}(z) (v-\beta^*(z)), &&\text{since $\beta^*\leq\beta$,}\notag\\
        &= F^{n-1}(z)(v-z)+\int_{\underline{v}}^z F^{n-1}(t) \,\mathrm{d} t, &&\text{from~\eqref{eq:CCFPA-IID-BNE-beta},}\label{eq:FPTAS-black-box-approx-BNE-helper-1}
    \end{align}
    and 
    \begin{align}
        u(\beta(v);v) &= \probs{\beta(Y)\leq \beta(v)} (v-\beta(v)), &&\text{due to~\eqref{eq:bounds-utility-general-CCFPA-FPTAS}},\notag\\
        &= \probs{Y\leq v} (v-\beta(v)), &&\text{since $\beta$ is strictly increasing in $\support{F}\ni v$},\notag\\
        &= F^{n-1}(v) (v-\beta(v))\notag\\
        &\geq F^{n-1}(v) (v-\beta^*(v)-\varepsilon), &&\text{since $\beta \leq \beta^*+\varepsilon$,}\notag\\
        &=\int_{\underline{v}}^v F^{n-1}(t) \,\mathrm{d} t -F^{n-1}(v)\varepsilon, &&\text{from~\eqref{eq:CCFPA-IID-BNE-beta}.}\label{eq:FPTAS-black-box-approx-BNE-helper-2}
    \end{align}
    
    Combining \eqref{eq:FPTAS-black-box-approx-BNE-helper-2} and~\eqref{eq:FPTAS-black-box-approx-BNE-helper-1} we finally get that
    \begin{align*}
        u(b;v) - u(\beta(v);v) 
        &\leq F^{n-1}(z)(v-z)+\int_{v}^z F^{n-1}(t) \,\mathrm{d} t +F^{n-1}(v)\varepsilon\\
        &\leq F^{n-1}(z)(v-z)+\int_{v}^z F^{n-1}(z) \,\mathrm{d} t +F^{n-1}(1)\varepsilon, &&\text{since $F$ is nondecreasing},\\
        &= F^{n-1}(z)(v-z)+(z-v) F^{n-1}(z) +\varepsilon,\\
        &=\varepsilon,
    \end{align*}
    establishing ~\eqref{eq:CCFPA-FPTAS-approximate-BNE-to-show}.

    \paragraph{Efficient computation.}\label{page:ccfpa-black-box-upper-running-time} 
    To be able to compute the value of
    $\beta(x)$, given an $x\in V$ in the input, we perform the
    following operations:
    \begin{itemize}
        \item First, as a pre-processing step, independent of the input $x$, we
        perform $K-1=\lceil 1/\varepsilon \rceil-1$ cdf queries, in order to
        find the values of $F(a_1),F(a_2),\dots,F(a_{K-1})$; note that, we
        already know that $F(a_0)=F(0)=0$ and $F(a_K)=F(1)=1$. Using these, we
        compute the values
        $\sset{F^{n-1}(a_j)}_{j=0,1,\dots,K}$ and store them in a look-up table,
        for easy future reference. This computation requires $O(Kn)=O(n/\varepsilon)$ arithmetic operations (multiplications).
        \item When we receive $x\in V$, first we perform just one additional
        cdf query, to get the value of $F(x)$.
        \begin{itemize}
            \item If $F(x)=0$, this means that $x\leq \underline{v}$, and thus we
            can immediately return the value $\beta(x)=x$.
            \item If $F(x)>0$, then $x>\underline{v}$, and we know that we are at
            the second branch of expression~\eqref{eq:CCFPA-black-box-g-x-def}.
            Therefore, from~\eqref{eq:CCFPA-black-box-FPTAS-tilde-beta-def}, the
            value of $\beta(x)$ can be expressed as 
            \begin{align}
                \beta(x) &= (x-a_{k_x})g_x(a_{k_x})+\hat{\varepsilon}\sum_{j=0}^{{k_x}-1} g_x(a_j) \notag\\
                &= (x-a_{k_x})\left[1-\frac{F^{n-1}(a_{k_x})}{F^{n-1}(x)}\right]+\hat{\varepsilon}\sum_{j=0}^{{k_x}-1} \left[1-\frac{F^{n-1}(a_j)}{F^{n-1}(x)}\right]\notag\\
                &= x-a_{k_x}-(x-a_{k_x})\frac{F^{n-1}(a_{k_x})}{F^{n-1}(x)}+k_x\hat{\varepsilon}-\frac{\hat{\varepsilon}}{F^{n-1}(x)}\sum_{j=0}^{{k_x}-1} F^{n-1}(a_j)\notag\\
                &= x - \frac{1}{F^{n-1}(x)}\left[\hat{\varepsilon}\sum_{j=0}^{{k_x}-1} F^{n-1}(a_j)+(x-k_x\hat{\varepsilon})F^{n-1}(a_{k_x}) \right]. \label{eq:computation-formula-approx-BNE-CCFPA}
            \end{align}
            This quantity can be computed \emph{exactly}, by
            simply computing the integer $k_x$
            from~\eqref{eq:CCFPA-FPTAS-black-box-k-def}, and then fetching the first
            $k_x+1$ values of $\ssets{F^{n-1}(a_j)}_j$ from our pre-processing look-up table,
            together with the value of $F^{n-1}(x)$ that we queried in this step. Therefore, the computation in~\eqref{eq:computation-formula-approx-BNE-CCFPA} requires at most $O(k_x+n)=O(n/\varepsilon)$ arithmetic operations (additions, multiplications, and a division).
        \end{itemize}
    \end{itemize}
\end{proof}

Next, we state our query complexity lower bound, which establishes that the corresponding upper bound of \cref{th:black-box-upper} is (asymptotically) the best possible:
\begin{theorem}[Black-box Lower Bound]
\label{th:black-box-lower}
    Any  algorithm\footnote{Note that our lower bound construction is purely information-theoretic, and does not rely on any assumptions for the algorithm's running time, or a specific computational model.} that computes a (symmetric, no-overbidding, continuous, and increasing) $\varepsilon$-approximate BNE of a CCFPA, needs to make at least $\varOmega(1/\varepsilon)$ cdf queries to its value distribution.
\end{theorem}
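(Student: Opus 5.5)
The plan is an adversary (indistinguishability) argument. We will exhibit a family of roughly $\Theta(1/\varepsilon)$ value distributions on $[0,1]$, all agreeing outside one of $\Theta(1/\varepsilon)$ disjoint subintervals of length $\Theta(\varepsilon)$. Any deterministic algorithm making $o(1/\varepsilon)$ cdf queries then leaves some subinterval unqueried. The adversary answers every query with the cdf of a base distribution $F_0$, and both $F_0$ and a perturbed $F_1$ (differing only inside that unqueried interval) are consistent with the transcript. The algorithm therefore outputs the same strategy $\beta$ for both, so it suffices to show that no single strategy is an $\varepsilon$-BNE for both $F_0$ and $F_1$. For a randomized algorithm the same argument works via Yao's principle with the uniform distribution over the perturbation location.

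For concreteness we take $n=2$ (larger $n$ only changes constants) and let $F_0$ be uniform on $[0,1]$. Partition $[1/2,1]$ into $K=\lfloor c/\varepsilon\rfloor$ intervals $I_k=[x_k,x_k+h]$ with $h=\Theta(\varepsilon)$. The perturbed $F_1^{(k)}$ moves mass inside $I_k$ only: it puts zero density on the left part of $I_k$ and double density on the right part, so that $F_1^{(k)}=F_0$ outside $I_k$, while $F_1^{(k)}\le F_0$ inside $I_k$ with a gap of order $h$ at the midpoint. The support of $F_1^{(k)}$ then has a gap of length $h/2$.

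The key step is to separate the two equilibrium sets. For $n=2$, an $\varepsilon$-BNE $\beta$ that is monotone, continuous and no-overbidding must be pinned down near the canonical $\beta^*_F(v)=v-\int_0^v F(t)\,\mathrm dt/F(v)$. The separation comes from deviations against a value just above the support gap. Under $F_1$, a bidder with value at the right end of the gap faces opponents whose bids on the gap are irrelevant: no mass lies there, so the bid levels $\beta(I_k^{\text{left}})$ carry no probability. Under $F_0$ those values do carry mass $\Theta(h)$, and the equilibrium condition at values in $I_k$ forces $\beta$ to increase across $I_k$ at slope about $1/2$, giving a rise of $\Theta(h)$. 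Under $F_1$, any bid in the gap range is dominated by the lowest bid beating all values below the gap. So if $\beta$ rises by $\Theta(h)$ across the gap, a bidder at the right end of the gap can shade her bid down by $\Theta(h)$ without losing any winning probability, gaining $F_1(x)\cdot\Theta(h)\ge \tfrac12\Theta(h)$. Choosing $h=C\varepsilon$ with $C$ a large constant makes this gain exceed $\varepsilon$. At the same time, $F_0$ forces that rise, again up to error $O(\varepsilon)$, via a local version of the envelope argument.

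\textbf{Main obstacle.} The hardest step is making the claim ``an $\varepsilon$-BNE for uniform $F_0$ must rise by $\ge C'h$ across $I_k$'' quantitative with constants that beat $\varepsilon$. Approximate equilibria are only constrained up to $O(\sqrt{\varepsilon})$-type slack in general, so a naive closeness bound to $\beta^*$ would force $h\gg\varepsilon$ and yield only $\Omega(1/\sqrt{\varepsilon})$ queries.

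To get the linear bound we will argue directly via utilities. Let $\beta$ be flat-ish on $I_k$, rising by $\delta\ll h$. Under $F_0$, a bidder with value $x_k$ could then bid $\beta(x_k+h)+\delta'$ and win with probability larger by $h$, while paying at most $\delta+\delta'$ more. Her net gain is about $h(x_k-\beta(x_k))-(\delta+\delta')$. Since $x_k-\beta(x_k)\ge$ a constant for $x_k\ge1/2$ (no-overbidding combined with the equilibrium forcing a margin), this gain is $\Omega(h)>\varepsilon$, a contradiction.

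If this direct route stalls, we will instead use two distributions differing by a global shift of mass. For example, compare $F_0$ with $F_1$ that moves an $\Theta(\varepsilon)$-mass chunk within the unqueried interval, and compare utilities at $v=1$: the equilibrium utility at the top value is $\int_0^1F$, which differs by $\Theta(h^2)$ only. This is why we prefer the local deviation argument above, and tuning it is where the real work lies.
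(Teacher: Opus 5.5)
Your plan is correct and is essentially the paper's proof. Both arguments take $n=2$, answer every query with the uniform cdf, and perturb the prior inside an unqueried interval so as to (nearly) empty its left part and create a support gap. An upward deviation under the uniform prior then forces $\beta$ to rise by roughly $h\,(x_k-\beta(x_k))-\varepsilon$ across the gap. A downward deviation at the right end of the gap under the perturbed prior caps that rise at $O(\varepsilon)$. Finally, no-overbidding gives a margin $x_k-\beta(x_k)=\Omega(1)$, and the contradiction follows. The differences are cosmetic: the paper builds the gap with a flat-then-steep cdf and lets $\xi\to 0$, and it locates the unqueried interval by adaptive pigeonhole rather than a fixed partition.

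Two small fixes for the write-up. First, aim the upward deviation at $\beta(x_k+h/2)$, the right end of the gap, rather than at $\beta(x_k+h)$, because the perturbed prior only caps the rise across the gap. Second, prove the margin explicitly as the paper does: bidding $x_k/2$ wins with probability at least $x_k/2$ by no-overbidding, which yields $x_k-\beta(x_k)\geq x_k/4-\varepsilon/x_k$.
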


\begin{proof}
    Fix a symmetric CCFPA setting with $n=2$ bidders and a continuous, strictly increasing value distribution (with cdf) $F$ over $V=[0,1]$. Let $\mathcal{A}$ be an arbitrary deterministic algorithm that has only oracle access to $F$. For our lower bound construction, let's fix that, whenever $\mathcal{A}$ submits to us a query $F(x)$, for $x\in V$, we will return $F(x)=x$ to it. In that way, even if $\mathcal A$ is an adaptive algorithm, after making $k$-many oracle calls to us, on points $x_1,x_2,\dots,x_k\in V$, it will receive the values
    \begin{equation}
        \label{eq:CCFPA-blackbox-lower-bound-pretend-uniform}
        F(x_1)=x_1,F(x_2)=x_2,\dots,F(x_k)=x_k.
    \end{equation}
    In other words, we let $\mathcal{A}$ ``believe'' that the underlying prior distribution is a uniform one; and we need to make sure that our actual prior $F$ is compatible with these returned queries. In the following, let $F_1(x)=x$, for all $x\in V$, denote the uniform distribution over the unit-interval $[0,1]$; obviously, $F_1$ satisfies this compatibility requirement~\eqref{eq:CCFPA-blackbox-lower-bound-pretend-uniform}.
    
    By the pigeonhole principle, after making $k$ queries, there has to exist a $\frac{1}{3\cdot(k+1)}$-length sub-interval of $[2/3,1]$ which is not ``hit'' by any of $\mathcal{A}$'s queries. That is, there exist $v_1,v_2\in[\frac{2}{3},1]$ such that 
    \begin{equation}
        \label{eq:CCFPA-blackbox-lower-bound-pigeonhole}
        v_2-v_1=\delta \coloneqq\frac{1}{3\cdot(k+1)} \qquad\text{and}\qquad x_1,x_2,\dots,x_k\notin (v_1,v_2).
    \end{equation}
    Fix a parameter $0<\xi<\delta$ (to be taken arbitrarily close to zero, at the end of our proof) and define the probability distribution $F_2$ with cdf 
    \begin{equation*}
        F_2(x) \coloneqq 
        \begin{cases}
            x, & x\in [0,v_1]\union [v_2,1],\\
            v_1 + \frac{\xi}{\delta-\xi}(x-v_1), & x\in[v_1,v_2-\xi],\\
            v_1+\xi+\frac{\delta-\xi}{\xi}(x-v_2+\xi), & x\in[v_2-\xi,v_2].
        \end{cases}
    \end{equation*}
    Note that $F_2$ is a continuous, strictly increasing cdf over $[0,1]$, which is also compatible with our query constraints~\eqref{eq:CCFPA-blackbox-lower-bound-pretend-uniform}. In particular, observe that $F_1(x)=F_2(x)$ for all $x\in[0,1]\setminus(v_1,v_2)$. A pictorial representation can be seen in~\cref{fig:black-box-lower-bound}.
   \begin{figure}[t]
    \centering
    \begin{tikzpicture}
        
        % --- variables ---
        \pgfmathsetmacro{\vone}{0.75}   % v_1 (in [2/3,1])
        \pgfmathsetmacro{\deltanum}{0.12}  % delta = v_2 - v_1
        \pgfmathsetmacro{\xinum}{0.03}     % xi in (0,delta)
        \pgfmathsetmacro{\vtwo}{\vone+\deltanum} % v_2
        \pgfmathsetmacro{\vk}{\vtwo-\xinum}      % kink point v_2 - xi
        \pgfmathsetmacro{\voneplusxi}{\vone+\xinum}
        \pgfmathsetmacro{\vmid}{(\vone+\vtwo)/2}
        \pgfmathsetmacro{\axisstart}{0.63}
        \pgfmathsetmacro{\xmeas}{\axisstart+0.015} % x-position for vertical \xi brace (inside plot)
        \pgfmathsetmacro{\ymeas}{\axisstart+0.015} % y-position for horizontal \xi brace (inside plot)
        \pgfmathsetmacro{\ydelta}{0.92} % y-position for the \delta length marker (near top, below label)
        \pgfmathsetmacro{\deltatick}{0.007} % half tick length (in axis units)
        \pgfmathsetmacro{\deltaoffset}{0.004} % inset from v_1 and v_2 for marker endpoints
        \pgfmathsetmacro{\vonedelta}{\vone+\deltaoffset}
        \pgfmathsetmacro{\vtwodelta}{\vtwo-\deltaoffset}
        
        % Optional: zoom window (set xmin=0,xmax=1 for "full view". Currently, we start from 2/3)
        \begin{axis}[
            width=0.70\linewidth,
            height=0.48\linewidth,
            xmin=\axisstart, xmax=1,
            ymin=\axisstart, ymax=1,
            axis lines=left,
            xlabel={$x$}, ylabel={$F(x)$},
            xtick={2/3,\vone,\vk,\vtwo,1},
            xticklabels={$2/3$,$v_1$,$v_2-\xi$,$v_2$,$1$},
            ytick={2/3,\vone,\voneplusxi,\vtwo,1},
            yticklabels={$2/3$,$v_1$,$v_1+\xi$,$v_2$,$1$},
            legend style={draw=black, fill=black!3, at={(0.02,0.97)}, anchor=north west},
            legend cell align={left}, % The command for legend alignment
            ]
            
            % Shade the “unqueried” interval (v1,v2)
            \addplot[draw=none, fill=black!12, forget plot]
            coordinates {(\vone,\axisstart) (\vtwo,\axisstart) (\vtwo,1) (\vone,1)} \closedcycle;

            % Highlight that the slice (v_1,v_2) has length \delta
            \draw[thick]
                (axis cs:\vonedelta,\ydelta) -- (axis cs:\vtwodelta,\ydelta)
                node[midway, yshift=7pt] {$\delta$};
            \draw[thick] (axis cs:\vonedelta,\ydelta-\deltatick) -- (axis cs:\vonedelta,\ydelta+\deltatick);
            \draw[thick] (axis cs:\vtwodelta,\ydelta-\deltatick) -- (axis cs:\vtwodelta,\ydelta+\deltatick);
            
            % F1(x)=x (thick outside (v1,v2), thin inside)
            \addplot[thick] coordinates {(\axisstart,\axisstart) (\vone,\vone)};
            \addlegendentry{$F_1$ and $F_2$}
            \addplot[thin, red] coordinates {(\vone,\vone) (\vtwo,\vtwo)};
            \addlegendentry{$F_1$}
            \addplot[thick, forget plot] coordinates {(\vtwo,\vtwo) (1,1)};
            
            % F2: coincides with F1 outside (v1,v2)
            \addplot[thick, dashed, forget plot] coordinates {(\axisstart,\axisstart) (\vone,\vone)};
            \addplot[thick, dashed, forget plot] coordinates {(\vtwo,\vtwo) (1,1)};
            
            % F2 inside (v1,v2): two linear pieces
            \addplot[thin, dashed, blue, forget plot] coordinates {(\vone,\vone) (\vk,\voneplusxi)};
            \addplot[thin, dashed, blue] coordinates {(\vk,\voneplusxi) (\vtwo,\vtwo)};
            % Legend entry for F2
            \addlegendentry{$F_2$}
            
            % Vertical guides at v1, v2-xi, v2, 1
            \addplot[thin, dotted] coordinates {(\vone,\axisstart) (\vone,\vone)};
            \addplot[thin, dotted] coordinates {(\vk,\axisstart) (\vk,\voneplusxi)};
            \addplot[thin, dotted] coordinates {(\vtwo,\axisstart) (\vtwo,\vtwo)};
            \addplot[thin, dotted] coordinates {(1,\axisstart) (1,1)};
            
            % Horizontal guides from y-axis to the marked points
            \addplot[thin, dotted] coordinates {(\axisstart,\vone) (\vone,\vone)};
            \addplot[thin, dotted] coordinates {(\axisstart,\voneplusxi) (\vk,\voneplusxi)};
            \addplot[thin, dotted] coordinates {(\axisstart,\vtwo) (\vtwo,\vtwo)};
            \addplot[thin, dotted, forget plot] coordinates {(\axisstart,1) (1,1)};
            
            % Point markers
            \addplot[only marks, mark=*] coordinates {(\vone,\vone) (\vk,\voneplusxi) (\vtwo,\vtwo)};

            % Emphasize that both gaps are of length \xi
            % Vertical gap: v_1 to v_1+\xi
            \draw[thick, decorate, decoration={brace, amplitude=4pt, mirror}]
                (axis cs:\xmeas,\vone) -- (axis cs:\xmeas,\voneplusxi)
                node[midway, xshift=10pt] {$\xi$};
            % Horizontal gap: v_2-\xi to v_2
            \draw[thick, decorate, decoration={brace, amplitude=4pt}]
                (axis cs:\vk,\ymeas) -- (axis cs:\vtwo,\ymeas)
                node[midway, yshift=10pt] {$\xi$};
            
            % Label for the hidden interval
            \node[anchor=north] at (axis cs:\vmid,0.99)
            {\small unqueried interval};
        \end{axis}
    \end{tikzpicture}
    
    \caption{Black-box lower-bound construction of~\cref{th:black-box-lower}: $F_1$ is uniform; $F_2$ matches $F_1$
    everywhere except on $(v_1,v_2)$, where it is ``flattened then steepened'' (kink at $v_2-\xi$) to remain a valid continuous, strictly increasing cdf.}
    \label{fig:black-box-lower-bound}
\end{figure}

    Let $\beta:V\to B$ be a continuous, no-overbidding (i.e.\
    $\beta(v)\leq v$ for all $v\in V$), and strictly increasing (since $F$ is
    strictly increasing) bidding strategy, returned by
    $\mathcal{A}$ as an $\varepsilon$-BNE, after performing the $k$ queries on
    $x_1,x_2,\dots,x_k$. Then, since $F_1$ and $F_2$ agree on all query
    points, it must be that, if $\mathcal{A}$'s output is correct, $\beta$ must be
    a valid $\varepsilon$-BNE with respect to \emph{both} priors $F_1$ and $F_2$. In particular, this implies that the following inequalities must hold 
    \begin{equation}
        \label{eq:CCFPA-blackbox-lower-bound-equilibrium-conditions-general-1}
    u_1\left(\beta(v_2-\xi);v_1\right) \leq u_1\left(\beta(v_1);v_1\right) + \varepsilon,
    \end{equation}
    \begin{equation}
        \label{eq:CCFPA-blackbox-lower-bound-equilibrium-conditions-general-3}
    u_1\left(v_1/2;v_1\right) \leq u_1(\beta(v_1);v_1) + \varepsilon,
    \end{equation}
    and
    \begin{equation}
        \label{eq:CCFPA-blackbox-lower-bound-equilibrium-conditions-general-2}
    u_2\left(\beta(v_1);v_2-\xi\right) \leq u_2\left(\beta(v_2-\xi);v_2-\xi\right) + \varepsilon,
    \end{equation}
    where, for $i=1,2$, and any $v\in V,b\in B$, $u_i(b;v)$ denotes the expected utility of a player who has true value $v$ and bids $b$, with respect to the other player's value being drawn from distribution $F_i$. 
    Note that, since $\beta$ is strictly increasing (and thus, also invertible), these utilities can be expressed as 
    \begin{equation}
        \label{eq:CCFPA-blackbox-lower-bound-equilibrium-utilities-beta-def}
    u_i(b;v)= (v-b)F_i(\beta^{-1}(b)).
    \end{equation}

    Using~\eqref{eq:CCFPA-blackbox-lower-bound-equilibrium-utilities-beta-def}, inequalities \eqref{eq:CCFPA-blackbox-lower-bound-equilibrium-conditions-general-1} and \eqref{eq:CCFPA-blackbox-lower-bound-equilibrium-conditions-general-2} can be written as
    $$ [v_1-\beta(v_2-\xi)]F_1(v_2-\xi) \leq [v_1-\beta(v_1)] F_1(v_1) + \varepsilon $$
    and
    $$[v_2-\xi-\beta(v_1)] F_2(v_1) \leq [v_2-\xi-\beta(v_2-\xi)] F_2(v_2-\xi) + \varepsilon,$$
    respectively.
    Substituting the cdf values $F_1(v_1)=F_2(v_1)=v_1$, $F_1(v_2-\xi)=v_2-\xi$, and $F_2(v_2-\xi)=v_1+\xi$, and solving with respect to $\beta(v_2-\xi)$, we get that 
    $$ \beta(v_2-\xi) \geq \frac{v_1}{v_2-\xi}\beta(v_1) + v_1\frac{\delta-\xi}{v_2-\xi} - \frac{\varepsilon}{v_2-\xi} $$
    and
    $$\beta(v_2-\xi) \leq \frac{1}{v_1+\xi}\left[v_1\beta(v_1)+(v_2-\xi)\xi+\varepsilon\right]
    < \beta(v_1)+(v_2-\xi)\frac{\xi}{v_1} + \frac{\varepsilon}{v_1},$$
    respectively. For the last (strict) inequality we simply used the fact that $0<v_1<v_1+\xi$.
    Combining the above two inequalities, to eliminate $\beta(v_2-\xi)$, we derive that 
    $$
    \frac{v_1}{v_2-\xi}\beta(v_1) + v_1\frac{\delta-\xi}{v_2-\xi} - \frac{\varepsilon}{v_2-\xi}
    <
    \beta(v_1)+(v_2-\xi)\frac{\xi}{v_1} + \frac{\varepsilon}{v_1},
    $$
    and solving with respect to $\beta(v_1)$, we finally get 
    \begin{align}
        \beta(v_1) &> v_1 - \frac{1}{\delta-\xi}\frac{1}{v_1}\left[(v_2-\xi)^2\xi + (v_1+v_2-\xi)\varepsilon\right]\notag\\
            &> v_1 - \frac{1}{\delta-\xi}\frac{1}{v_1}\left[\xi + (1+v_1)\varepsilon\right],\label{eq:CCFPA-blackbox-lower-bound-equilibrium-conditions-bound-beta_a}
    \end{align}
    where for the last inequality we used the fact that $0<v_2-\xi<1$.

    Next, we turn our attention to equilibrium condition~\eqref{eq:CCFPA-blackbox-lower-bound-equilibrium-conditions-general-3}, which via the utility expression~\eqref{eq:CCFPA-blackbox-lower-bound-equilibrium-utilities-beta-def} and the fact that $F_1$ is the uniform distribution, can be written as
    $$
    \left(v_1-\frac{v_1}{2}\right) \beta^{-1}\left(v_1/2\right) \leq \left(v_1-\beta(v_1)\right)v_1 +\varepsilon.
    $$
    Since $\beta$ is a no-overbidding strategy, it must be that $\beta^{-1}(v_1/2) \geq v_1/2$, and thus the above inequality gives us 
    $$
        \beta(v_1) \leq \frac{3}{4}v_1 + \frac{\varepsilon}{v_1}.
    $$
    Combining this with~\eqref{eq:CCFPA-blackbox-lower-bound-equilibrium-conditions-bound-beta_a} to eliminate $\beta(v_1)$, we derive that
    $$
    v_1 - \frac{1}{\delta-\xi}\frac{1}{v_1}\left[\xi + (1+v_1)\varepsilon\right] < \frac{3}{4}v_1 + \frac{\varepsilon}{v_1}.
    $$
    Rearranging the terms, we get that 
    \begin{align*}
    \frac{1}{\delta-\xi} &> \frac{1}{4}\frac{v_1^2}{\xi+\varepsilon (v_1+1)}- \frac{\varepsilon}{\xi+(v_1+1)\varepsilon}
        > \frac{1}{4}\frac{v_1^2}{\xi+\varepsilon (v_1+1)}- \frac{1}{(v_1+1)},
    \end{align*}
    where for the last inequality we used the fact that $\xi>0$, in the second fraction. Taking $\xi$ arbitrarily close to zero, we can deduce that it must hold that 
    \begin{align*}
    \frac{1}{\delta} 
    \geq \frac{1}{4}\frac{v_1^2}{\varepsilon (v_1+1)}- \frac{1}{(v_1+1)}
    \geq \frac{1}{4}\frac{1}{\varepsilon} \frac{(2/3)^2}{2/3+1} - \frac{1}{2/3+1}
    = \frac{1}{15} \frac{1}{\varepsilon}- \frac{3}{5},
    \end{align*}
    where for the second inequality we used the fact that quantities $\frac{x^2}{x+1}$ and $-\frac{1}{x+1}$ are increasing in $x\in[0,1]$, and the bound $v_1\geq 2/3$. Finally, recalling from \eqref{eq:CCFPA-blackbox-lower-bound-pigeonhole} that $\delta=\frac{1}{3}\frac{1}{k+1}$, we derive the desired lower bound on the number of queries $k$:
    $$
    3(k+1) \geq  \frac{1}{15} \frac{1}{\varepsilon}- \frac{3}{5} \quad \then\quad k \geq  \frac{1}{45} \frac{1}{\varepsilon}- \frac{6}{5}=\varTheta\left(\frac{1}{\varepsilon}\right).
    $$
\end{proof}

\subsubsection{White-Box Implicit Input}\label{sec:white-box}

Upon careful inspection of the proof of \cref{th:black-box-upper}, it can be observed that the number of multiplications performed on any computation path is bounded by $O(\log n)$ (see~\eqref{eq:computation-formula-approx-BNE-CCFPA}). Therefore, the proof effectively establishes a corresponding FPTAS in the white-box model (see~\cref{sec:represent}); this is formally captured by the following result:

\begin{corollary}[White-box FPTAS]
\label{th:white-box-upper}
Consider a CCFPA whose value distribution is given to us implicitly, via a well-behaved arithmetic circuit that computes its cdf. Then, in time polynomial in $\frac{1}{\varepsilon}$, the number of players $n$, and the size of the CCFPA's representation, we can construct a polynomial-time Turing machine that evaluates the $\varepsilon$-approximate equilibrium function $x\mapsto\beta(x)$ described in~\cref{th:black-box-upper}.
\end{corollary}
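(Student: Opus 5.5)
The plan is to turn the black-box algorithm behind \cref{th:black-box-upper} directly into a white-box procedure. Wherever that algorithm issues an oracle query, we evaluate the given well-behaved arithmetic circuit $C$ for the cdf instead. The algorithm has two phases, and we treat them separately.

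\textbf{Preprocessing.} Set $K=\lceil 1/\varepsilon\rceil$ and $\hat\varepsilon=1/K$. Evaluate $C$ at the rational points $a_j=j\hat\varepsilon$ for $j=1,\dots,K-1$. Because $C$ is well-behaved, any path to an output contains only $O(\log|C|)$ multiplication gates. Hence each value $F(a_j)$ is a rational whose bit-size is polynomial in $|C|$ and in the bit-size of $a_j$, which is $O(\log K)$. Each such evaluation therefore takes polynomial time.

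Next compute $F^{n-1}(a_j)$ by repeated squaring, which uses $O(\log n)$ multiplications. The resulting numbers have bit-size $\poly(n,|C|,\log(1/\varepsilon))$. Finally, precompute the prefix sums $S_k=\hat\varepsilon\sum_{j<k}F^{n-1}(a_j)$. Each prefix sum adds at most $K$ rationals of polynomial bit-size, so its bit-size grows only by an additive $O(K\cdot\mathrm{poly})$, which remains polynomial in $1/\varepsilon$. The whole table is thus built in time $\poly(1/\varepsilon,n,|C|)$.

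\textbf{The Turing machine $M$.} We hard-code the table $\{F^{n-1}(a_j),S_j\}_{j}$ into $M$, together with a copy of $C$. On input a rational $x$, $M$ does the following:
\begin{enumerate}
\item compute $k_x=\lfloor xK\rfloor$;
\item evaluate $F(x)$ via $C$;
\item if $F(x)=0$, output $\beta(x)=x$;
\item otherwise compute $F^{n-1}(x)$ by repeated squaring and output the value given by formula~\eqref{eq:computation-formula-approx-BNE-CCFPA}, namely
\[
x-\bigl[S_{k_x}+(x-k_x\hat\varepsilon)F^{n-1}(a_{k_x})\bigr]/F^{n-1}(x).
\]
\end{enumerate}
Step~4 consists of a table lookup, $O(\log n)$ multiplications, a constant number of further additions and multiplications, and one division. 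Every quantity involved therefore has bit-size polynomial in the sizes of the input $x$, of $C$, of $n$, and in $\log(1/\varepsilon)$. Consequently $M$ runs in polynomial time.

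\textbf{Correctness and main obstacle.} By construction $M$ computes exactly the function $\beta$ of \cref{th:black-box-upper}, so that theorem already supplies the $\varepsilon$-BNE guarantee and all the structural properties. The only real work is the bit-complexity bookkeeping, in particular bounding the magnitudes produced by powering and summation.

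The main obstacle is to check that composing $C$ with the $O(\log n)$ extra multiplications from the powering still keeps bit-sizes polynomial. The approach is to observe that this composite computation is itself a well-behaved arithmetic circuit, since it adds only logarithmically many multiplication gates to each path. The standard evaluation bound of \citet{fearnley2021complexity} then applies to it directly. The additions in the prefix sums and the single division are harmless, because they increase bit-size only additively.
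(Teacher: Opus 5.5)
Your proposal is correct and is essentially the paper's argument. You replace each cdf query with an evaluation of the well-behaved circuit, and you note that formula~\eqref{eq:computation-formula-approx-BNE-CCFPA} adds only $O(\log n)$ multiplications per computation path, so bit-sizes stay polynomial; your version simply spells out the bit-size bookkeeping. One small slip: your own prefix-sum bound makes the table entries $S_{k_x}$ used by $M$ polynomial in $1/\varepsilon$, not $\log(1/\varepsilon)$ as you then claim, though polynomial in $1/\varepsilon$ is all the statement requires.
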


\subsection{Explicit Representation}
\label{sec:CCFPA-explicit}

We now turn our attention to the explicit representation model (see~\cref{sec:represent}), where the value distribution $F$ is succinctly given to us in the input, in a piecewise polynomial form. Our main result of the section, namely~\cref{th:ccfpa-poly-time-explicit} establishes the polynomial-time computability of \emph{exact} symmetric equilibria in CCFPA. Notice that this result is stronger than \cref{th:white-box-upper}, when instantiated for piecewise polynomial distributions, as the latter is only an approximation scheme, and not an exact algorithm. An additional interesting implication of our particular construction in the proof~\cref{th:ccfpa-poly-time-explicit}, and the exactness of the result, is that players with rational item values end up submitting rational bids, at equilibrium.

\begin{theorem}
\label{th:ccfpa-poly-time-explicit}
    Given a CCFPA with $n$ bidders and a succinctly represented, piecewise polynomial of degree (at most) $d$, value distribution, its canonical equilibrium
    strategy~\eqref{eq:CCFPA-IID-BNE-beta} can be computed efficiently. That is,
    given a value $x\in [\underline{v},1]$, we can (exactly) compute $\beta^*(x)$ in
    time polynomial in $n$, $d$, and the size of the (binary) representation of $x$
    and the auction.\footnote{As a matter of fact, we show something stronger here:
    that we can efficiently compute a \emph{succinct representation} of function
$\beta^*$ as a rational function of degree $(n-1)d+1$ (that is, as the ratio of
   two polynomials with degrees at most $(n-1)d+1$).} 
\end{theorem}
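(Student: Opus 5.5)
The plan is to evaluate the closed form~\eqref{eq:CCFPA-IID-BNE-beta} symbolically, piece by piece. Write $\beta^*(x) = x - I(x)/F^{n-1}(x)$ with $I(x)\coloneqq\int_{\underline{v}}^x F^{n-1}(t)\,\mathrm{d}t$. The whole task then reduces to computing, in polynomial time and with polynomially bounded bit-size, an explicit polynomial representation of $I$ on each piece $[v_{j-1},v_j]$ of the input partition.

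First, locate $\underline{v}$. Every piece on which $F_j\equiv 0$ lies below it, and testing this only requires checking whether all coefficients $a_{j,\ell}$ vanish. Let $j_0$ be the first piece with $F_{j_0}\not\equiv 0$. Since $F$ is continuous and $F(v_{j_0-1})=0$, we get $\underline{v}=v_{j_0-1}$, which is rational. The only exception is the degenerate case where $F_{j_0}$ has a root inside the piece with $F_{j_0}$ vanishing on an initial segment; this is impossible for a nonzero polynomial, so $\underline{v}=v_{j_0-1}$ in all cases.

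Second, for each $j\ge j_0$, expand $P_j(t)\coloneqq F_j(t)^{n-1}$ as a polynomial of degree at most $(n-1)d$. Compute it by repeated multiplication, i.e., $n-2$ polynomial products. Its coefficients are sums of products of $n-1$ input rationals, so their bit-size is $O(n\cdot(\text{input size}+\log d))$, which is polynomial. Integrate termwise to obtain an antiderivative $Q_j$ of degree at most $(n-1)d+1$ with rational coefficients. Then define constants recursively by $C_{j_0}\coloneqq -Q_{j_0}(v_{j_0-1})$ and $C_{j+1}\coloneqq Q_j(v_j)+C_j-Q_{j+1}(v_j)$. With these, $I(x)=Q_j(x)+C_j$ on $[v_{j-1},v_j]$.

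The main obstacle is bounding the bit-size of the constants $C_j$. The recursion adds up $k$ values, each of which is a polynomial of degree $(n-1)d+1$ evaluated at a rational endpoint. Each such evaluation has bit-size polynomial in $n$, $d$, and the sizes of $v_j$ and the coefficients. A sum of $k$ rationals has bit-size at most the sum of the individual sizes, because the common denominator is bounded by the product of the denominators. There is no multiplicative blow-up along the recursion, only additive accumulation. So every $C_j$ has polynomial size, and the whole table $\{Q_j,C_j\}_j$ is computable in polynomial time. This yields the succinct representation
\[
\beta^*(x)=\frac{x\,P_j(x)-Q_j(x)-C_j}{P_j(x)},\qquad x\in[v_{j-1},v_j],
\]
a ratio of two polynomials of degree at most $(n-1)d+1$, as claimed in the footnote.

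Finally, given a query $x\in[\underline{v},1]$, find its piece $j$ by comparison with the endpoints and evaluate the numerator and denominator exactly with Horner's rule in rational arithmetic. The bit-size stays polynomial in $n$, $d$, and the sizes of $x$ and the input. If $x>\underline{v}$ then $F(x)>0$, so the division is well defined. At $x=\underline{v}$ we return $\underline{v}$, consistent with the extension $\beta^*(v)=v$ for $v\le\underline{v}$. As a by-product, rational $x$ yields a rational $\beta^*(x)$.
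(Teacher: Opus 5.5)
Your proposal is correct and follows essentially the same route as the paper. Both compute the coefficients of $F_j^{n-1}$ by repeated polynomial multiplication, integrate termwise, fix the additive constants by a recursion across pieces so that the antiderivative is continuous, and then read off $\beta^*$ on each piece as the ratio $(xP_j(x)-Q_j(x)-C_j)/P_j(x)$. The differences are minor refinements on your side. The paper integrates from $0$ instead of locating $\underline{v}$, which is equivalent since $F^{n-1}$ vanishes below $\underline{v}$. It bounds only coefficient magnitudes rather than denominators. It also never separately treats the $0/0$ case $x=\underline{v}$. You handle all three points explicitly.
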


\begin{proof}
We start by showing that, for each interval $[v_{j-1},v_j]$, $j\in[k]$, of our explicit
    representation (see page~\pageref{page:represent}), the corresponding component of the cdf of the
    value distribution, when raised to the power of $n-1$, i.e.\ $F^{n-1}_j$,
    can be expressed as a polynomial of degree (at most) $(n-1)d$ and its
    coefficients can be computed in polynomial time. 
    
    We do this iteratively, by computing the coefficients of the polynomial
    representation of $F^{\powidx}_j$, for all $\powidx=1,2,\dots,n-1$. Let us denote
    this representation by 
    \begin{equation}
        \label{eq:CCFPA-succinct-poly-represent-powers-cdf}
        F^{\powidx}_j(x)=\sum_{\ell=0}^{\powidx d} b_{j,\powidx,\ell}
        x^{\ell},\qquad\text{for}\;\; x\in[v_{j-1},v_j].
    \end{equation}
    \begin{itemize}
        \item For $\powidx=1$, we readily have our initial polynomial
        representation, i.e.\ $b_{j,1,\ell}=a_{j,\ell}$ for
        $\ell=0,1,\dots,d$.
        \item For $\powidx\geq 2$, it is 
        $$F_j^{\powidx}(x) =F_j^{\powidx-1}(x)\cdot F_j(x)= \left(\sum_{\ell=0}^{(\powidx-1)
        d} b_{j,\powidx-1,\ell} x^{\ell}\right) \cdot \left(\sum_{\ell=0}^d
        a_{j,\ell} x^\ell \right)$$ for $x\in [v_{j-1},v_j]$. Therefore, the
        coefficients of the polynomial representation of $F_j^{\powidx}$
        (see~\eqref{eq:CCFPA-succinct-poly-represent-powers-cdf}) can be
        computed, by performing $O(\powidx d^2)$ multiplications and additions, as
        $$
        b_{j,\powidx,\ell} = \sum_{\xi=\max\ssets{0,\ell-(\powidx-1)d}}^{\min\ssets{\ell,d}}b_{j,\powidx-1,\ell-\xi}a_{j,\xi}
        \qquad\qquad\text{for}\;\; \ell=0,1,\dots,\powidx d.
        $$
    \end{itemize}
    At the end of this iterative process, i.e.\ for $\powidx=n-1$, we have computed
    the polynomial representation of $F_j^{n-1}$; that is, parameters
    $\ssets{b_{j,n-1,\ell}}_{\ell=0}^{(n-1)d}$
    in~\eqref{eq:CCFPA-succinct-poly-represent-powers-cdf} for $\powidx=n-1$. This
    requires $O\left(\sum_{\powidx=1}^{n-1} \powidx d^2\right)=O(n^2 d^2)$ basic arithmetic
    operations. It is important to also note that the magnitude of these
    parameters is upper-bounded by 
    $$\max_{\ell=0,1,\dots,(n-1)d} \card{b_{j,n-1,\ell} }
    \leq (d+1) \cdot \max_{\ell} \card{b_{j,n-2,\ell}} \cdot \max_{\ell} \card{a_{j,\ell}}
    \leq \dots
    \leq (d+1)^n\left(\max_{\ell=0,1,\dots,d} \card{a_{j,\ell}}\right)^{n}$$ 
    and, therefore,
    their bit-size representation is polynomial (with respect to the binary
    representation of the auction input). Performing the above computation for
    all intervals in $F$'s representation, we can efficiently compute, with
    $O(kn^2d^2)$ operations, the polynomial representations for all $F_j^{n-1}$,
    $j\in[k]$. 
    
    Based on this, we will next show that the integral $\int_0^x F^{n-1}(t)
    \,\mathrm{d} t$, as a function of $x\in V=[0,1]$, can be succinctly expressed
    as a piecewise polynomial function of degree (at most) $(n-1)d+1$. More
    specifically, we will efficiently compute parameters
    $\ssets{c_{j,\ell}}_{j\in[k],\;\ell=0,1,\dots (n-1)d+1}$ such that
    \begin{equation}
        \label{eq:CCFPA-succinct-poly-represent-integral-numerator-equilibrium}
        \int_0^x F^{n-1}(t) \,\mathrm{d} t = \sum_{\ell=0}^{(n-1)d+1} c_{j,\ell} x^{\ell},\qquad\text{for all}\;\; x\in[v_{j-1},v_{j}],\;\; j\in[k].
    \end{equation}
    We do that iteratively, over the intervals of $F$'s piecewise
    representation:
    \begin{itemize}[leftmargin=*]
        \item For $j=1$, we have that for any $x\in[0,v_1]$ it is:
        \begin{align*}
        \int_0^x F^{n-1}(t) \,\mathrm{d} t &= \int_0^x F_1^{n-1}(t)  \,\mathrm{d} t =
        \int_0^x \sum_{\ell=0}^{(n-1) d} b_{1,n-1,\ell} t^{\ell}\,\mathrm{d} t \\&=
        \left[\sum_{\ell=0}^{(n-1) d} \frac{b_{1,n-1,\ell}}{\ell+1}
        t^{\ell+1}\right]_{t=0}^{t=x}= \sum_{\ell=1}^{(n-1) d+1}
        \frac{b_{1,n-1,\ell-1}}{\ell} x^{\ell},
        \end{align*}
        where for the second equality we
        are using the previously computed representation of $F_1^{n-1}$
        from~\eqref{eq:CCFPA-succinct-poly-represent-powers-cdf}. Therefore, we can
        compute the desired $c_{i,\ell}$ parameters by setting:
        $$
        c_{1,0}=0 \qquad\text{and}\qquad 
        c_{1,\ell}=
        \frac{b_{1,n-1,\ell-1}}{\ell}\quad \text{for}\;\; \ell=1,2,\dots, (n-1)d+1.
        $$
        \item For $j=2,3,\dots,k$ and $x\in[v_{j-1},v_j]$ we can express the integral as
        \begin{align*}
            \int_0^x F^{n-1}(t) \,\mathrm{d} t
            &=\int_{v_{j-1}}^{x} F^{n-1}_{j}(t)\,\mathrm{d} t + \int_{0}^{v_{j-1}} F^{n-1}(t)\,\mathrm{d} t\\
            &=\int_{v_{j-1}}^{x} \sum_{\ell=0}^{(n-1) d} b_{j,n-1,\ell}t^{\ell}\,\mathrm{d} t 
            + \sum_{\ell=0}^{(n-1)d+1} c_{j-1,\ell} v_{j-1}^{\ell}, \ \ \ \ \text{using~\eqref{eq:CCFPA-succinct-poly-represent-powers-cdf} and~\eqref{eq:CCFPA-succinct-poly-represent-integral-numerator-equilibrium}}\\
            &= \left[\sum_{\ell=0}^{(n-1) d} \frac{b_{j,n-1,\ell}}{\ell+1}
            t^{\ell+1}\right]_{t=v_{j-1}}^{t=x} + \sum_{\ell=0}^{(n-1)d+1} c_{j-1,\ell} v_{j-1}^{\ell}\\
            &= \sum_{\ell=1}^{(n-1) d+1} \frac{b_{j,n-1,\ell-1}}{\ell}x^{\ell}
            - \sum_{\ell=1}^{(n-1) d+1} \frac{b_{j,n-1,\ell-1}}{\ell}v_{j-1}^{\ell}
            + \sum_{\ell=0}^{(n-1)d+1} c_{j-1,\ell} v_{j-1}^{\ell}\\
            &= \sum_{\ell=1}^{(n-1) d+1} \frac{b_{j,n-1,\ell-1}}{\ell}x^{\ell}
            + \underbrace{c_{j-1,0}+\sum_{\ell=1}^{(n-1)d+1}\left( c_{j-1,\ell} - \frac{b_{j,n-1,\ell-1}}{\ell} \right) v_{j-1}^{\ell}}_{\text{constant}},
        \end{align*}
        so we need to set 
        \begin{align*}
        c_{j,0}&=c_{j-1,0}+\sum_{\ell=1}^{(n-1)d+1}\left( c_{j-1,\ell} - \frac{b_{j,n-1,\ell-1}}{\ell} \right) v_{j-1}^{\ell}
        \ \text{ and } \\
        c_{j,\ell}&= \frac{b_{j,n-1,\ell-1}}{\ell}\quad\text{for}\;\; \ell=1,2,\dots,(n-1)d+1.
        \end{align*}
    \end{itemize}
    Note that the iterative process above requires $O(kn^2d^2)$ basic arithmetic
    operations, in order to compute all parameters $\ssets{c_{j,\ell}}$.
    
    Finally, we can now argue about the succinct representation of the canonical
    equilibrium bidding function in~\eqref{eq:CCFPA-IID-BNE-beta}, as a
    rational function. For any interval of $F$'s representation $[v_{j-1},v_j]$,
    $j\in[k]$, and any value $x\in [v_{j-1},v_j]$ in this interval, we can use
    representations~\eqref{eq:CCFPA-succinct-poly-represent-powers-cdf}
    and~\eqref{eq:CCFPA-succinct-poly-represent-integral-numerator-equilibrium}
    to write \eqref{eq:CCFPA-IID-BNE-beta} as 
    \begin{align}
        \beta^*(x) 
        &=x-\frac{\int_{0}^x F^{n-1}(t) \,\mathrm{d} t}{F_j^{n-1}(x)}
        = x- \frac{\sum_{\ell=0}^{(n-1)d+1} c_{j,\ell} x^{\ell}}{\sum_{\ell=0}^{(n-1) d} b_{j,n-1,\ell}
        x^{\ell}} \notag\\
        &= \frac{\sum_{\ell=0}^{(n-1) d} b_{j,n-1,\ell}
        x^{\ell+1} -\sum_{\ell=0}^{(n-1)d+1} c_{j,\ell} x^{\ell}}{\sum_{\ell=0}^{(n-1) d} b_{j,n-1,\ell}
        x^{\ell}}
        = \frac{-c_{j,0}+\sum_{\ell=1}^{(n-1)d+1} (b_{j,n-1,\ell-1}-c_{j,\ell}) x^{\ell}}{\sum_{\ell=0}^{(n-1) d} b_{j,n-1,\ell}
        x^{\ell}}, \label{eq:ccfpa-poly-time-explicit-rational-expression}
    \end{align}
    being the ratio of two polynomials, the one on the numerator having degree
    (at most) $(n-1)d+1$ and the one on the denominator having degree $(n-1)d$.
\end{proof}

\section{Conclusion and Future Work}\label{sec:conclusion}

Our work provides the first efficient algorithms for computing equilibria in symmetric first-price auctions, with efficiency measured either by the number of queries to the distribution oracle (in the black-box model) or by running time (in the white-box model). The existence of these algorithms essentially settles the computational complexity of the problem for the case of i.i.d.\ private values, closing an important gap in the literature. This allows future work to focus on \emph{asymmetric} auctions, in particular the case of bidders with (non-identically distributed) independent private values (IPV), for which the computational complexity of equilibrium computation is still open.

Within the realm of symmetric auctions, the only (minor) gap left by our results in this paper is the case of inverse-exponential approximations in the implicit white-box model for the CCFPA, for which our algorithm is an FPTAS; in contrast, in the explicit-input model our corresponding algorithm computes an exact equilibrium in polynomial time. Additionally, our lower bound in \cref{sec:black-box} for the black-box model suggests that any white-box algorithm for $\varepsilon$-approximate equilibria that runs in time polynomial in $\log(1/\varepsilon)$ would need to ``open the box'' provided in the input. We suspect that the existence of such an algorithm would be effectively precluded by an associated PPAD-hardness result. Given the absence of PPAD-hardness results for the more general IPV case, however, obtaining a corresponding result for the symmetric auction setting seems quite challenging.

Finally, one could aim to generalize our algorithms to symmetric settings beyond auctions with i.i.d.\ private values. In particular, it would be meaningful to study the case of \emph{symmetric affiliated private values} of \citet{MW82}; for this case, the best known positive result is a PTAS due to \citet{fghk2025}.

\bibliographystyle{plainnat}
\bibliography{refs}

\newpage 
\appendix
\section*{APPENDIX}

\section[Proof of Lemma~\ref{lem:utility-deltafunc}]{Proof of~\cref{lem:utility-deltafunc}}
\label{appendix:lem:utility-deltafunc}

Let $X_1,X_2,\dots,X_n$ be i.i.d.\ random variables drawn from $F$, representing the values of the bidders. Due to symmetry, it is without loss of generality, enough to show that the probability that the first bidder wins the auction, when she submits $b_j$, is equal to $\deltafunc(s_{j-1},s_j)$. The first bidder gets the item if and only if: (i) she is in the highest-bidder set $I$ (where $I\subseteq [n]$) along with bidders $I\setminus\sset{1}$; and (ii) she wins the uniform tie-breaking among bidders $I$. Therefore, her probability of winning is given by (see~\eqref{eq:ex_post_utilities}):
\begin{align}
    &\sum_{\ssets{1}\subseteq I\subseteq [n]}\frac{1}{\card{I}} \prob{\bigland_{i\in I\setminus\ssets{1}}\beta(X_i)=b_j \bigland_{i\in [n]\setminus I}\beta(X_i)<b_j} \notag\\
    &=\sum_{\ssets{1}\subseteq I\subseteq [n]}\frac{1}{\card{I}} \prod_{i\in I\setminus\ssets{1}}\prob{\beta(X_i)=b_j}
    \prod_{i\in [n]\setminus I}\prob{\beta(X_i)<b_j}, &&\text{since $X_i$'s are independent}, \notag\\
     &= \sum_{i=1}^{n} \frac{1}{i} \binom{n-1}{i-1} \prob{\beta(X_1)=b_j}^{i-1} \prob{\beta(X_1)<b_j}^{n-i} , &&\text{since $X_i$'s are identical},\notag\\
     &= \frac{1}{n}\sum_{i=1}^{n} \binom{n}{i} [F(s_{j})-F(s_{j-1})]^{i-1} F(s_{j-1})^{n-i}, \label{eq:utility-deltafunc:helper1}
\end{align}
where for the last equality we used the continuity of the random variable $X_1$, the fact that bid $b_j$ is played exactly in the interval $(s_{j-1},s_j]$, and the combinatorial identity 
\[\frac{1}{i}\binom{n-1}{i-1}
= \frac{1}{i}\frac{(n-1)!}{(n-i)!(i-1)!} 
= \frac{1}{i}\frac{i}{n}\frac{n!}{(n-i)!i!} 
=\frac{1}{n}\binom{n}{i}.\]

Now we consider two cases:
\begin{itemize}
    \item If $F(s_{j})=F(s_{j-1})$, then the probability in~\eqref{eq:utility-deltafunc:helper1} is equal to
    \[\frac{1}{n}\binom{n}{1}\cdot 1\cdot F(s_{j-1})^{n-1} + \frac{1}{n}\sum_{i=2}^{n} \binom{n}{i} \cdot 0\cdot F(s_{j-1})^{n-i}=F(s_{j-1})^{n-1}.\]
    This is indeed equal to the desired expression:
    \[\Delta(s_{j-1},s_j)= \frac{1}{n}\sum_{i=0}^{n-1} F(s_{j-1})^{n-1-i}F(s_j)^i=\frac{1}{n}\sum_{i=0}^{n-1} F(s_{j-1})^{n-1}=F(s_{j-1})^{n-1}.\]
    
    \item If $F(s_{j})\neq F(s_{j-1})$, then we can write the probability in~\eqref{eq:utility-deltafunc:helper1} as:
\begin{align*}
     &\frac{1}{n[F(s_{j})-F(s_{j-1})]} \sum_{i=1}^{n} \binom{n}{i} [F(s_{j})-F(s_{j-1})]^{i} F(s_{j-1})^{n-i} 
    \\  =& \frac{1}{n[F(s_{j})-F(s_{j-1})]} \left[-F(s_{j-1})^n + \sum_{i=0}^{n} \binom{n}{i} \left[F(s_{j})-F(s_{j-1})\right]^{i} F(s_{j-1})^{n-i} \right]
    \\  =& \frac{1}{n[F(s_{j})-F(s_{j-1})]} \left[-F(s_{j-1})^n + \left[\left(F(s_{j})-F(s_{j-1})\right)+F(s_{j-1})\right]^n \right]
    \\  =&\frac{1}{n}\cdot \frac{F(s_j)^n-F(s_{j-1})^n}{F(s_{j})-F(s_{j-1})}\\ 
    =& \frac{1}{n} \sum_{i=0}^{n-1} F(s_{j-1})^{n-1-i}F(s_{j})^i\\
    =& \Delta(s_{j-1},s_j),
\end{align*}
where for the second equality we used the binomial expansion theorem, and for the second to last, the elementary identity $x^n-y^n=(x-y)\sum_{i=0}^{n-1}x^{i}y^{n-1-i}$ for all $x\neq y$ nonnegative reals.
\end{itemize}

\section{Lipschitz Continuity Properties}\label{app:simple-lemma}

In this section we present some standard results about Lipschitz-continuity and apply them to obtain bounds on the Lipschitz constants of some functions that appear in our main arguments. In the following, $\|\cdot\|_\infty$ denotes the standard max-norm operator; that is, for any $\vec{x}=(x_1,x_2,\dots,x_n)\in \R^n$ we have $\|\vec{x}\|\coloneq\max_{j\in[n]} x_j$.

\begin{definition}\label{def:lipschitz}
Let $D \subseteq \mathbb{R}^n$ and $L > 0$. A function $f: D \to \mathbb{R}^m$ is \emph{$L$-Lipschitz (continuous)} if $\|f(x) - f(y)\|_\infty \leq L \|x - y\|_\infty$ for all $x,y \in D$.
\end{definition}

\begin{lemma}\label{lem:lipschitz-composition}
Let $f_1: D_1 \to D_2$ be $L_1$-Lipschitz and $f_2: D_2 \to \mathbb{R}^m$ be $L_2$-Lipschitz. Then $f_2 \circ f_1: D_1 \to \mathbb{R}^m$ is $(L_1 L_2)$-Lipschitz.
\end{lemma}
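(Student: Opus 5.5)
The plan is to chain the two Lipschitz inequalities directly from \cref{def:lipschitz}. Fix arbitrary $x,y \in D_1$. Since $f_1$ maps $D_1$ into $D_2$, both $f_1(x)$ and $f_1(y)$ lie in $D_2$, the domain on which the Lipschitz property of $f_2$ is assumed. Applying that property to the pair $f_1(x), f_1(y)$ gives
\[
\|f_2(f_1(x)) - f_2(f_1(y))\|_\infty \leq L_2 \|f_1(x) - f_1(y)\|_\infty .
\]
Next I apply the Lipschitz property of $f_1$ to $x,y$, giving $\|f_1(x)-f_1(y)\|_\infty \leq L_1\|x-y\|_\infty$. Because $L_2 > 0$, I can multiply this inequality by $L_2$ and substitute into the right-hand side above, which yields
\[
\|(f_2\circ f_1)(x) - (f_2\circ f_1)(y)\|_\infty \leq L_1 L_2 \|x-y\|_\infty .
\]
As $x,y$ were arbitrary, this is exactly the claim that $f_2\circ f_1$ is $(L_1L_2)$-Lipschitz.

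There is no genuine obstacle here; the only point to check is notational. Since $D_2 \subseteq \mathbb{R}^{k}$ for some $k$, the norm $\|\cdot\|_\infty$ appearing in the conclusion about $f_1$ (on $\mathbb{R}^k$) must be the same as the norm in the hypothesis about $f_2$ (on $D_2$). Both are the max-norm on $\mathbb{R}^k$, and \cref{def:lipschitz} uses the max-norm uniformly on domain and codomain, so the two inequalities chain without any change of constants.
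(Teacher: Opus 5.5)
Your proposal is correct and matches the paper's proof: both apply the Lipschitz property of $f_2$ to the pair $f_1(x), f_1(y)$ and then that of $f_1$ to $x,y$, chaining to the bound $L_1 L_2 \|x-y\|_\infty$. Your remark that the max-norm is used consistently in \cref{def:lipschitz} is a harmless extra check that the paper leaves implicit.
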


\begin{proof}
For all $x,y \in D_1$ we have
$\|f_2(f_1(x)) - f_2(f_1(y))\|_\infty \leq L_2 \|f_1(x) - f_1(y)\|_\infty \leq L_2 L_1 \|x - y\|_\infty$.
\end{proof}

\begin{lemma}\label{lem:lipschitz-operations}
Let $f_1: D \to \mathbb{R}$ be $L_1$-Lipschitz and $f_2: D \to \mathbb{R}$ be $L_2$-Lipschitz. Then we have:
\begin{enumerate}
    \item $\alpha \cdot f_1$ is $|\alpha|L_1$-Lipschitz
    \item $f_1 + f_2$ is $(L_1+L_2)$-Lipschitz
    \item $\max\{f_1,f_2\}$ and $\min\{f_1,f_2\}$ are $\max\{L_1,L_2\}$-Lipschitz
    \item $f_1 f_2$ is $M(L_1+L_2)$-Lipschitz, when $\max_{x \in D} |f_1(x)|, \max_{x \in D} |f_2(x)| \leq M$
    \item $1/f_1$ is $M^2L_1$-Lipschitz, when $\max_{x \in D} |1/f_1(x)| \leq M$
\end{enumerate}
\end{lemma}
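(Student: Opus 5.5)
The plan is to verify each of the five items directly from \cref{def:lipschitz}, using the triangle inequality in $\mathbb{R}$ (all functions here are real-valued, so $\|\cdot\|_\infty$ reduces to absolute value on the codomain). Throughout, fix arbitrary $x,y\in D$ and write $\|x-y\|$ for $\|x-y\|_\infty$.

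For item~1, I will compute $|\alpha f_1(x)-\alpha f_1(y)| = |\alpha|\,|f_1(x)-f_1(y)| \leq |\alpha| L_1\|x-y\|$. For item~2, the triangle inequality gives $|f_1(x)+f_2(x)-f_1(y)-f_2(y)| \leq L_1\|x-y\|+L_2\|x-y\|$.

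For item~3, I will use the elementary fact $|\max\{a,b\}-\max\{c,d\}| \leq \max\{|a-c|,|b-d|\}$, and the same for $\min$. To prove it, assume without loss of generality that $\max\{a,b\}\geq \max\{c,d\}$ and that $\max\{a,b\}=a$. Then $0\leq a-\max\{c,d\}\leq a-c\leq |a-c|$. The $\min$ case follows from $\min\{a,b\}=-\max\{-a,-b\}$. Applying this with $a=f_1(x)$, $b=f_2(x)$, $c=f_1(y)$, $d=f_2(y)$ gives the bound $\max\{L_1,L_2\}\|x-y\|$.

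For item~4, I will add and subtract a mixed term:
\[ f_1(x)f_2(x)-f_1(y)f_2(y) = f_1(x)\bigl(f_2(x)-f_2(y)\bigr) + f_2(y)\bigl(f_1(x)-f_1(y)\bigr). \]
Bounding $|f_1(x)|$ and $|f_2(y)|$ by $M$ then yields $M(L_1+L_2)\|x-y\|$.

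For item~5, I will write
\[ \frac{1}{f_1(x)}-\frac{1}{f_1(y)} = \frac{f_1(y)-f_1(x)}{f_1(x)f_1(y)}, \]
so that its absolute value is at most $M\cdot M\cdot L_1\|x-y\|$. Note that $f_1$ never vanishes here, since $|1/f_1|\leq M$ is assumed finite.

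None of these steps should be an obstacle. The only mildly nontrivial point is the inequality for $\max$ and $\min$ in item~3, which the case analysis above handles.
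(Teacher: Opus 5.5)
Your proposal is correct and follows the paper's proof essentially step for step. Items 1--2 are immediate, item 3 uses the inequality $|\max\{a,b\}-\max\{c,d\}|\le\max\{|a-c|,|b-d|\}$ (which you also prove, whereas the paper only asserts it), item 4 adds and subtracts $f_1(x)f_2(y)$, and item 5 uses the common-denominator identity.
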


\begin{proof}
The first and second item are immediate. For the third item, observing that for any $x,y \in D$
$$|\max\{f_1(x),f_2(x)\} - \max\{f_1(y),f_2(y)\}| \leq \max\{|f_1(x)-f_1(y)|, |f_2(x) - f_2(y)|\}$$
and similarly for $\min$, yields the result.
For the fourth item, we have for any $x,y \in D$
\begin{equation*}
\begin{split}
|f_1(x)f_2(x) - f_1(y)f_2(y)| &\leq |f_1(x)f_2(x) - f_1(x)f_2(y)| + |f_1(x)f_2(y) - f_1(y)f_2(y)|\\
&\leq |f_1(x)| |f_2(x) - f_2(y)| + |f_2(y)| |f_1(x) - f_1(y)|\\
&\leq M(L_1+L_2)\|x-y\|_\infty.
\end{split}
\end{equation*}
Finally, for the fifth item, we have for any $x,y \in D$
\begin{equation*}
\begin{split}
|1/f_1(x) - 1/f_1(y)| = \frac{|f_1(y) - f_1(x)|}{|f_1(x)f_1(y)|} \leq M^2L_1 \|x - y\|_\infty. \qedhere
\end{split}
\end{equation*}
\end{proof}

\begin{lemma}\label{lem:lipschitz-deltafunc}
Let $n$ be a positive integer. Then we have
\begin{enumerate}
    \item The function $\phi: [0,1]^2 \to \mathbb{R}, (x,y) \mapsto (1/n) \cdot \sum_{j=0}^{n-1} x^j y^{n-1-j}$ is $n$-Lipschitz.
    \item Let $F$ be $L$-Lipschitz cdf on $[0,1]$. The function $\deltafunc: [0,1]^2 \to \mathbb{R}, (x,y) \mapsto \phi(F(x),F(y))$ is $nL$-Lipschitz.
\end{enumerate}
\end{lemma}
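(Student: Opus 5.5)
For the first item, I would work directly with the max-norm definition in \cref{def:lipschitz}. Fix $(x,y),(x',y')\in[0,1]^2$ and split the difference with a triangle inequality through the intermediate point $(x',y)$:
\[
|\phi(x,y)-\phi(x',y')| \leq |\phi(x,y)-\phi(x',y)| + |\phi(x',y)-\phi(x',y')|.
\]
This reduces the claim to bounding each partial derivative of $\phi$ on $[0,1]^2$, since $\phi$ is a polynomial and the mean value theorem applies in each coordinate separately.

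The partial derivative is
\[
\partial_x\phi(x,y)=\frac1n\sum_{j=1}^{n-1} j\,x^{j-1}y^{n-1-j}.
\]
On $[0,1]^2$ this is nonnegative and at most
\[
\frac1n\sum_{j=1}^{n-1} j=\frac{n-1}{2}.
\]
By symmetry, the same bound holds for $\partial_y\phi$.

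Each one-coordinate difference is therefore at most $\frac{n-1}{2}\|(x,y)-(x',y')\|_\infty$. Summing the two gives a total of at most $(n-1)\|\cdot\|_\infty\le n\|\cdot\|_\infty$, which proves the first item. The only care needed is to apply the mean value theorem on the segment, which stays inside the convex set $[0,1]^2$. One could also avoid derivatives by using $x^j-x'^j=(x-x')\sum_i x^i x'^{j-1-i}$, but the derivative bound is cleaner.

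For the second item, I would write $\deltafunc=\phi\circ G$ with $G(x,y)=(F(x),F(y))$. Because $F$ is $L$-Lipschitz and we use the max norm, $G$ is $L$-Lipschitz:
\[
\max\{|F(x)-F(x')|,\,|F(y)-F(y')|\}\le L\,\|(x,y)-(x',y')\|_\infty .
\]
Also, $G$ maps into $[0,1]^2$, because $F$ is a cdf. Composing via \cref{lem:lipschitz-composition} then gives the Lipschitz constant $nL$.

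I do not expect a real obstacle here. The only points to check are that the domain of $\phi$ is convex, so the mean value argument is valid, and that $G$ lands in the domain of $\phi$, which is what lets \cref{lem:lipschitz-composition} apply.
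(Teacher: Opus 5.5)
Your proof is correct. For the second item it is exactly the paper's argument: write $\deltafunc=\phi\circ G$ with $G(x,y)=(F(x),F(y))$, check that $G$ is $L$-Lipschitz in the max norm and lands in $[0,1]^2$, and apply \cref{lem:lipschitz-composition}.

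For the first item you take a different route. The paper's one-line proof does not differentiate; it obtains the bound from the closure rules of \cref{lem:lipschitz-operations}. Spelled out, the product rule with $M=1$ makes $x^j$ a $j$-Lipschitz function. Hence each monomial $x^jy^{n-1-j}$ is $(n-1)$-Lipschitz in the max norm, and averaging $n$ of them keeps the constant at $n-1\le n$. You instead bound each partial derivative by $(n-1)/2$ and apply the mean value theorem along coordinate segments, which gives the same constant $n-1$.

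Your version is more direct for a polynomial and makes sharpness visible: at $(1,1)$ one has $\partial_x\phi+\partial_y\phi=n-1$. The paper's version stays derivative-free and reuses a toolkit it needs anyway for non-smooth and composite maps, such as the $\max/\min$ expressions in \cref{lem:cdfpa-algo-lipschitz} and the formula for $g$ in \cref{lem:lipschitz-invert-algo}. That keeps the appendix's Lipschitz arguments uniform.
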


\begin{proof}
The first item is shown using the Lipschitz properties of simple operations from \cref{lem:lipschitz-operations}. The second item then follows using \cref{lem:lipschitz-composition}.
\end{proof}

\begin{lemma}\label{lem:lipschitz-inverse}
Let $f: [a,b] \to \mathbb{R}$ be continuous and $\gamma$-strongly increasing for some $\gamma > 0$ (see \cref{def:strongly-increasing}). Then $f$ is invertible on its image $[c,d] = f([a,b])$ and its inverse $f^{-1}: [c,d] \to [a,b]$ is $(1/\gamma)$-Lipschitz.
\end{lemma}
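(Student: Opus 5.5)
The plan is to argue directly from the definition of $\gamma$-strong monotonicity (\cref{def:strongly-increasing}), in three short steps.

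First, injectivity: for $x > y$ in $[a,b]$ we have $f(x) \geq f(y) + \gamma(x-y) > f(y)$. So $f$ is strictly increasing, hence injective, and therefore a bijection from $[a,b]$ onto its image.

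Second, the image is exactly the interval $[c,d]$ with $c = f(a)$ and $d = f(b)$. Since $f$ is continuous on the compact interval $[a,b]$, the intermediate value theorem shows that every value between $f(a)$ and $f(b)$ is attained. Strict monotonicity gives $f(a) \leq f(x) \leq f(b)$ for all $x \in [a,b]$, so no value outside $[c,d]$ is attained. Consequently $f^{-1}\colon [c,d] \to [a,b]$ is well defined, and it is itself increasing.

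Third, the Lipschitz bound. Take $p, q \in [c,d]$, say $p \geq q$, and set $x = f^{-1}(p)$, $y = f^{-1}(q)$. Because $f^{-1}$ is increasing, $x \geq y$. Applying the strong-monotonicity inequality to $x,y$ gives $p - q = f(x) - f(y) \geq \gamma(x-y)$. Rearranging,
$$|f^{-1}(p) - f^{-1}(q)| = x - y \leq \tfrac{1}{\gamma}(p-q) = \tfrac{1}{\gamma}|p-q|.$$
The case $p < q$ follows by symmetry. This establishes $(1/\gamma)$-Lipschitzness in the sense of \cref{def:lipschitz}, where the max-norm in dimension one is just the absolute value.

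The only point requiring any care is the second step, namely that the image is the full interval $[c,d]$ so that the inverse is defined on all of it. This follows from continuity via the intermediate value theorem; the rest is a direct rearrangement of the defining inequality.
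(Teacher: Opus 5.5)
Your proposal is correct and follows the same argument as the paper: apply the strong-monotonicity inequality to the preimages $f^{-1}(p) \geq f^{-1}(q)$ and rearrange. The only difference is that you prove invertibility and the fact that the image is the full interval $[f(a),f(b)]$ via the intermediate value theorem, where the paper simply cites these as standard results.
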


\begin{proof}
By standard results, it is known that the inverse exists, is continuous, and strictly increasing. Consider any $x, y \in [c,d]$ with $x \geq y$. Since $f^{-1}(x) \geq f^{-1}(y)$ and $f$ is $\gamma$-strongly increasing, we can write
$$f(f^{-1}(x)) \geq f(f^{-1}(y)) + \gamma (f^{-1}(x) - f^{-1}(y))$$
which implies that
$$\frac{1}{\gamma}(x-y) \geq f^{-1}(x) - f^{-1}(y)$$
and thus $f^{-1}$ is $(1/\gamma)$-Lipschitz.
\end{proof}

\begin{lemma}\label{lem:lipschitz-invert-algo}
Let $F$ be an $L$-Lipschitz and $\eps$-strongly increasing cdf on $[0,1]$. Let $\deltafunc$ be as defined in \cref{lem:lipschitz-deltafunc} and let $b_i,b_{i+1} \in [0,1]$ with $b_i < b_{i+1}$.
Define $D := \{(x,y) \in [b_{i+1},1] \times [0,1]| \exists z \in [b_i, x]: (x-b_i) \deltafunc(z,x)=y\}$.
Let $g: D \to [0,1]$ be such that $g(x,y)$ is the unique $z \in [b_i, x]$ satisfying $(x-b_i) \deltafunc(z,x)=y$. Then $g$ is $(L \eps^{-4n} (b_{i+1}-b_i)^{-2n})$-Lipschitz.
\end{lemma}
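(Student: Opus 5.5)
Write $h(z,x) := (x-b_i)\deltafunc(z,x)$ for $x \in [b_{i+1},1]$ and $z \in [b_i,x]$. The plan is to show that $h$ is strongly increasing in $z$ with an explicit modulus, and then to handle a change in $x$ and a change in $y$ separately.

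\textbf{Step 1: strong monotonicity in $z$.} Fix $x$. Uniqueness of $g(x,y)$ and the Lipschitz bound in $y$ both follow from \cref{lem:lipschitz-inverse} once we know $z \mapsto h(z,x)$ is $\gamma$-strongly increasing. For $z \ge z'$ the factor $x-b_i$ is at least $b_{i+1}-b_i$. In $\deltafunc(z,x)-\deltafunc(z',x)$, the term with $F(z)^{n-1}$ gives
\[
\frac{1}{n}\bigl(F(z)^{n-1}-F(z')^{n-1}\bigr) \;\ge\; \frac{1}{n}\bigl(F(z)-F(z')\bigr)F(z)^{n-2},
\]
and all other terms are nonnegative. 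Now $F(z)-F(z') \ge \eps(z-z')$, and $F(z) \ge F(b_i) \ge \eps b_i$.

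The factor $F(z)^{n-2}$ vanishes when $b_i=0$, so instead I use the term $F(z)^0F(x)^{n-1}$ paired against the mixed terms. Equivalently, write $\deltafunc(z,x) = \frac{1}{n}\sum_j F(z)^jF(x)^{n-1-j}$ and bound its derivative in $F(z)$ from below by $\frac{1}{n}F(x)^{n-2}$, the $j=1$ term. Then $F(x) \ge F(b_{i+1}) \ge \eps b_{i+1} \ge \eps(b_{i+1}-b_i)$.

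This gives
\[
\gamma \;\ge\; \frac{1}{n}\,\eps\,(b_{i+1}-b_i)\,\bigl(\eps(b_{i+1}-b_i)\bigr)^{n-2},
\]
which is comfortably at least $\eps^{2n}(b_{i+1}-b_i)^{n}$ up to the $1/n$ factor. The exponents in the claim, $\eps^{-4n}$ and $(b_{i+1}-b_i)^{-2n}$, are squares of this, so the bound is loose enough to absorb the $1/n$ (for instance via $1/n \ge \eps^{n}$ when $\eps \le 1/2$, or simply by the looseness). Hence $|g(x,y)-g(x,y')| \le \gamma^{-1}|y-y'|$.

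\textbf{Step 2: the $x$-dependence.} For $(x,y),(x',y) \in D$ with $x' \ge x$, set $z=g(x,y)$ and $z'=g(x',y)$, so that $h(z,x)=h(z',x')$.

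If $z \in [b_i,x']$, strong monotonicity in $z$ at the fixed point $x'$ gives
\[
\gamma\,|z-z'| \;\le\; |h(z,x')-h(z',x')| \;=\; |h(z,x')-h(z,x)| \;\le\; (1+nL)\,|x-x'|.
\]
Here the last bound uses that $x \mapsto (x-b_i)\deltafunc(z,x)$ is $(1+nL)$-Lipschitz, by \cref{lem:lipschitz-deltafunc,lem:lipschitz-operations}. The domain issue, that $z$ may lie outside $[b_i,x']$, does not arise because $x' \ge x \ge z$.

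Combining with Step 1 via the triangle inequality in $\|\cdot\|_\infty$ gives a Lipschitz constant of order $(1+nL)\gamma^{-1}$. I then need to check that this is at most $L\eps^{-4n}(b_{i+1}-b_i)^{-2n}$. Since $L \ge 1$ (a cdf on $[0,1]$ must be), $1+nL \le 2nL$, and the spare factor $\eps^{-2n}(b_{i+1}-b_i)^{-n}$ absorbs $2n^2$.

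\textbf{Main obstacle.} The delicate step is the constant bookkeeping in Step 1, namely getting a lower bound on $\partial_z h$ that does not degenerate when $b_i=0$ or when $F(z)$ is small. The fix is to use the factor $F(x)^{n-2}$ rather than $F(z)^{n-2}$, together with $F(x) \ge \eps(b_{i+1}-b_i)$ from strong increase of $F$. A secondary technical point is that $\deltafunc$ need not be differentiable, so I will phrase everything through finite differences and the factorization
\[
F(z)^j-F(z')^j \;=\; \bigl(F(z)-F(z')\bigr)\sum_{k} F(z)^kF(z')^{j-1-k},
\]
keeping only the nonnegative terms.
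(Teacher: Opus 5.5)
Your strategy works, but as written two steps do not go through.

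\textbf{The triangle-inequality step.} Combining Step 1 and Step 2 ``via the triangle inequality'' passes through an intermediate point $(x',y)$ or $(x,y')$, and $g$ need not be defined there. The slices $D_x=[(x-b_i)\deltafunc(b_i,x),\,(x-b_i)F(x)^{n-1}]$ move with $x$. For uniform $F$, $n=2$, $b_i=0$, $b_{i+1}=0.9$, one gets $g(x,y)=2y/x-x$ and $D_x=[x^2/2,x^2]$. The points $(0.9,0.405)$ and $(1,1)$ lie in $D$, but neither $(1,0.405)$ nor $(0.9,1)$ does. The repair is to run your Step 2 argument with $y\neq y'$ directly. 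For $x\le x'$, put $z=g(x,y)\in[b_i,x']$ and $z'=g(x',y')$; then
\[
\gamma|z-z'|\le|h(z,x')-h(z',x')|\le|h(z,x')-h(z,x)|+|y-y'|\le(2+nL)\,\|(x,y)-(x',y')\|_\infty .
\]

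\textbf{The constant absorption.} Absorbing the factor $2n^2$ ``simply by the looseness'' is not justified. It needs a smallness assumption such as $\eps\le 1/2$, and also $n\ge 2$, which you already use through the $j=1$ term. In fact the statement is false for $\eps$ near $1$. In the example above, $F$ is $1$-Lipschitz and $1$-strongly increasing, yet $\partial_y g=2/x\ge 2>0.9^{-4}\approx 1.52$, which is the claimed constant for $\eps=1$. The paper's ``putting everything together'' silently needs the same restriction; its chain of bounds gives roughly $2n^2L\eps^{-2n-1}(b_{i+1}-b_i)^{-2n}$. This is harmless in the paper's application, where the parameter is small. Under $\eps\le 1/2$, your bound $\gamma^{-1}(2+nL)\le 2n^2L\eps^{-(n-1)}(b_{i+1}-b_i)^{-(n-1)}$ is indeed below $L\eps^{-4n}(b_{i+1}-b_i)^{-2n}$, since $\eps^{-(3n+1)}\ge 2^{3n+1}\ge 2n^2$.

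\textbf{Comparison with the paper.} With these repairs, your proof takes a genuinely different route. The paper writes $g$ in closed form,
$g(x,y)=F^{-1}\bigl(F(x)\,\psi^{-1}\bigl(y/((x-b_i)F(x)^{n-1})\bigr)\bigr)$ with $\psi(t)=\frac1n\sum_{j=0}^{n-1}t^j$, and multiplies the Lipschitz constants of the pieces. That composition compares any two points of $D$ directly, so the shape of $D$ never enters. You instead need only one estimate, the strong monotonicity $\gamma\ge\frac1n\eps^{n-1}(b_{i+1}-b_i)^{n-1}$ of $z\mapsto h(z,x)$, plus an implicit-function argument. Your final version of Step 1, via the $j=1$ term, is correct. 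Your route is more elementary and avoids inverting $\psi$. It also gives a markedly sharper constant: order $n^2L\eps^{-(n-1)}(b_{i+1}-b_i)^{-(n-1)}$ instead of $n^2L\eps^{-2n-1}(b_{i+1}-b_i)^{-2n}$.
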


\begin{proof}
Since $x > b_i$ and $\deltafunc$ is strictly increasing in both arguments, the solution $z$ to the equation $(x-b_i) \deltafunc(z,x)=y$ is indeed unique. Using the definition of $\deltafunc$, we can rewrite the equation as
$$\frac{1}{n} \sum_{j=0}^{n-1} F(z)^j F(x)^{n-1-j} = \frac{y}{x-b_i}.$$
Since $x \geq b_{i+1} > 0$ and $F$ is $\eps$-strongly increasing, we have $F(x) \geq \eps b_{i+1} > 0$. As a result, by dividing by $F(x)^{n-1}$ the equation is equivalent to
$$\frac{1}{n} \sum_{j=0}^{n-1} \left(\frac{F(z)}{F(x)}\right)^j = \frac{y}{(x-b_i)F(x)^{n-1}}$$
which we rewrite as
$$\psi(F(z)/F(x)) = \frac{y}{(x-b_i)F(x)^{n-1}}$$
by defining $\psi: [0,1] \to [0,1], t \to (1/n)\sum_{j=0}^{n-1} t^j$. Finally, we can write
$$g(x,y) = z = F^{-1}\left(F(x) \cdot \psi^{-1}\left(\frac{y}{(x-b_i)F(x)^{n-1}}\right)\right).$$
By \cref{lem:lipschitz-inverse} we have that $F^{-1}$ is $(1/\eps)$-Lipschitz and $\psi^{-1}$ is $n$-Lipschitz. By \cref{lem:lipschitz-operations} the function $x \mapsto (x-b_i)F(x)^{n-1}$ is $nL$-Lipschitz. Note that we have $(x-b_i)F(x)^{n-1} \geq (b_{i+1}-b_i)(\eps b_{i+1})^{n-1} \geq \eps^n (b_{i+1}-b_i)^n$. As a result, the function $(x,y) \mapsto \frac{y}{(x-b_i)F(x)^{n-1}}$ is $(nL \eps^{-2n} (b_{i+1}-b_i)^{-2n})$-Lipschitz. Putting everything together, we obtain that $g$ is $(L \eps^{-4n} (b_{i+1}-b_i)^{-2n})$-Lipschitz.
\end{proof}
    
\end{document}